\newtheorem{theorem}{Theorem}[section]
\newtheorem{proposition}[theorem]{Proposition}
\newtheorem{lemma}[theorem]{Lemma}
\newtheorem{definition}[theorem]{Definition}
\newtheorem{assumption}[theorem]{Assumption}
\newtheorem{remark}[theorem]{Remark}
\newtheorem{example}[theorem]{Example}
\numberwithin{equation}{section}
\newenvironment{enumeratei}
  {\begin{enumerate} }
  {\end{enumerate}}
\newcommand{\half}{\frac{1}{2}}
\DeclareMathOperator{\Var}{Var}
\newcommand{\ep}{\varepsilon}
\renewcommand{\phi}{\varphi}
\newcommand{\de}{\delta}
\newcommand{\Q}{\mathbb Q}
\newcommand{\R}{\mathbb R}
\newcommand{\N}{\mathbb N}
\newcommand{\ind}{1\!\kern-1pt \mathrm{I}}
\newcommand{\rsto}{]\!\kern-1.8pt ]}
\newcommand{\lsto}{[\!\kern-1.7pt [}
\newcommand{\cF}{\mathcal{F}}
\newcommand{\ccL}{\mathscr{L}}
\newcommand{\cT}{\mathcal{T}}
\begin{document}


\title[Bond markets beyond short rate paradigm]{When roll-overs do not qualify as num\'eraire: bond markets beyond short rate paradigms}

\author{Irene Klein, Thorsten Schmidt, Josef Teichmann}



\subjclass{60H30, 91G30}

\keywords{large financial markets, bond markets, interest rate theory, forward measure, short rate, bubble, numeraire}

\thanks{The authors thank ETH Foundation for its support of this research project. The first and second author thank the Forschungsinstitut Mathematik at ETH Z\"urich for its generous hospitality.}

\begin{abstract}
We investigate default-free bond markets where the standard relationship between a possibly existing bank account process and the term structure of bond prices is broken, i.e.~the bank account process is not a valid num\'eraire. We argue that this feature is not the exception but rather the rule in bond markets when starting with, e.g., terminal bonds as num\'eraires. 

Our setting are general c\`adl\`ag  processes as bond prices, where we employ directly methods from large financial markets. Moreover, we do not restrict price process to be semimartingales, which allows for example to consider markets driven by fractional Brownian motion. In the core of the article we relate the appropriate no arbitrage assumptions (NAFL), i.e.~no asymptotic free lunch, to the existence of an equivalent local martingale measure with respect to the terminal bond as num\'eraire, and no arbitrage opportunities of the first kind (NAA1) to the existence of a supermartingale deflator, respectively. In all settings we obtain existence of a generalized bank account as a limit of convex combinations of roll-over bonds. 

Additionally  we provide an alternative definition of the concept of a num\'e\-raire, leading to a possibly interesting connection to bubbles. If we can construct a bank account process through roll-overs, we can relate the impossibility of taking the bank account as num\'eraire to liquidity effects. Here we enter endogenously the arena of multiple yield curves. 

The theory is illustrated by several examples.
\end{abstract}

\maketitle

\section{Introduction}\label{section1}

Most of the term structure models in the literature are based on the fundamental assumption that bond prices $ P(t,T) $ together with a num\'eraire bank account process $ B(t) $ form an arbitrage-free market. Formally speaking this means that we can find an equivalent local martingale measure for the collection of stochastic processes $ {{(B(t)}^{-1} P(t,T))}_{0 \leq t \leq T} $ representing bond prices discounted by the bank account's current value. If we assume additionally that those local martingales are indeed martingales, then we arrive at the famous relationship
\begin{equation} \label{bondprice}
P(t,T)=E_Q \big[ \frac{B_t}{B_{T}}| \mathcal{F}_t \big]
\end{equation}
for $ 0 \leq t \leq T $. If we assume alternatively the existence of forward rates, we arrive at the Heath-Jarrow-Morton (HJM) drift condition for the stochastic forward rate process encoding the previous local martingale property (see \cite{HJM}).

On the other hand, bank account processes are limits of roll-over constructions and therefore only approximately given in real markets. Even if there is a bank account process it is not innocent to take it as num\'eraire, since this is an assertion on available liquidity in arbitrary positive and negative multiples of the bank account process. Therefore we avoid assumptions on even the existence of a bank account process and -- in case of existence -- we do not assume that it qualifies as num\'eraire. This can be compared to the famous BGM market model approach, see \cite{bgm:97}, but with the goal in mind to fully characterize the absence of arbitrage. 

It might appear that this relaxation of assumptions leads to more general but possibly less interesting term structure models. However, we show that term structure models, where the bank account process does not qualify as num\'eraire, appear to be a more natural choice leading even to simple interpretations of liquidity effects in the sense of multiple yield curves. In order to avoid ambiguities we also outline in some detail our view on change of num\'eraire, bubbles, and liquidity.

Let us demonstrate our setting with an example lent from Eckhard Platen's benchmark approach. Let $ S^* $ denote a growth optimal portfolio on a finite time horizon $ T^* > 0 $. We claim this portfolio to be a num\'eraire portfolio of our market, see Section \ref{change_of_numeraire}, i.e.~fair prices $ {(\pi_t(X))}_{0 \leq t \leq T} $ of payoffs $ X $ at time $ T > 0 $ should satisfy  the martingale equation
\begin{equation}
\frac{\pi_t(X)}{S_t^*}=E_Q \big[ \frac{X}{S^*_T}| \mathcal{F}_t \big]
\end{equation}
under appropriate integrability assumptions. In particular we obtain bond price processes from the equation
\begin{equation} \label{Sstar}
P(t,T)=E_Q \big[ \frac{S_t^*}{S^*_T}| \mathcal{F}_t \big]
\end{equation}
for $ 0 \leq t \leq T \leq T^*$. If bond prices are bounded by $ 1 $, i.e.~in the case of non-negative interest rates, then the inverse growth optimal portfolio necessarily is a supermartingale, and vice versa. We recall that being a num\'eraire means in our view that arbitrary positive and negative multiples of $ S^\ast $ are available for trading. In this setting \emph{also} the terminal bond $ P(.,T^*) $ qualifies as num\'eraire, i.e.~we can find an equivalent martingale measure $ Q^* $ such that the processes $ \frac{P(.,T)}{P(.,T^*)} $ are $ Q^* $-martingales for $ 0 \leq t \leq T^\ast$. Indeed, define a density process via the properly normalized positive martingale $ {\big(\frac{P(t,T^*)}{S_t^*} \big)}_{0 \leq t \leq T^*} $, which is a martingale by construction, leading to the desired assertion. In other words: the benchmark approach leads to a bona fide bond market without frictions and free lunches.

Consider now the case where the growth optimal portfolio is exogenously given such that $ \frac{1}{S^*} $ is a strict local martingale. Furthermore we assume that the roll-over portfolio leads to the bank account process $ B $ equal to $ 1 $ even though the term structure will be non-trivial due to the strict local martingale property, see Subsection \ref{benchmark-example} for an example of these properties. In contrast the bank account process $ 1 $ does not qualify as num\'eraire, since $ \frac{1}{S^*} $, or equivalently $\frac{1}{P(.,T^\ast)}$ is a strict local martingale. The bond market with respect to the num\'eraire $ P(.,T^*) $ is free of (asymptotic) arbitrage in the classical sense, even if one adds the bank account process as additional traded asset, but we cannot perform a change of num\'eraire towards the num\'eraire $ B = 1 $. 

An interesting aspect of the previous example stems from the introduction of virtual term structures related to bank account processes $ B^n$. We think here of the (finite) roll-over processes, i.e., for a sequence of refining partitions $0=t^n_0<t^n_1<\dots<t_{k_n}^n=T^*$ of $[0,T^*]$ define, for each $n$,
\[
B^n_{t}=\begin{cases}\prod_{i=1}^j\frac{1}{P(t^n_{i-1},t^n_i)}&\text{for $t=t^n_j$, $j=1,\dots,k_n$,}\\
B^n_{t^n_j}&\text{for $t^n_{j-1}<t\leq t^n_j$, $j=1,\dots,k_n$.}\end{cases}
\]
Notice the difference to Definition \ref{rollover}. We have $ \lim_{n \to \infty} B^n=B^\infty = B = 1 $ as announced before, see Subsection \ref{benchmark-example} for a concrete example. These virtual term structures can be interpreted as high-liquidity term structures, which one would actually expect in the market if there was enough liquidity in the respective num\'eraire: this amounts to pricing with the corresponding supermartingale deflator
$$
\frac{1}{E_{Q^*}[\frac{B^n_T}{P(T,T^*)}]} \frac{B^n_T}{P(T,T^*)}  \frac{1}{B^n_T},
$$
which is derived from changing measure by the local martingale density $ \frac{B^n_T}{P(T,T^*)} $. When pricing $ 1 $ at time $ T $ with respect to this deflator we obtain an alternative term structure $ \tilde{P}^n(t,T) $
\begin{align*}
E_{Q^*} \big[  \frac{B^n_T}{P(T,T^*)} \frac{1}{E_{Q^*}[\frac{B^n_T}{P(T,T^*)}]} \frac{1}{B^n_T} \big] = & E_{Q^*} \big[  \frac{1}{P(T,T^*)} \frac{1}{E_{Q^*}[\frac{B^n_T}{P(T,T^*)}]} \big] \\
= & \frac{1}{B^n_0} \tilde{P}^n(0,T) \, ,
\end{align*}
which yields
\begin{equation}
\tilde{P}^n(0,T) = \frac{B^n_0 P(0,T)}{P(0,T^*) E_{Q^*}[\frac{B^n_T}{P(T,T^*)}]} > P(0,T) \, ,
\end{equation}
for each $ n $, i.e., the virtual term structures show lower interest rates (due to higher liquidity) than $ P(t,T) $. In case of $ B^\infty $ we apparently obtain the virtual term structure $ \tilde{P}^\infty(0,T) = 1 $, which corresponds to the highest liquidity virtual term structure, with overnight borrowing at no cost available.

It is the aim of this article to understand bond market dynamics under the weaker assumption that there are no arbitrages with respect to a terminal bond num\'eraire. This is a minimal assumption, which appears to us -- in light of the previous example -- more appropriate. Furthermore we would like to provide an explanation of multiple yield curves from our normative modeling approach: our suggestion in this direction is to look at change of num\'eraire.

From a mathematical point of view we believe that the technology of large financial markets is the right tool to understand the nature of arbitrage in the considered (infinite-dimensional) bond market, since we want to avoid artificially introduced  trading strategies. 
More precisely, we fix a terminal maturity $T^*$ and consider the bond market (for maturities $T\leq T^*$) with respect to the terminal bond as a num\'eraire. It is only allowed to trade in a finite number of assets, but we can take more and more of them and so approximate a portfolio with an infinite number of assets. In contrast to, e.g., \cite{B:DiMa:K:R:1997}, \cite{DeD:P:2005} or \cite{EkelandTaflin2005} we do not introduce infinite--dimensional trading strategies but only approximate by finite portfolios, an idea which stems from the theory of large financial markets. As a direct consequence, we avoid pitfalls for measure-valued strategies pointed out in \cite{Taflin:2011}. A second advantage is that we are able to consider markets driven by general c\`adl\`ag processes with only a weak regularity in maturity. This extends beyond semimartingale models as considered in the above mentioned articles and in \cite{DoeberleinSchweizer2001}.

The structure of the article is as follows: in Section \ref{change_of_numeraire} we introduce some basic ideas to the notion of num\'eraire which will be of importance for the whole article. In Section~\ref{section2} and \ref{section3} we introduce our model for a bond market with an appropriate interpretation as a large financial market and characterize notions of no arbitrage.

In Section~\ref{section4} we relate the appropriate no arbitrage assumption, which is no free lunch (NFL) on the bond market, to the global existence of an   equivalent local martingale measure. Indeed, we can prove the existence of an equivalent local martingale measure for all bonds with maturity $T\leq T^*$  in terms of the  bond $P(t,T^*)$  as num\'eraire. This is in contrast to common bond market models in the literature which often start with the assumption of existence of an equivalent (local) martingale measure, whereas we directly define a notion of no arbitrage and then  the existence of a local martingale measure follows.

In Section~\ref{section6} we prove by a Koml\'os type argument that under the assumption of NAFL there exists a candidate process for the bank account as a limit of convex combinations of roll-over bonds. This bank account is a supermartingale in terms of the terminal bond and therefore the bond market stays free of arbitrage when we add the bank account to the market.

In Section~\ref{section5} we will see that it is possible to further relax the assumptions on the bond market. If we only assume that the bond market does not allow asymptotic arbitrage opportunities of first kind (AA1) in the sense of large financial markets as in \cite{Kab:Kra:1994}, we cannot guarantee  the global existence of an equivalent local martingale measure. However, we can prove that there exists a strictly positive supermartingale deflator for each sequence of bonds with maturities that do not induce an AA1. If there exists a dense sequence of maturities in $[0,T^*]$, such that the induced large financial market is free of AA1, then there exists a supermartingale deflator for the bond market with all maturities in $[0,T^*]$. In this relaxed setting we can still show the existence of a generalized bank account as a limit of convex combinations of roll-over bonds, which is a supermartingale in terms of the terminal bond. This section is related to results of Kostas Kardaras, see, e.g., \cite{Kar}

In Section~\ref{section7} we illustrate the setup with four examples:  first, we consider the  example with optimal growth portfolio being a strict local martingale. Second,  a bond market model driven by fractional Brownian motion is studied, where bond prices are not semimartingales, but bond prices in terms of the num\'eraire are. Third, we consider an extension of the Heath-Jarrow-Morton approach where the bond prices as functions of the maturity are continuous but of unbounded variation such that a short-rate does not exist. Fourth, we give an example illustrating possible pitfalls when considering limits of roll-overs as num\'eraire: in a setting of not uniformly integrable bond prices,  the limit of roll-overs does not qualify as num\'eraire because it reaches zero with probability one.

\section{Change of num\'eraire, liquidity and bubbles}\label{change_of_numeraire}

In this section we outline some basic definitions and conclusions on num\'eraires, liquidity and bubbles, since from the next section changes of num\'eraire will be used frequently. The goal of this section is to add some possibly new definition to the large literature on these issues, however, no deep results are proved.

In seminal works on the absence of arbitrage in financial markets the num\'eraire (portfolio) plays a distinguished r\^ole, see \cite{Delb:Schach:1994}. Additionally in  markets with stochastic interest rates, or foreign exchange markets, change of num\'eraire is an important technique. It turns out that the question which portfolios do qualify as num\'eraire is surprisingly subtle and often only indirectly solved: usually one characterizes possible changes of num\'eraire mathematically but \emph{no economically reasonable} properties of num\'eraire portfolios are laid down (see for instance the seminal work \cite{Delb:Schach:1995}, where num\'eraire portfolios are characterized as maximal, admissible, strictly positive portfolios). We would like to close this small gap in the following paragraphs by providing a simple definition of num\'eraire portfolios, which can be also mirrored in the world of bubbles and liquidity, and which still makes sense in discrete time and under trading constraints.

Intuitively, a portfolio can be used as num\'eraire if it is strictly positive and allows for short-selling, i.e.~the investor is able to find a reasonable counterparty from whom the portfolio can be borrowed and she sells it then on the market. Short-selling might require arbitrarily high credit lines when the portfolio is to be returned, so the counterparty faces the risk of the investor's bankruptcy. Mathematically speaking this might lead to arbitrages in the virtual world after a change of num\'eraire (see \cite{Delb:Schach:1995}). Hence some conditions on the behavior of the short-sold portfolio from below must be imposed. On the other hand we do not want to bound the short-sold portfolio from below by some number, hence the usual admissibility condition is too strong. Instead of admissibility of the short-sold portfolio we require a uniform integrability condition with respect to some equivalent local martingale measure. In other words: we extend the notion of traded portfolios a bit beyond admissibility and call a strictly positive portfolio $N$ a num\'eraire if $N$ and $ -N $ are traded in this extended sense. Such approaches have been successfully investigated in \cite{Delb:Schach:1997} in the context of workable claims, or in \cite{Strasser:2003} via a re-formulation of the Ansel-Stricker framework \cite{A:S:1994}. We consider here the second approach which seems to us slightly more descriptive, but we could also simply formulate everything in the context of workable claims. Notice that the second definition also makes sense under trading constraints.

We give a precise definition which reflects this insight and which \emph{leads} to the well known change of num\'eraire formulas, see \cite{Delb:Schach:1995}. Furthermore we relate this intuitive and economically meaningful definition with the notion of bubbles: a positive portfolio is modeled in a bubble state if it does \emph{not} qualify as num\'eraire. Both concepts will play an important role when it comes to the notion of liquidity in bond markets.

Consider a filtered probability space $(\Omega,{\mathcal F},({\mathcal F}_t)_{t\in[0,T]}, P)$, where the filtration satisfies the usual conditions. The price process of traded assets $(\mathbf{X}_t)_{t \in [0,T]}=(X^0_t,\dots,X^d_t)_{t \in [0,T]}$ is a $d+1$-dimensional adapted process with c\`adl\`ag trajectories, where at least one process, say $X^0 $, is positive, i.e.~$X^0>0$. We introduce the process of discounted assets,
\[
\mathbf{S} := (1, \frac{X^1}{X^0},\dots,\frac{X^d}{X^0}) \,
\]
and assume without loss of generality that we are dealing from now on with a semimartingale $S$. Let $\mathbf{H}$ be a predictable $\mathbf{S}$-integrable process and denote by $(\mathbf{H}\cdot \mathbf{S}) $ the stochastic integral process of $\mathbf{H}$ with respect to $\mathbf{S}$, the \emph{(portfolio) wealth process}. The process $ \mathbf{H} $ is called an \emph{$a$-admissible trading strategy} if there is $a \geq 0$ such that $(\mathbf{H}\cdot \mathbf{S})_t\geq -a$ for all  $t \in [0,T]$. A strategy is called admissible if it is $a$-admissible for some $ a \geq 0 $. Define
\begin{align*}
\mathbf{K} =\{(\mathbf{H}\cdot \mathbf{S})_T:\text{$H$ admissible}\}
\text{ and }
\mathbf{C} &=\{ g \in L^\infty(P): g \le f \text{ for some }f \in K \}.
\end{align*}
Then $ \mathbf{K} $ and $\mathbf{C}$ form convex cones in $L^0(\Omega,\cF,P)$.

The condition \emph{no free lunch with vanishing risk} (NFLVR) is the right concept of no arbitrage, see \cite{Delb:Schach:1994} and \cite{Delb:Schach:1998}.
\begin{definition}
The market $ \mathbf{S} $ satisfies (NFLVR) if
$$ \bar {\mathbf{C}} \cap L_+^\infty(P)=\{0\},$$
where $\bar {\mathbf{C}}$ denotes the closure of $C$ with respect to the norm topology of $L^\infty(P)$.
\end{definition}
This means that a free lunch with vanishing risk exists, if there exists a free lunch $f\in L_+^\infty(P)$, which can be approximated by a sequence of portfolio wealth processes $(f_n)=((\mathbf{H_n} \cdot \mathbf{S}))\in K$  with $\frac{1}{n}$-admissible integrands $\mathbf{H}_n$, such that
\[
\lim_{n \to \infty} \parallel f - f_n \parallel_\infty = 0
\]
with respect to the norm topology of $ L^{\infty}(P) $. Define the set $\mathbf{M}_e$ of equivalent separating measures  as
\begin{align*}
 \mathbf{M}_e  &=\{Q\sim P|_{\cF_T}: E_Q[f]\le0\text{ for all $f\in \mathbf{K}$}\} \, .
\end{align*}
If $\mathbf{S}$ is (locally) bounded then $\mathbf{M}_e$ consists of all equivalent probability measures such that $\mathbf{S}$ is a (local) martingale.

Having a general change of num\'eraire theorem in mind it turns out that the concept of admissibility is too strong, since we want to talk about unbounded portfolio wealth processes and their negative to be admissible. Also we want to consider market extensions of the market $ \mathbf{S} $ by assets $\mathbf{Y} $. We assume from now on (NFLVR) for the market $ \mathbf{S} $. We call assets $ \mathbf{Y} $ a market extension of $ \mathbf{S} $ if $ \mathbf{S}':= (\mathbf{S},\mathbf{Y}) $ satisfies (NFLVR). We define in the sequel a larger class of trading strategies which we call $Q$-admissible. This is a generalization of admissibility as introduced above, i.e.~every admissible strategy is $Q$-admissible. The definition is in spirit of the results of Eva Strasser in \cite{Strasser:2003}.
\begin{definition}
Fix $ Q \in \mathbf{M}_e $. Consider an extension of the original market $ \mathbf{S}':=(\mathbf{S},\mathbf{Y}) $ by finitely many assets $\mathbf{Y}$ such that the process $ \mathbf{S}' $ is a $Q$-local martingale. Consider furthermore a predictable, $\mathbf{S}'$-integrable process $\boldsymbol{\phi}$ and the sequence of hitting times
$$ \sigma_n := \inf\{ t \ge 0: (\boldsymbol{\phi} \cdot \mathbf{S}')_t \le -n\}, \quad n \ge 1. $$
The trading strategy $\boldsymbol{\phi}$ is called $Q$-\emph{admissible} (such as the corresponding stochastic integral, the wealth process), if
\[
\liminf_{n \to \infty} E_Q[ (\boldsymbol{\phi} \cdot \mathbf{S}')_{\sigma_n}^- \ind_{\{\sigma_n < \infty}\}] = 0.
\]
\end{definition}
Define
\begin{align*}
\mathbf{L}^{Q} = \{ x + (\boldsymbol{\phi} \cdot \mathbf{S}') : x \in \R, \quad \boldsymbol{\phi} \text{ is $Q$-admissible} \}.
\end{align*}
and
\begin{align*}
\mathbf{L} = \cup_{Q \in \mathbf{M}_e} \mathbf{L}^Q \, .
\end{align*}

\begin{remark}
We extend the set of admissible portfolios but due to Theorem 3 in \cite{Strasser:2003} we do not introduce arbitrages, since every wealth process $ (\boldsymbol{\phi} \cdot \mathbf{S}') $ for a $Q$-admissible strategy is a supermartingale. We also do not introduce free lunches, since this notion \emph{only} depends on $a$-admissible strategies.
\end{remark}

\begin{remark}
We could use a less general but more appealing definition of $L^Q$ when we do not allow for a market extension $ \mathbf{S}' $. Then num\'eraires are traded portfolios in the original market $ \mathbf{S} $. In our definition all possible price processes for payoffs at time $ T $ are added. Notice that we should consider $L^Q$ as set of trading strategies of our market, but not their union, since the union might contain contradictory pricing structures for one payoff.
\end{remark}

Now we are in the position to make our intuitive definition of num\'eraire portfolios precise: a num\'eraire portfolio is a strictly positive portfolio which allows for short-selling, i.e.~the negative of its wealth process is still given by a $Q$-admissible trading strategy for some $ Q \in \mathbf{M}_e $, and hence is an element of $\mathbf{L}$.
\begin{definition}\label{def:numeraire}
A strictly positive process $N \in \mathbf{L} $ with $N_0=1$ is called a \emph{strong num\'eraire} (in discounted terms with respect to $S^0$), if
\begin{align}\label{eqdef:numeraire}
 N \in \mathbf{L}^Q\text{ and } - N \in \mathbf{L}^Q
\end{align}
for all $Q\in \mathbf{M}_e $. It is called \emph{weak num\'eraire} (in discounted terms with respect to $S^0$), if \eqref{eqdef:numeraire} holds for at least one $Q\in \mathbf{M}_e$, i.e.~$N$ and $-N$ are elements of $\, \mathbf{L} $.
\end{definition}

This definition has a clear economic meaning and easy consequences: as it should be, a weak num\'eraire qualifies as an accounting unit, where the classical change of num\'eraire technique is possible: there exist an equivalent measure $ Q \in \mathbf{M}_e $ under which $N=(1+(\boldsymbol{\phi} \cdot \mathbf{S}'))$ is a true $Q$-martingale.
\begin{theorem}
The following statements are equivalent:
\begin{enumeratei}
\item A strictly positive process $N$ with $ N_0 = 1 $ is a weak num\'eraire.
\item There exists $Q\in \mathbf{M}_e$ such that $ N $ is a $Q$-martingale.
\end{enumeratei}
\end{theorem}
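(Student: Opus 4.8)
The plan is to prove the two implications separately, the forward one resting on the supermartingale property of $Q$-admissible wealth processes and the reverse one on an explicit buy-and-hold construction combined with optional stopping.

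For (i) $\Rightarrow$ (ii): assume $N$ is a weak num\'eraire, so there is $Q\in\mathbf{M}_e$ with $N\in\mathbf{L}^Q$ and $-N\in\mathbf{L}^Q$. Because stochastic integrals vanish at time $0$ and $N_0=1$, these memberships force representations $N=1+(\boldsymbol{\phi}\cdot\mathbf{S}')$ and $-N=-1+(\boldsymbol{\psi}\cdot\mathbf{S}')$ with $\boldsymbol{\phi},\boldsymbol{\psi}$ being $Q$-admissible in suitable finite market extensions, so that $(\boldsymbol{\psi}\cdot\mathbf{S}')=1-N=-(N-1)$. By the Remark above --- which, via Theorem 3 of \cite{Strasser:2003}, asserts that every $Q$-admissible wealth process is a $Q$-supermartingale --- both $N-1$ and $-(N-1)$ are $Q$-supermartingales. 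A real-valued process $X$ for which $X$ and $-X$ are both $Q$-supermartingales is a $Q$-martingale; applied to $X=N-1$ this gives that $N$ is a $Q$-martingale.

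For (ii) $\Rightarrow$ (i): assume $Q\in\mathbf{M}_e$, $N>0$, $N_0=1$, and $N$ is a $Q$-martingale. Adjoin $N$ itself as a new traded asset; since $(\mathbf{S},N)$ is a $Q$-local martingale, this is a legitimate extension $\mathbf{S}'=(\mathbf{S},N)$ in the sense of the definition of $\mathbf{L}^Q$. The buy-and-hold strategy $\boldsymbol{\phi}$ holding one unit of the new asset has wealth process $(\boldsymbol{\phi}\cdot\mathbf{S}')=N-1$, and the hitting times $\sigma_n=\inf\{t:N_t-1\le -n\}$ equal $+\infty$ for every $n\ge1$ because $N>0$; hence $\boldsymbol{\phi}$ is trivially $Q$-admissible and $N=1+(\boldsymbol{\phi}\cdot\mathbf{S}')\in\mathbf{L}^Q$. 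For the short position $-\boldsymbol{\phi}$, with wealth $1-N$, the hitting times are $\sigma_n=\inf\{t:N_t\ge n+1\}$; since $[n+1,\infty)$ is closed and $N$ is c\`adl\`ag we get $N_{\sigma_n}\ge n+1>1$ on $\{\sigma_n<\infty\}$, so $(1-N)_{\sigma_n}^-=N_{\sigma_n}-1$ there. Optional stopping at the bounded time $\sigma_n\wedge T$ gives $E_Q[N_{\sigma_n}\ind_{\{\sigma_n\le T\}}]=1-E_Q[N_T\ind_{\{\sigma_n>T\}}]$, and since c\`adl\`ag paths are bounded on $[0,T]$ the events $\{\sigma_n\le T\}$ decrease to a null set, so dominated convergence (with dominating variable $N_T\in L^1(Q)$) yields $E_Q[N_T\ind_{\{\sigma_n>T\}}]\to E_Q[N_T]=1$ and $Q(\sigma_n\le T)\to0$. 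Hence $E_Q[(1-N)_{\sigma_n}^-\ind_{\{\sigma_n\le T\}}]=E_Q[(N_{\sigma_n}-1)\ind_{\{\sigma_n\le T\}}]\to0$, so $-\boldsymbol{\phi}$ is $Q$-admissible and $-N=-1+(1-N)\in\mathbf{L}^Q$. As both $N$ and $-N$ lie in $\mathbf{L}^Q$, $N$ is a weak num\'eraire.

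The forward direction is essentially bookkeeping once the cited supermartingale property is in hand; the only genuine step is the $Q$-admissibility of the short position in (ii) $\Rightarrow$ (i). The (mild) obstacle there is to convert the martingale property of $N$ into the required $\liminf$-condition, and this is exactly where the finiteness of the horizon --- c\`adl\`ag paths are bounded on the compact interval $[0,T]$ --- and the $L^1(Q)$-integrability of $N_T$ enter, via optional stopping and dominated convergence.
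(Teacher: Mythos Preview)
Your proof is correct and follows the same overall route as the paper's. The forward direction is identical (both $N$ and $-N$ are $Q$-supermartingales, hence $N$ is a $Q$-martingale). For the reverse direction the paper simply writes that ``due to its uniform integrability $-N\in\mathbf{L}^Q$'', whereas you spell out precisely how the martingale property of $N$ on the finite horizon $[0,T]$ yields the $Q$-admissibility of the short position via optional stopping and dominated convergence; this is exactly the content behind the paper's one-line appeal to uniform integrability, so the arguments coincide in substance.
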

\begin{proof}
Both directions are easy: if there exists $ Q \in \mathbf{M}_e $ such that $ N $ is a true $Q$-martingale, then by adding $ N $ to the market  $ \mathbf{S} $ we obtain an element of $\mathbf{L}^Q$, but due to its uniform integrability $-N \in \mathbf{L}^Q $: hence $N$ is a weak num\'eraire. If, on the other hand, $N \in \mathbf{L}^Q $ for some $ Q \in \mathbf{M}_e $, then $ N $ is a $Q$-supermartingale \emph{together} with $ - N $, which in turn means that $ N $ is a $Q$-martingale.
\end{proof}

Our definition of a num\'eraire has a clear relation to bubbles: a portfolio or an asset which does not qualify as num\'eraire is in a bubble state. Again this very intuitive definition leads to the meanwhile classical definition of a bubble, see \cite{CoxHobson:2005}. In other words: if an asset $S^i$ is a strict local martingale under any $Q\in \mathbf{M}_e$,  $-S^i$ is not $Q$-admissible and hence it does not qualify as weak num\'eraire.

\begin{definition}\label{def:bubble}
A strictly positive process $B \in \mathbf{L} $ is (modeled) in a \emph{strong bubble state} if $ - B \notin \mathbf{L} $, i.e.~for all $Q \in \mathbf{M}_e $ the wealth process $B$ is a strict local martingale. It is (modeled) in a \emph{weak bubble state} if $ - B \notin \mathbf{L}^Q $ for some $ Q \in \mathbf{M}_e $, i.e.~for this $ Q \in \mathbf{M}_e $ the wealth process $ B $ is a strict local martingale.
\end{definition}

\begin{theorem}
A strictly positive portfolio $ B \in \mathbf{L} $ with $ B_0 = 1 $ is in a strong bubble state if and only if $B$ does not qualify as weak num\'eraire portfolio. A strictly positive portfolio $ B \in \mathbf{L} $ with $ B_0 = 1 $ is in a weak bubble state if and only if $B$ does not qualify as strong num\'eraire.
\end{theorem}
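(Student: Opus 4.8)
The plan is to reduce both equivalences to the set-theoretic definitions of $\mathbf{L}^Q$, $\mathbf{L}$, (strong/weak) num\'eraire and (strong/weak) bubble, the only step with genuine content being a single elementary lemma which I would record first: \emph{for a strictly positive process $B$ with $B_0=1$ and any $Q\in\mathbf{M}_e$, if $-B\in\mathbf{L}^Q$ then also $B\in\mathbf{L}^Q$}. In words, a strictly positive normalised portfolio that can be short-sold can a fortiori be held long, because the long position is bounded from below. To prove the lemma I would write $-B=x+(\boldsymbol{\phi}\cdot\mathbf{S}')$ for a market extension $\mathbf{S}'=(\mathbf{S},\mathbf{Y})$ which is a $Q$-local martingale and a $Q$-admissible strategy $\boldsymbol{\phi}$; evaluating at $t=0$ forces $x=-B_0=-1$, so that $(\boldsymbol{\phi}\cdot\mathbf{S}')=1-B$. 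Running the opposite strategy $-\boldsymbol{\phi}$ in the \emph{same} extended market gives wealth $((-\boldsymbol{\phi})\cdot\mathbf{S}')=B-1>-1$ because $B>0$, hence $-\boldsymbol{\phi}$ is $1$-admissible in the classical sense; in particular the hitting times $\sigma_n$ of level $-n$ are infinite for every $n\geq1$, so the $\liminf$ in the definition of $Q$-admissibility is trivially $0$, and $B=1+((-\boldsymbol{\phi})\cdot\mathbf{S}')\in\mathbf{L}^Q$.

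Granting the lemma, the first equivalence follows by unwinding definitions. $B$ is in a strong bubble state iff $-B\notin\mathbf{L}$, i.e.~iff $-B\notin\mathbf{L}^Q$ for every $Q\in\mathbf{M}_e$; I claim this is precisely the negation of ``$B$ is a weak num\'eraire'', i.e.~of ``$B\in\mathbf{L}^Q$ and $-B\in\mathbf{L}^Q$ for some $Q$''. If $-B\notin\mathbf{L}^Q$ for all $Q$, no such $Q$ exists; conversely, if $-B\in\mathbf{L}^Q$ for some $Q$, the lemma yields $B\in\mathbf{L}^Q$ as well, so this $Q$ witnesses the weak num\'eraire property. (Consistently with Definition~\ref{def:bubble}: essentially the argument behind the earlier theorem shows, for $B$ as above, that $-B\in\mathbf{L}^Q$ is equivalent to $B$ being a true $Q$-martingale, so $-B\notin\mathbf{L}$ means $B$ is a strict $Q$-local martingale under every $Q\in\mathbf{M}_e$.)

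The second equivalence is entirely parallel. $B$ is in a weak bubble state iff $-B\notin\mathbf{L}^Q$ for \emph{some} $Q\in\mathbf{M}_e$, while $B$ fails to be a strong num\'eraire iff there is some $Q$ with $B\notin\mathbf{L}^Q$ or $-B\notin\mathbf{L}^Q$. One implication is immediate. For the other, pick such a $Q$; if one had $-B\in\mathbf{L}^Q$, the lemma would give $B\in\mathbf{L}^Q$, contradicting the choice of $Q$, so necessarily $-B\notin\mathbf{L}^Q$, which is the weak bubble property.

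The only point carrying real content --- and thus the step I expect to be the main obstacle --- is the lemma, since it is where the hypotheses $B>0$ and $B_0=1$ are actually used: one must check that the opposite strategy is run in the \emph{same} market extension and is genuinely $Q$-admissible there, so that ``$-B$ tradable'' really does upgrade to ``$B$ tradable''. Everything else is bookkeeping with the quantifier over $\mathbf{M}_e$.
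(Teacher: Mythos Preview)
Your proposal is correct. The paper gives no proof at all for this theorem, evidently regarding it as immediate from Definitions~\ref{def:numeraire} and~\ref{def:bubble}; your write-up makes explicit the one step with content, namely the lemma that $-B\in\mathbf{L}^Q$ forces $B\in\mathbf{L}^Q$ for a strictly positive $B$ with $B_0=1$ (via the bounded-below opposite strategy in the same market extension), which is exactly what is needed to close the reverse implications when unwinding the quantifiers over $\mathbf{M}_e$.
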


\begin{remark}
Notice that this notion of bubble, such as the notion of num\'eraire, depends crucially on the set of trading strategies, which in turn under constraints also leads to notions of bubbles in discrete time. Conditions classifying certain strict local martingales and the relation to bubbles may be found in \cite{MijatovicUrusov2012}.
\end{remark}

\begin{definition}\label{def:virtual}
Let $ V \in \mathbf{L} $ with $V_0=1$ be a weak bubble, i.e.~there is $Q\in\mathbf{M}_e$ such that $ V $ is a strict $Q$-local martingale. Consider $T\in[0,T^*]$ and define the probability measure $Q^{V_T}$ by
\[
E_{Q^{V_T}}[Y] := \frac{E_Q[YV_{T}]}{E_Q[V_{T}]}
\]
for bounded measurable $ Y $. We call the market discounted by $V$ a \emph{virtual market} and its prices \emph{virtual prices} with respect to $V_T$, i.e.~the price of a discounted (with respect to $S^0$) $\cF_T$-measurable claim $ Y $ in this virtual market with respect to the pricing measure $ Q^{V_T} $ is given by
\[
 E_{Q^{V_T}} \big[\frac{Y}{V_T} \big] = \frac{E_Q[Y]}{E_Q[V_{T}]} \, .
\]
We call the difference of prices
\begin{align}\label{eq:liquidity}
E_{Q^{V_T}} \big[\frac{Y}{V_T} \big] - E_Q[Y] = E_Q[Y] \big( 1 - \frac{1}{E_Q[V_{T}]} \big), \quad 0 \le T \le T^*
\end{align}
the \emph{term structure of (il-)liquidity premia} of the num\'eraire $V$ with respect to the pricing measure $ Q $.
\end{definition}

\begin{remark}
Imagine the following interesting economic situation: a financial institution does business by selling portfolios $V$, which this institution itself models in a bubble state. Selling portfolios means shortening them, so -- even though -- there is a belief in a bubble state in fact the institutions behavior indicates that shortening is possible. From a regularity point of view the institution should be forced to use $V$ as num\'eraire adding illiquidity premia to their internal pricing. As a consequence the institution would stop arbitraging by shortening a bubble since in their \emph{own} pricing system the arbitrage stops being visible. In other words: if some financial institutions live on real estate bubble phenomena they should be forced to use real estate indices for discounting in their risk models.
\end{remark}

\begin{example}
We revisit the bond market example from the introduction: $ S^\ast $ is considered as num\'eraire (in un-discounted terms) from the very beginning, whereas the strict local martingale $ \frac{1}{S^\ast} $ (corresponding to the new num\'raire $1$, discounted in $S^\ast$ terms) does not lead to a change of measure. Even more, the process $ \frac{1}{S^\ast} $ describes at least a weak bubble in this market model. If we still make the change of numeraire we obtain a flat (virtual) term structure looking as if generated by the bank account $ B^\infty = 1 $. However, this arbitrage cannot be realized since we cannot  shorten $ \frac{1}{S^\ast} $ in our market model. Notice that this terminology allows to talk about more than one price for traded products with payoff $1$ at time $ T $.
\end{example}

\begin{example}
The setting of \emph{relative arbitrages} as considered for example in \cite{FernholzKaratzas2005} with respect to a given portfolio $ N $ can also be analyzed from the point of view of num\'eraire portfolios. If $N$ is a weak num\'eraire, then relative arbitrages lead to arbitrages in the original market. If instead $ N $ does not qualify for a weak num\'eraire, then -- even though there are relative arbitrages -- there is still (NFLVR) in the original market possible. Again we can speak of several (virtual) prices for one payoff without being able to realize the arbitrages.
\end{example}

\section{Market models for bond markets}\label{section2}

We consider the following model for a bond market. Let $(\Omega,
\mathcal{F}, (\mathcal{F}_t)_{t\geq0}, P)$ be a filtered
probability space where the filtration satisfies the usual
conditions. For each $T\in[0,\infty)$ we denote by
$(P(t,T))_{0\leq t\leq T}$ the price process of a bond with
maturity $T$. For all $T$, $(P(t,T))_{0\leq t\leq T}$ is a
strictly positive c\`adl\`ag stochastic process adapted to
$(\mathcal{F}_t)_{0\leq t\leq T}$ with $P(T,T)=1$. We assume that the price process is almost surely right continuous in the second variable,
where the nullset does not depend on $t$, indeed we make
\begin{assumption}\label{nullsetass}
There is $N\in\mathcal{F}$ with $P(N)=0$
such that
$$N\supseteq\bigcup_{t\in[0,\infty)}\{\omega: T\to P(t,T)(\omega)\text{ is not right continuous}\}.$$
\end{assumption}

\noindent For a generic process $X$ and a stopping time $\tau$
we denote by $(X^\tau_t)=(X_{t \wedge\tau})$ the process stopped at $\tau$.

\begin{assumption}\label{unlocbd} We make the following assumption on uniform local boundedness for
$P(.,T)$ and local boundedness for $P(.,T)^{-1}$:
\begin{enumerate}
\item[1)] For any $T$ there
is $\ep>0$, an increasing sequence of stopping times
$\tau_n\to\infty$ and $\kappa_n\in [0,\infty)$ such that
$$P(t,U)^{\tau_n}\leq \kappa_n,$$ for all $U\in[T,T+\ep)$ and all
$t\leq T$.
\item[2)]
There exists a nonempty set $\cT\subset [0,\infty)$  such that
$\left(\frac1{P(t,T^*)}\right)_{0\leq t\leq T^*}$ is locally bounded for all $T^*\in \cT$ .
\end{enumerate}
\end{assumption}

\noindent The set $\cT$ denotes the maturities of those bonds which we shall consider as candidate num\'eraires.

\begin{remark}
Note that Assumption~\ref{unlocbd}  is fulfilled in
the reasonable special case, where $P(.,T)$ and $P(.,T)^{-1}$ are locally
bounded for any $T$ and, for any fixed $t$, the function $T\mapsto P(t,T)$ is
non--increasing. This, for example, holds, if there exists a non-negative short
rate.
\end{remark}

\noindent In the following assumption we consider a num\'eraire related to a \emph{terminal maturity} $ T^* \in \mathcal{T} $.

\begin{assumption}\label{numemm}
For all finite collections of maturities $T_1<T_2<\dots<T_n \leq T^*$ with $ T^* \in \mathcal{T} $ there
exists a measure $Q \sim P|_{\mathcal{F}_{T^*}}$ such that $\left(\frac{P(t,T_i)}{P(t,T^*)}\right)_{0\leq t\leq T_i}$ is a
local $Q $-martingale, $i=1,\dots,n$.
\end{assumption}

\noindent The measure $Q$ from Assumption~\ref{numemm} is called the $T^*$-forward-measure for the finite market consisting of bonds $ P(.,T_i) $, $i=1,\ldots,n$ and the num\'eraire $ P(.,T^*) $.

\begin{remark}\label{remark}
Note that we do {not} assume the existence of a short rate or even a bank account.
Moreover, we do {not} assume that $P(.,T)$ is a semimartingale.
However, Assumption~(\ref{numemm}) implies that, for a finite
collection of maturities, only bonds in terms of the num\'eraire $P(.,T_n)$
are semimartingales under the objective measure $P$, because they are local martingales
under the equivalent measure $Q$. Moreover, they are {locally bounded} because we assumed
that $P(.,T)$ is locally bounded, for any $T$, and $P(.,T^*)^{-1}$ is locally bounded for
$T^*\in \cT$.
\end{remark}

\noindent If there exists a short rate and an equivalent martingale measure for all discounted bond
processes, then Assumption~(\ref{numemm}) follows immediately.

\begin{lemma}\label{lem:2.4}
Assume that there exists the locally integrable short rate process ${(r_t)}_{t \geq 0}$ and let
$B_t:=e^{\int_0^tr_sds}$ for $ t \geq 0 $. Assume that there exists a measure $Q$ such
that $Q|_{\mathcal{F}_t} \sim P|_{\mathcal{F}_t}$ for $t\geq0$ and such
that $\left(B_t^{-1}\,P(t,T)\right)_{0\leq t\leq T}$ is a
$Q$--martingale, for all $T\in[0,\infty)$. Then, for any finite
collection of maturities $T_1<\dots <T_n$, the measure $Q^{T_n}$ with
$$Z^n:=\frac{dQ^{T_n}}{dQ|_{\mathcal{F}_{T_n}}}=\frac{(B_{T_n})^{-1}}{E_Q[B_{T_n}^{-1}]}$$
fulfills Assumption~(\ref{numemm}).
\end{lemma}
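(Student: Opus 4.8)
The plan is a direct change-of-num\'eraire computation, so I would not expect any serious obstacle beyond bookkeeping. First I would record that $B_0=1$ and that, since $\left(B_t^{-1}P(t,T_n)\right)_{0\le t\le T_n}$ is a $Q$-martingale by hypothesis, one has
\[
E_Q[B_{T_n}^{-1}]=E_Q[B_{T_n}^{-1}P(T_n,T_n)]=B_0^{-1}P(0,T_n)=P(0,T_n)\in(0,\infty),
\]
the strict positivity coming from the standing assumption that bond prices are strictly positive. Hence $Z^n$ is a well-defined, strictly positive random variable with $E_Q[Z^n]=1$, so $Q^{T_n}$ is a genuine probability measure with $Q^{T_n}\sim Q\sim P$ on $\mathcal F_{T_n}$, and therefore on every $\mathcal F_t$ with $t\le T_n$.

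Next I would identify the density process. Using the martingale property of $B^{-1}P(\cdot,T_n)$ again, for $0\le t\le T_n$,
\[
Z^n_t:=E_Q[Z^n\mid\mathcal F_t]=\frac{E_Q[B_{T_n}^{-1}\mid\mathcal F_t]}{P(0,T_n)}=\frac{B_t^{-1}P(t,T_n)}{P(0,T_n)},
\]
so $Z^n$ is a strictly positive $Q$-martingale with $Z^n_0=1$. Then I would invoke the Bayes rule for (local) martingales: for such a density process, an adapted c\`adl\`ag process $(M_t)_{0\le t\le T_i}$ is a local $Q^{T_n}$-martingale if and only if $(M_tZ^n_t)_{0\le t\le T_i}$ is a local $Q$-martingale. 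Applying this with $M_t=\frac{P(t,T_i)}{P(t,T_n)}$, the factor $P(t,T_n)$ cancels,
\[
M_tZ^n_t=\frac{P(t,T_i)}{P(t,T_n)}\cdot\frac{B_t^{-1}P(t,T_n)}{P(0,T_n)}=\frac{1}{P(0,T_n)}\,B_t^{-1}P(t,T_i),
\]
which is a (true, hence local) $Q$-martingale on $[0,T_i]$ by assumption. Consequently $\left(\frac{P(t,T_i)}{P(t,T_n)}\right)_{0\le t\le T_i}$ is a local (in fact true) $Q^{T_n}$-martingale for every $i=1,\dots,n$, and since $Q^{T_n}\sim P|_{\mathcal F_{T_n}}$ this is precisely Assumption~\ref{numemm} with $T^*=T_n$.

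The only points that require a little care, rather than genuine difficulty, are the correct identification of the density process $Z^n_t=B_t^{-1}P(t,T_n)/P(0,T_n)$ (which is exactly where the martingale — not merely local martingale — hypothesis on $B^{-1}P(\cdot,T_n)$ is used and cannot be dispensed with), and quoting a version of the Bayes rule that applies on the finite horizon $[0,T_i]$ rather than only to globally defined martingales. Everything else is routine.
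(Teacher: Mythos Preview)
Your proof is correct and follows essentially the same change-of-num\'eraire route as the paper: both identify the density process $E_Q[Z^n\mid\mathcal F_t]=\alpha\,B_t^{-1}P(t,T_n)$ via the martingale property of $B^{-1}P(\cdot,T_n)$, invoke the Bayes rule, and observe that the product $\frac{P(t,T_i)}{P(t,T_n)}\cdot E_Q[Z^n\mid\mathcal F_t]$ reduces to a constant multiple of the $Q$-martingale $B_t^{-1}P(t,T_i)$. Your version is slightly more explicit about well-definedness and equivalence of $Q^{T_n}$, but the argument is the same.
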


\begin{proof}
Let $Q^{T_n}$ be defined as above. We have to show that
$\left(\frac{P(t,T_i)}{P(t,T_n)}\right)_{0\leq t\leq T_i}$ is a
(local) $Q^{T_n}$-martingale, which is the case iff
$\left(\frac{P(t,T_i)}{P(t,T_n)}\cdot E_Q[Z^n|\mathcal{F}_t]\right)_{0\leq t\leq T_i}$ is a (local)
$Q$-martingale. Let $\alpha=\tfrac1{E_Q[B_{T_n}^{-1}]}$, so
$Z^n=\alpha\cdot B_{T_n}^{-1}$. As $(r_t)$ is the short rate and
$Q$ a martingale measure, we can write
$P(t,T_i)=E_Q[\frac{B_t}{B_{T_i}}| \mathcal{F}_t]$ for
$i=1,\dots,n$ and so

$$
\frac{P(t,T_i)}{P(t,T_n)}E_Q[Z^n|\mathcal{F}_t]=  \alpha
\frac{E_Q[B_{T_i}^{-1}|\mathcal{F}_t]}{E_Q[B_{T_n}^{-1}|\mathcal{F}_t]}
E_Q[B_{T_n}^{-1}|\mathcal{F}_t]=
\alpha
E_Q[B_{T_i}^{-1}|\mathcal{F}_t],
$$
which clearly is a $Q$-martingale.
\end{proof}

\section{Bond markets as large financial markets}\label{section3}

Assumption~(\ref{numemm}) means that for a finite selection of bonds considered with respect to a certain
num\'eraire (the bond with the largest maturity) there exists an
equivalent local martingale measure. Our aim will be the
following: for a fixed maturity $T^*\in\cT$, we aim at finding a measure
$Q^*$ such that all bonds with maturity $T\leq T^*$ are local
martingales under $Q^*$ in terms of  the num\'eraire $P(t,T^*)$. In Section~\ref{section4} we will
present a general theorem.

We  introduce a large financial market connected to the bond market.
We start with a short overview of the facts on large financial markets that we will need.
We choose a finite time horizon $T>0$ as this will be sufficient for our purpose.
Let $(\mathbf{S}^n_t)_{t\in[0,T]}$, $n=1,2,\dots$, be a sequence of semimartingales, where $\mathbf{S} ^n$ takes values in $\R^{d(n)}$, based on a filtered probability space $(\Omega,{\mathcal F},({\mathcal F}_t)_{t\in[0,T]}, P)$  where the filtration satisfies the usual assumptions.
For each $n\geq 1$ we define a classical market model (referred to as {\it finite market $n$}) given by the $\R^{d(n)}$--valued semimartingale $\mathbf{S}^n$ (which describes the price processes of $d(n)$ tradable assets). Such classical markets have been treated in  Section \ref{change_of_numeraire}. We assume that the assets are already discounted with respect to a num\'eraire, so we have that one of the $d(n)$ assets equals $1$.  For our purposes it is sufficient to assume that there is a sequence $S^i$, $i=0,1,\dots$ of semimartingales, such that $S^0_t\equiv 1$ and such that  $(\mathbf{S}^n_t)=(S^0_t,S^1_t,\dots,S^n_t)$. In this case $d(n)=n+1$.

Let $\mathbf{H}$ be a predictable $\mathbf{S}^n$-integrable process and, as previously, $(\mathbf{H}\cdot \mathbf{S}^n)_t$ the stochastic integral of $\mathbf{H}$ with respect to $\mathbf{S}$. The process $\mathbf{H}$ is an admissible trading strategy if $\mathbf{H}_0=0$ and there is $a>0$ such that $(\mathbf{H}\cdot \mathbf{S}^n)_t\ge -a$, $0\leq t\leq T$.
Define
\begin{equation}
\mathbf{K}^n =\{(\mathbf{H}\cdot \mathbf{S}^n)_T:\text{$H$ admissible}\}\text{ and }
\mathbf{C}^n =(\mathbf{K}^n-L^0_+)\cap L^{\infty}.\label{K}
\end{equation}
$\mathbf{K}^n$ can be interpreted as the cone of all replicable
claims in the finite market $n$, and $\mathbf{C}^n$ is the cone of all claims in $L^{\infty}$
that can be superreplicated in this market. Define the set $\mathbf{M}_e^n$ of
equivalent separating measures for the finite market $n$ as
\begin{align}\label{Me}
 \mathbf{M}_e^n &=\{Q\sim P: E_Q[f]\le0\text{ for all $f\in \mathbf{C}^n$}\} \\
 &=\{Q\sim P: E_Q[f]\le0\text{ for all $f\in \mathbf{K}^n$}\}.\nonumber
 \end{align}
 If $\mathbf{S}^n$ is (locally) bounded then $\mathbf{M}_e^n$ consists of all
equivalent probability measures such that $\mathbf{S}^n$ is a (local)
martingale.

A {\it large financial market} is the sequence of  the finite market models $n$, i.e.\ the sequence of the market models induced by the $d(n)$-dimensional semimartingales  ${\bf S}^n$. As a consequence, we cannot trade
with an actually infinite number of securities (so that we avoid artificially introduced infinite-dimensional trading strategies), but we can trade in more and more assets and in this way approximate something infinite-dimensional.

We impose the following assumption, which is standard in the theory of large financial markets:
\begin{equation}\label{emm}
\mathbf{M}^n_e\ne\emptyset,\quad\quad\text{ for all $n\in\N$}.
\end{equation}
This implies that any no arbitrage condition (such as {\it no
arbitrage}, {\it no free lunch with vanishing risk}, {\it no free
lunch}) holds for each finite market  $n$.

However, there is still the possibility of various approximations
of an arbitrage profit by trading on the sequence of market models. We will need the notions no asymptotic free lunch and no asymptotic free lunch with bounded risk
and later on no asymptotic arbitrage of first kind, see Section~\ref{section5}.

{\it No asymptotic free lunch} (NAFL) is
the large financial markets analogue of the classical no free
lunch condition (NFL) of Kreps \cite{Kreps}.  We will first recall the classical NFL condition here for a finite market $n$. Let $\mathbf{C}^n$ be defined as in (\ref{K}).

\begin{definition}
The condition NFL holds on the finite market $n$ if
\begin{equation}\label{NFL}
\overline{\mathbf{C}^n}^*\cap L^{\infty}_+ =\{0\} ,
\end{equation}
where $\overline{\mathbf{C}^n}^*$ denotes the weak-star-closure of  $\mathbf{
C}^n$.
\end{definition}

This means by superreplicating claims in an admissible way with a finite number of assets we cannot approximate in a weak-star sense a strictly positive gain.

Now NAFL can be defined in analogous way as the condition NFL but for the whole sequence of sets $\mathbf{C}^n$:

\begin{definition}\label{N(A)FL} A given large financial market satisfies NAFL if
$$
\overline{\bigcup_{n=1}^{\infty}\mathbf{C}^n}^*\cap L^{\infty}_+ =\{0\}.
$$
\end{definition}

If NAFL holds then it is not possible to approximate a strictly positive profit in a weak-star sense by trading in any finite number of the given assets (although we can use more and more of them). Originally the notion NAFL was introduced in \cite{Klein:2000}, see also \cite{Klein:2007}.

\begin{remark}
Note that in the literature the term large financial market is used for a more general concept where each market $n$ is based on a different filtered probability space. So, in our setting, we will not have to deal with technicalities which are common in large financial markets.
\end{remark}
Let us now introduce a large financial markets' structure for the bond market introduced in Section \ref{section2}.

\begin{definition}\label{LFM}
Let $T^*\in\cT$ where $\cT$ is the set from Assumption \ref{unlocbd}.
Fix a sequence $(T_i)_{i\in\N}$ in $[0,T^*]$.
Define the $n+1$-dimensional stochastic process  $(\mathbf S^n)=(S^0,\dots,
S^n)$ on $[0,T^*]$ as follows:
\begin{align}S^i_t=
\begin{cases}
\frac{P(t,T_i)}{P(t,T^*)} &\text{for $0\leq t\leq T_i$}\\
\frac{1}{P(T_i,T^*)}      &\text{for $T_i<t\leq T^*$}
\end{cases}, \label{defSi}
\end{align}
for $i=1,\dots,n$ and $S^0_t=\frac{P(t,T^*)}{P(t,T^*)}\equiv 1$.

The large financial market consists of the  sequence of
classical market  models given by the $(n+1)$-dimensional stochastic processes
$(\mathbf S^n)_{t\in[0,T^*]}$ based on
the filtered probability space
$\left(\Omega,\mathcal{F},(\mathcal{F}_t)_{t\in[0,T^*]},P|_{\mathcal{F}_{T^*}}\right)$.
\end{definition}


\begin{definition}\label{NAFL}
The bond market ${(P(t,T))}_{0 \leq t \leq T}$ for $ 0 \leq T \leq T^*$ satisfies NAFL if there exists a dense sequence
$(T_i)_{i\in\N}$ in $[0,T^*]$, such that  the large financial market of
Definition~\ref{LFM} satisfies the condition NAFL.
\end{definition}

Since all involved semimartingales $\mathbf{S}^n$ are locally bounded due to Assumption \ref{unlocbd}, it is sufficient to deal with
equivalent local martingale measures. Hence, the set $\mathbf M^n_e$ from \eqref{Me} is given as follows:
$$
\mathbf M^n_e=\{Q^n\sim P|_{\mathcal{F}_T}: \mathbf S^n \text{
local $Q^n$-martingale}\}.
$$

By Assumption~\ref{numemm} we have that $\mathbf M^n_e\neq\emptyset$ for all $n\in\mathbb N$, so the standard
assumption (\ref{emm}) for large financial markets holds. Note that this also implies that each $\mathbf{S}^n$ is a semimartingale, so this is not a problem in
Definition~\ref{LFM}.

\begin{remark}\label{weaker_ass}
Obviously Assumption~\ref{numemm} can be weakened if the bond market satisfies condition NAFL.
Indeed it is sufficient to assume that all the processes from Definition~\ref{NAFL}
$(S^n_t)_{0\leq t\leq T^*}$, $n\in\N$, are semimartingales. Then the stochastic
integrals and therefore the sets $\mathbf{C}^n$ make sense.
In this case, the existence of a local martingale measure  for a finite number
of assets follows by NAFL
 as $\overline{\bigcup_{n=1}^{\infty}\mathbf{C}^n}^*\cap L^{\infty}_+ =\{0\}$
implies that
$\overline{\mathbf{C}^n}^*\cap L^{\infty}_+ =\{0\}$, for all $n$.
Hence NFL holds for all $(S^0,\dots,S^n)$, all $n$ and so Assumption~\ref{numemm} follows.
\end{remark}

For continuous price processes the NAFL-condition  can be replaced by the less technical but intuitively
reasonable condition {\it no asymptotic free lunch with bounded risk} NAFLBR which we state here.

\begin{definition}
 On the large financial market there
is an asymptotic free lunch with bounded risk AFLBR if there are
$\alpha>0$ and $f^{k}\in\mathbf K^{n_k}$ such that
\begin{enumerate}
\item[(i)] $f^{k}$ comes from a $1$-admissible integrand,
\item[(ii)] $P^{n_k}(f^{k}\ge \alpha)\ge\alpha$ for all $k\in\N$ and,
\item[(iii)] $\lim_{k\to\infty} P^{n_k}(f^{k}<-\ep)=0$
for all $\ep>0$.
\end{enumerate}
The condition NAFLBR holds if there does not exist an AFLBR.
\end{definition}

If there is an asymptotic free lunch with bounded risk, then, with
$0$ initial investment, it is possible to approximate a positive
profit by trading on a subsequence of market models. The losses
stay bounded below by $-1$ and even tend to $0$ in probability.

\section{Global existence of an equivalent local martingale measure}\label{section4}

The large financial market induced by the bond market provides an adequate framework to analyze  existence of an
equivalent local martingale measure. For each $T^*\in\cT$ with the set $\cT$ from Assumption~\ref{unlocbd},
 we will find a measure $Q^*$ such that all bond prices  with maturity $T\leq T^*$ discounted by the num\'eraire $P(.,T^*)$, are local
martingales under $Q^*$.

In fact, we immediately obtain a measure $Q^*\sim P|_{\mathcal{F}_{T^*}}$ such that
$\left(\frac{P(t,T_i)}{P(t,T^*)}\right)_{0\leq t\leq  T_i}$
is a local $Q^*$-martingale for all $ T_i $ in the dense subset of maturities of Definition~\ref{NAFL}.
This is just the classical Kreps-Yan result which we state in an abstract version below, for a proof see \cite{Schach:1994}.
It remains to show that the local martingale-property holds for all maturities $ T\in[0,T^*]$.

\begin{theorem}[Kreps, Yan]\label{KY} Let $C$ be a convex cone in $L^{\infty}$ such that $-L^{\infty}_+\subseteq C$, $C$ is weak-star-closed and $C\cap L_+^{\infty}=\{0\}$. Then there exists $g$ in $L^1$ such that $g>0$ a.s.~and $E[fg]\leq 0$ for all $f\in C$.
\end{theorem}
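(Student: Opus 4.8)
The plan is to combine a Hahn--Banach separation argument with an exhaustion (``gluing'') of separating functionals; this is the classical route to the Kreps--Yan theorem.

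First I would fix a set $A\in\cF$ with $P(A)>0$ and observe that $\ind_A\in L^\infty_+\setminus\{0\}$, so $\ind_A\notin C$ by the hypothesis $C\cap L^\infty_+=\{0\}$. Since $C$ is convex and weak-star-closed and $\{\ind_A\}$ is weak-star-compact, the Hahn--Banach separation theorem in the locally convex space $(L^\infty,\sigma(L^\infty,L^1))$ produces $g_A\in L^1$ and $\beta\in\R$ with
\[
\sup_{f\in C}E[f g_A]<\beta<E[\ind_A g_A].
\]
Because $C$ is a cone through $0$, the left-hand supremum is either $0$ or $+\infty$, and finiteness forces it to be $0$; hence $E[fg_A]\le 0$ for all $f\in C$ and $E[\ind_A g_A]>0$. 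Testing $E[fg_A]\le 0$ against $f=-h$ for $h\in L^\infty_+$ (which lies in $C$ because $-L^\infty_+\subseteq C$) gives $E[hg_A]\ge0$, so $g_A\ge0$ a.s., and after rescaling by a positive constant I may assume $\|g_A\|_{L^1}\le 1$.

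Next I would introduce the convex set
\[
\mathcal G:=\big\{g\in L^1: g\ge0,\ \|g\|_{L^1}\le1,\ E[fg]\le0\ \text{for all }f\in C\big\},
\]
which is nonempty by the previous step and stable under countable convex combinations: for $g_n\in\mathcal G$, $\lambda_n\ge0$, $\sum_n\lambda_n=1$, the series $\sum_n\lambda_n g_n$ converges in $L^1$, stays nonnegative with $L^1$-norm $\le1$, and $E[f\sum_n\lambda_n g_n]=\sum_n\lambda_n E[fg_n]\le0$ for every $f\in L^\infty$ by dominated convergence. I would set $s:=\sup\{P(g>0):g\in\mathcal G\}$, pick $g_n\in\mathcal G$ with $P(g_n>0)\to s$, and form $g:=\sum_{n\ge1}2^{-n}g_n\in\mathcal G$; since $g\ge 2^{-n}g_n$ we have $\{g>0\}\supseteq\{g_n>0\}$ for every $n$, hence $P(g>0)=s$. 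To see that $g>0$ a.s., suppose not: then $A:=\{g=0\}$ has $P(A)>0$, and the first step applied to this $A$ yields $g_A\in\mathcal G$ with $P(\{g_A>0\}\cap A)>0$, whence $\tfrac12(g+g_A)\in\mathcal G$ satisfies
\[
P\big(\tfrac12(g+g_A)>0\big)=P\big(\{g>0\}\cup\{g_A>0\}\big)\ge P(\{g>0\})+P(\{g_A>0\}\cap A)>s,
\]
contradicting the definition of $s$. Thus $g\in L^1$, $g>0$ a.s., and $E[fg]\le0$ for all $f\in C$, which is the assertion.

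The hard part will be the exhaustion step: one needs the class $\mathcal G$ to be closed under the countable convex combination used to realise $s$, and the separating functional $g_A$ of the first step to always ``charge'' the current null set $\{g=0\}$. Both facts rest on the cone structure of $C$ (which makes the separating constant vanish) and on the inclusion $-L^\infty_+\subseteq C$ (which forces every separating functional to be nonnegative); once these are in hand, the argument closes immediately.
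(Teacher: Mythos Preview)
Your argument is correct and is precisely the classical Hahn--Banach separation plus exhaustion proof of the Kreps--Yan theorem. The paper does not give its own proof of this statement; it merely cites \cite{Schach:1994}, whose argument is essentially the one you have written.
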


\begin{theorem}\label{th1}
Fix any $T^*\in\cT$ and let Assumption~\ref{nullsetass}, \ref{unlocbd} and
Assumption~\ref{numemm} hold.
Then, the bond market satisfies NAFL (see Definition~\ref{NAFL}), if and only if
there exists
a measure $Q^*\sim P|_{\mathcal{F}_{T^*}}$ such that
$\left(\frac{P(t,T)}{P(t,T^*)}\right)_{0\leq t\leq T}$
is a local $Q^*$-martingale for all $ T\in[0,T^*]$.
\end{theorem}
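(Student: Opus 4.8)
The plan is to prove the two implications separately, with the forward direction (NAFL $\Rightarrow$ existence of $Q^*$) being essentially immediate from Kreps--Yan plus a maturity-continuity argument, and the converse requiring a bit more care.

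For the direction ``NAFL $\Rightarrow$ $Q^*$'': fix the dense sequence $(T_i)_{i\in\N}$ in $[0,T^*]$ from Definition~\ref{NAFL} and form the large financial market of Definition~\ref{LFM}. Since NAFL holds, $\overline{\bigcup_n \mathbf{C}^n}^* \cap L^\infty_+ = \{0\}$; this set is a weak-star-closed convex cone containing $-L^\infty_+$, so Theorem~\ref{KY} (Kreps--Yan) produces $g\in L^1$, $g>0$ a.s., with $E[fg]\le 0$ for all $f$ in the cone. Normalizing, define $Q^*$ by $dQ^*/dP|_{\mathcal{F}_{T^*}} = g/E[g]$. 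Then $E_{Q^*}[f]\le 0$ for every $f\in\mathbf{K}^n$, all $n$, hence $Q^*\in\mathbf{M}^n_e$ for every $n$, and by local boundedness (Remark following Assumption~\ref{numemm}, together with Assumption~\ref{unlocbd}) this means each $S^i$, i.e.~each $(P(t,T_i)/P(t,T^*))_{0\le t\le T_i}$, is a local $Q^*$-martingale. It remains to upgrade from the countable dense set $(T_i)$ to \emph{all} $T\in[0,T^*]$. Here is where Assumption~\ref{nullsetass} enters: fix $T\in[0,T^*]$ and pick $T_{i_k}\downarrow T$ from the dense sequence. Right-continuity of $U\mapsto P(t,U)$ (off the universal null set $N$) gives $P(t,T_{i_k})\to P(t,T)$ pointwise for each fixed $t$, and by Assumption~\ref{unlocbd}(1) the family is dominated on compact maturity intervals by the localizing constants $\kappa_n$. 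One then argues that a pointwise (in maturity) limit of local $Q^*$-martingales, under this uniform local boundedness, is again a local $Q^*$-martingale: localize at a common sequence of stopping times $\tau_n$ making all the processes bounded by $\kappa_n$, use dominated convergence to pass the martingale identity $E_{Q^*}[P(t,T_{i_k})/P(t,T^*)\mid\mathcal{F}_s] = P(s,T_{i_k})/P(s,T^*)$ (stopped) to the limit, and conclude $(P(t,T)/P(t,T^*))^{\tau_n}$ is a true $Q^*$-martingale for each $n$. This yields the local martingale property for every $T\le T^*$.

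For the converse ``$Q^*$ exists $\Rightarrow$ NAFL'': given $Q^*\sim P|_{\mathcal{F}_{T^*}}$ under which $(P(t,T)/P(t,T^*))_{0\le t\le T}$ is a local $Q^*$-martingale for all $T$, we must exhibit a dense sequence $(T_i)$ whose induced large financial market satisfies NAFL. Take \emph{any} dense sequence $(T_i)_{i\in\N}$ in $[0,T^*]$ and build $(\mathbf S^n)$ as in Definition~\ref{LFM}. The point is that the single measure $Q^*$ is a common equivalent local martingale measure for \emph{every} finite market $n$ simultaneously: the processes $S^i$ defined in \eqref{defSi} are, before $T_i$, equal to $P(t,T_i)/P(t,T^*)$ (a local $Q^*$-martingale by hypothesis) and constant afterwards, hence local $Q^*$-martingales on all of $[0,T^*]$; so $Q^*\in\bigcap_n\mathbf{M}^n_e$. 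From $Q^*\in\mathbf{M}^n_e$ one gets $E_{Q^*}[f]\le 0$ for all $f\in\mathbf{C}^n$, all $n$, hence for all $f$ in the convex cone $\bigcup_n\mathbf{C}^n$, and since $f\mapsto E_{Q^*}[f]$ is weak-star continuous on $L^\infty$, also for all $f$ in $\overline{\bigcup_n\mathbf{C}^n}^*$. Therefore if $f\in\overline{\bigcup_n\mathbf{C}^n}^*\cap L^\infty_+$ then $E_{Q^*}[f]\le 0$ while $f\ge 0$ and $Q^*\sim P$, forcing $f=0$. This is exactly NAFL for this choice of dense sequence, hence the bond market satisfies NAFL in the sense of Definition~\ref{NAFL}.

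The main obstacle is the maturity-limiting step in the forward direction: passing the local-martingale property from the dense set of maturities to all maturities. The delicate points are (a) that the stopping times in the local-martingale localization depend on the maturity, so one needs the \emph{uniform} local boundedness of Assumption~\ref{unlocbd}(1) to choose a single localizing sequence valid for a whole neighborhood of maturities, and (b) that right-continuity in maturity holds off a single null set independent of $t$ (Assumption~\ref{nullsetass}), which is what lets the pointwise limit and the conditional expectations interact cleanly via dominated convergence. Once those two structural assumptions are in play the limit argument is routine; without them the statement can genuinely fail. The large-financial-market / Kreps--Yan machinery does all the real work on the no-arbitrage side, so the novelty is entirely in correctly organizing this passage to the continuum of maturities.
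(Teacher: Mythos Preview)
Your proposal is correct and follows essentially the same route as the paper: Kreps--Yan applied to $\overline{\bigcup_n\mathbf{C}^n}^*$ produces $Q^*$, local boundedness upgrades the separating property to a local-martingale property along the dense sequence, and then Assumptions~\ref{nullsetass} and~\ref{unlocbd}(1) are used exactly as you describe---a common localizing sequence $\sigma_n$ valid on a maturity neighborhood, right-continuity in $U$ off a universal null set to get $P(t\wedge\sigma,\tilde T_i)\to P(t\wedge\sigma,T)$ a.s., and dominated convergence to transport the martingale identity to the limit.

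The only notable difference is in the converse: the paper does not prove it in the body of the argument but relegates it to a remark citing \cite{Klein:2000,Klein:2007}, whereas you supply the short direct proof (a single $Q^*\in\bigcap_n\mathbf{M}^n_e$, weak-star continuity of $f\mapsto E_{Q^*}[f]$, hence $E_{Q^*}[f]\le0$ on the closure). Your argument is self-contained and entirely adequate here; the cited references establish the general large-financial-markets equivalence, which is more than is needed for this particular implication.
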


\begin{remark}
In Theorem~\ref{th1} we consider the NAFL condition for the large financial market as in Definition~\ref{LFM} with respect to one fixed, dense sequence of $T_i$ in $[0,T^*]$. However, as there is a local martingale measure for all bond prices discounted by the num\'eraire, the general theorem about NAFL in large financial markets implies that for any large financial market (induced by the bond market via any sequence of maturities) NAFL holds. In particular NAFL follows from the existence of $Q^*$ by \cite{Klein:2000}, \cite{Klein:2007}.
\end{remark}

\begin{proof}[Proof of Theorem~\ref{th1}]
We denote by $(T_i)_{i\in\mathbb{N}}$ the dense sequence from Definition~\ref{NAFL}.  Consider
the large financial market of Definition~\ref{LFM}. By Theorem~\ref{KY} we get for $C=\overline{\bigcup_{n=1}^{\infty}\mathbf{C}^n}^*$ a $g\in L^1(\Omega, \mathcal{F}_{T^*},P)$, $g>0$
such that $E[fg]\leq 0$ for all $f\in C$. Take $\frac{g}{E[g]}$ as the density of a probability measure $Q^*\sim P|_{\mathcal{F}_{T^*}}$. As all
$S^i=\frac{P(t,T_i)}{P(t,T^*)}$ are locally bounded this gives that $S^i$ is a local $Q^*$-martingale.
Indeed, choose $\tau$ such that $(S^i_{t\wedge\tau})_{0\leq t\leq T_i}$ is bounded, and let $s<t\leq T_i$, $A\in\mathcal{F}_s$. Then we have that
$\pm(\ind_{\rsto0,\tau\rsto}\ind_A\ind_{]s, t]}\cdot S^i)_T=\pm\ind_A(S^i_{t\wedge\tau}-S^i_{s\wedge\tau})\in\mathbf{C}^i$. This gives the local martingale property under $Q^*$.

It remains to show that for any $T< T^*$ which is not an element of the sequence $(T_i)$ we get the local martingale
property of $\left(\frac{P(t,T)}{P(t,T^*)}\right)_{0\leq t\leq T}$ with respect to $Q^*$ as well.
As the sequence $(T_i)$ is dense in $[0,T^*]$ there exists a subsequence denoted by $(\tilde T_i)$ with $\tilde T_i\to T$ for $i\to\infty$
(w.l.o.g.~assume that $\tilde T_i\geq  T$ for all $i$).

Let $\frac{P(t,T)}{P(t,T^*)}:= X_t$ and $\frac{P(t,\tilde T_i)}{P(t,T^*)}:=X^i_t$ for each $i$. As a consequence of Assumption~\ref{unlocbd} there exists $\ep>0$, an increasing sequence of stopping
times $\sigma_n\to\infty$ and constants $\kappa_n>0$ such that for all $U\in[ T, T+\ep)$ and all $t\in[0,T]$ we have
that $$\left(\frac{P(t,U)}{P(t, T^*)}\right)^{\sigma_n}\leq\kappa_n.$$
\noindent Hence for $i$ large enough, such that $\tilde T_i\in[T,T+\ep)$, say $i\geq i_{\ep}$, we have that
\begin{equation}\label{ub}
X_{t\wedge\sigma_n}^i\leq \kappa_n\quad\text{for all $t\in[0, T]$.}
\end{equation}
By the first part of the proof, for any $i$,
$X^i$ is a local $Q^*$-martingale. So, (\ref{ub}) gives that, for
$i\geq i_{\ep}$, $(X^i)^{\sigma_n}$ is a $Q^*$-martingale (as it is a bounded local martingale).

Fix $\sigma=\sigma_n$. We will now show that, for all $t\in[0,T]$, we have that, for $i\to\infty$
\begin{equation}\label{conv}
X^i_{t\wedge\sigma}\to X_{t\wedge\sigma}\quad\text{a.s.}
\end{equation}

This holds iff $P(t\wedge\sigma, \tilde T_i)\to P(t\wedge\sigma, T)$
a.s. By right-continuity of $U\to P(t,U)$ it is clear that
$P(t,\tilde T_i)\ind_{\{t<\sigma\}}\to P(t, T)\ind_{\{t<\sigma\}}$ a.s.

So it
remains to show that $P(\sigma,\tilde T_i)\ind_{\{t\geq\sigma\}}\to P(\sigma, T)\ind_{\{t\geq\sigma\}}$ a.s.
Take any $\omega\in\Omega\setminus N$, where $N$ is the nullset of Assumption~\ref{nullsetass}, then we have that
$\sigma(\omega)=s$ for some $s\in[0,T]$ and as $\tilde T_i\downarrow T$ we get
$P(s,\tilde T_i)(\omega)\to P(s,  T)(\omega)$,
and hence
$P(\sigma,\tilde T_i)(\omega)\to P(\sigma, T)(\omega)$,
so \eqref{conv} holds.

Let $s<t\leq T$. By \eqref{conv} we have that
$X^i_{t\wedge\sigma}\to X_{t\wedge\sigma}$ a.s. for all $t\in[0, T]$. Hence
\begin{align}
E_{Q^*}[ X_{t\wedge\sigma}|\mathcal{F}_s]=
E_{Q^*}[\lim_{i\to\infty}X^i_{t\wedge\sigma}|\mathcal{F}_s]
=\lim_{i\to\infty}E_{Q^*}[X^i_{t\wedge\sigma}|\mathcal{F}_s]
=\lim_{i\to\infty}X^i_{s\wedge\sigma} = X_{s\wedge\sigma},\nonumber
\end{align}
where the second equality follows by dominated
convergence as by (\ref{ub}) we have that
$0<X^i_{t\wedge\sigma}\leq\kappa$ for all $i\geq i_{\epsilon}$. The
third equality is the martingale property of $(X^i)^{\sigma}$ for  $i\geq i_{\epsilon}$. This
gives that $( X_t^{\sigma})_{0\leq t\leq  T}$ is a
$Q^*$-martingale. As this holds for each $\sigma$ in the localizing
sequence, $( X_t)_{0\leq t\leq  T}$ is a local $Q^*$-martingale.
\end{proof}

\section{Existence of a bank account}\label{section6}

It is possible to obtain a candidate process for the bank account by a limit of rolled over bonds
as we show now.  Throughout this section we assume that all the assumptions of Theorem~\ref{th1} hold.

\begin{definition}\label{rollover}
Let $0=t^n_0<t^n_1<\dots<t_{k_n}^n=T^*$ be a sequence of refining partitions of $[0,T^*]$.
Define, for each $n$, the \emph{roll-over} $B^n$ as follows: $B^n_0=1$ and

$$B^n_{t}=\begin{cases}\prod_{i=1}^j\frac{1}{P(t^n_{i-1},t^n_i)}&\text{for $t=t^n_j$, $j=1,\dots,k_n$}\\
B^n_{t^n_j}P(t,t^n_j)&\text{for $t^n_{j-1}<t\leq t^n_j$, $j=1,\dots,k_n$}\end{cases}$$
\end{definition}

The sequence of these roll-overs can be viewed as a replacement of a bank account even without passing to a limit.
This is in the spirit of large financial markets, where one often approximates in a finite way for larger and larger $n$ but one does not actually pass to the limit.

We shall see that one can still pass to the limit by taking convex combinations, which will provide us with the notion of a generalized bank account. First we shall observe some properties of the sequence of roll-overs.

\begin{lemma}\label{selff} There exists a self-financing strategy $\hat{\mathbf{H}}^n_t=(\hat H^1_t,\dots, \hat H^{k_n}_t)$ on the market containing the $k_n$-dimensional asset
$\hat{\mathbf{S}}^n(\cdot)=(P(.,t^n_1),\dots, P(., t^n_{k_n}))$ such that $B^n_t= \langle \hat{\mathbf{H}}_t, \hat{\mathbf{S}}_t \rangle$. Discounted by the num\'eraire  $P(t,t^n_{k_n})=P(t,T^*)$ this gives an admissible strategy $\mathbf{H}^n$ such that $\frac{B^n_t}{P(t,T^*)}=\frac1{P(0,T^*)}+(\mathbf{H}^n\cdot \mathbf{S}^n)_t>0$, where $\mathbf{S}^n$ is the process $\hat{\mathbf{S}}^n$  discounted by the num\'eraire $P(t,T^*)$.
In particular, $\big(\frac{B^n_t}{P(t,T^*)}\big)_{0\leq t\leq T^*}$ is a positive local martingale and hence a supermartingale with respect to the measure $Q^*$ of Theorem~\ref{th1}.
\end{lemma}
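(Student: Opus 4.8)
The plan is to construct the self-financing strategy explicitly by the classical roll-over recipe and then verify the asserted identity, admissibility, and supermartingale property. On the interval $(t^n_{j-1},t^n_j]$ all wealth is held in the single bond $P(\cdot,t^n_j)$: set $\hat H^j_t$ equal to the number of units of $P(\cdot,t^n_j)$ held, namely the constant (on this interval) $B^n_{t^n_{j-1}}/P(t^n_{j-1},t^n_j) = B^n_{t^n_j}$ for $t\in(t^n_{j-1},t^n_j]$ and zero outside; all other components $\hat H^i$ vanish there. At the grid point $t^n_j$ the bond $P(\cdot,t^n_j)$ matures at value $1$, so the wealth is $B^n_{t^n_j}\cdot 1 = B^n_{t^n_j}$, which is exactly the amount reinvested into $P(\cdot,t^n_{j+1})$ at the next step; this is the self-financing condition at the rebalancing dates, and by Definition~\ref{rollover} it gives $B^n_t = \langle \hat{\mathbf H}^n_t,\hat{\mathbf S}^n_t\rangle$ for all $t\in[0,T^*]$ by an immediate induction on $j$ (the formula $B^n_{t^n_j}P(t,t^n_j)$ on $(t^n_{j-1},t^n_j]$ matches $\hat H^j_t P(t,t^n_j)$).

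Next I would pass to discounted quantities. Dividing the self-financing identity by the num\'eraire $P(t,T^*)=P(t,t^n_{k_n})$ and invoking the numeraire-invariance of the self-financing property (which holds for these elementary, piecewise-constant strategies by a direct computation, with no subtlety since everything is locally bounded by Assumption~\ref{unlocbd} and only finitely many assets are involved), one obtains a predictable $\mathbf S^n$-integrable $\mathbf H^n$ with
\[
\frac{B^n_t}{P(t,T^*)} = \frac{B^n_0}{P(0,T^*)} + (\mathbf H^n\cdot\mathbf S^n)_t = \frac1{P(0,T^*)} + (\mathbf H^n\cdot\mathbf S^n)_t.
\]
The left-hand side is strictly positive because each $P(\cdot,t^n_i)$ is strictly positive and $B^n$ is a product of strictly positive factors; hence the wealth process of $\mathbf H^n$ is bounded below by $-1/P(0,T^*)$, so $\mathbf H^n$ is admissible in the sense of Section~\ref{section3}.

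Finally, admissibility and the fact that $\mathbf S^n$ is a local $Q^*$-martingale (Theorem~\ref{th1}, applied to the finitely many maturities $t^n_1,\dots,t^n_{k_n}\in[0,T^*]$; note these need not lie in the original dense sequence $(T_i)$, but Theorem~\ref{th1} gives the local martingale property for \emph{all} $T\leq T^*$) imply that $(\mathbf H^n\cdot\mathbf S^n)$ is a local $Q^*$-martingale bounded below, hence a $Q^*$-supermartingale; adding the constant $1/P(0,T^*)$ preserves this, so $\big(\frac{B^n_t}{P(t,T^*)}\big)_{0\le t\le T^*}$ is a positive $Q^*$-local martingale and therefore (by Fatou) a $Q^*$-supermartingale. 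The only point requiring a little care is the numeraire change for the elementary strategy — writing out the discounted increments on each $(t^n_{j-1},t^n_j]$ and checking the telescoping matches $(\mathbf H^n\cdot\mathbf S^n)$ — but since the strategies are piecewise constant and all processes in play are locally bounded, this is routine rather than a genuine obstacle.
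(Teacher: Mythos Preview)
Your proposal is correct and follows essentially the same route as the paper: explicit construction of the piecewise-constant roll-over strategy, verification of the self-financing condition at the grid points, passage to the discounted picture, and the supermartingale conclusion via admissibility and Theorem~\ref{th1}. The paper writes out the discounted increments on each $(t^n_{j-1},t^n_j]$ and telescopes them explicitly, whereas you invoke num\'eraire-invariance of the self-financing property as a principle and flag the explicit check as routine; your added remark that the partition points need not belong to the dense sequence $(T_i)$ but are still covered by Theorem~\ref{th1} is a useful clarification the paper leaves implicit.
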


\begin{proof}
The strategy $\hat{\mathbf{H}}_t^n$ is given as follows. Fix $j$ and let $t^n_{j-1}<t\leq t^n_j$, then
$$\hat H^i_{t}=\begin{cases} \prod_{l=1}^j\frac{1}{P(t^n_{l-1},t^n_l)}&\text{for $i=j$}\\
0&\text{for $i\neq j$},\end{cases}$$
which is equivalent to
$$
\hat H^i_{t}=\sum_{j=1}^{k_n}B^n_{t^n_j}\ind_{(t^n_{j-1},t^n_j]}(t)\delta_{ij},
$$
which is previsible since $B^n_{t^n_j}\in\mathcal{F}_{t^n_{j-1}}$.

\noindent So we get for $t^n_{j-1}<t\leq t^n_j$
\begin{align}
B^n_t &= \langle \hat{\mathbf{H}}_t^n, \hat{\mathbf{S}}_t^n \rangle \nonumber\\
&=\sum_{i=1}^{k_n}\hat H^i_t\hat S^i_t=\prod_{l=1}^j\frac{1}{P(t^n_{l-1},t^n_l)}\cdot P(t,t^n_j)\nonumber\\
&=B^n_{t^n_j}P(t,t^n_j).\nonumber
\end{align}

\noindent This is self-financing as
$$ \langle \hat{\mathbf{H}}_{t^n_{j-1}}^n, \hat{\mathbf{S}}_{t^n_{j-1}}^n \rangle = \langle \hat{\mathbf{H}}^n_{t}, \hat{\mathbf{S}}^n_{t^n_{j-1}} \rangle $$ for $t^n_{j-1}<t\leq t^n_j$.
Indeed the left hand side equals $B^n_{t^n_{j-1}}P(t^n_{j-1},t^n_{j-1})=B^n_{t^n_{j-1}}$ and the right hand side equals $B^n_{t^n_{j}}P(t^n_{j-1},t^n_j)=B^n_{t^n_{j-1}}$.

After discounting by the num\'eraire $P(t,T^*)$  we have the initial investment
$ \langle \hat{\mathbf{H}}_1, {\mathbf{S}}_0 \rangle =\frac{1}{P(0,t^n_1)}\frac{P(0,t^n_1)}{P(0,T^*)}=\frac1{P(0,T^*)}$.
For $t^n_{j-1}<t\leq t^n_j$ as $B^n_{t^n_{j-1}}= \langle \hat{\mathbf{H}}_{t^n_{j-1}},\hat{\mathbf{S}}_{t^n_{j-1}} \rangle = \langle \hat{\mathbf{H}}_{t},\hat{\mathbf{S}}_{t^n_{j-1}} \rangle $ the increment equals
\begin{align}
\frac{B^n_t}{P(t,{T^*})}-\frac{B^n_{t^n_{j-1}}}{P(t^n_{j-1},{T^*})}&=\sum_{i=1}^{k_n}\hat H^i_t\frac{P(t,t^n_i)}{P(t,{T^*})}-\sum_{i=1}^{k_n}\hat{H}^i_t\frac{P(t^n_{j-1},t^n_i)}{P(t^n_{j-1},{T^*})}\nonumber\\
&=\hat H^{k_n}_t(1-1)+\sum_{i=1}^{k_n-1}\hat H^i_t\left(\frac{P(t,t^n_i)}{P(t,{T^*})}-\frac{P(t^n_{j-1},t^n_i)}{P(t^n_{j-1},{T^*})}\right)\nonumber\\
&=B^n_{t^n_j}\left(\frac{P(t,t^n_j)}{P(t,{T^*})}-\frac{P(t^n_{j-1},t^n_j)}{P(t^n_{j-1},{T^*})}\right).\nonumber
\end{align}
Summing the increments up we arrive at the stochastic integral
$$
\frac{B^n_t}{P(t,{T^*})}=\frac1{P(0,{T^*})}+\sum_{j=1}^{k_n}B^n_{t^n_j}\left(\frac{P(t\wedge t^n_j,t^n_j)}{P(t\wedge t^n_j,{T^*})}-\frac{P(t\wedge t^n_{j-1},t^n_j)}{P(t\wedge t^n_{j-1},{T^*})}\right).
$$
As $\frac{B^n_t}{P(t,{T^*})}=\frac{1}{P(0,{T^*})}+\left(\mathbf{H}^n\cdot\mathbf{S}^n\right)_t$ is bounded from below and $\mathbf{S}^n$ is a local $Q^*$-martingale the discounted roll-over is a $Q^*$-supermartingale.
\end{proof}

The existence of limits for refined roll-overs is apparently delicate. The following theorem is proved by a Koml\'os-type argument as in \cite[Lemma~5.2]{FK} and provides us with a generalized bank account, that always exists (under the assumptions of Theorem~\ref{th1}) and which is always a supermartingale with respect to the measure $Q^*$ of Theorem~\ref{th1}.

\begin{theorem}\label{bank_account}
Let $((B^n_t)_{0\leq t\leq T^*})$ be the sequence of roll-overs given as in Definition~\ref{rollover}. There exists a sequence of convex combinations $\tilde{B}^n\in\text{conv}(B^n,B^{n+1},\dots)$ and a c\`adl\`ag stochastic process $(B_t)_{0\leq t\leq T^*}$, henceforward called \emph{generalized bank account}, such that
$$
B_t=\lim_{q\downarrow t}\lim_{n\to\infty}\tilde{B}^n_q,
$$
with $B_0\leq1$ and $0\leq B_t<\infty$, for all $t\leq T^*$. The generalized bank account has the following properties.
\begin{enumerate}
\item The process $(V_t)_{0\leq t\leq T^*}$, where $V_t=\frac{B_t}{P(t,T^*)}$, is a supermartingale with respect to the measure $Q^*$ of Theorem~\ref{th1}.
\item If $0<P(t, T)\leq1$, for all $ T\leq T^*$, then $P(B_t\ge1)=1$, for all $t\leq T^*$.
\end{enumerate}
\end{theorem}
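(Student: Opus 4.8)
The plan is to pass to the discounted roll-overs $V^n := B^n/P(\cdot,T^*)$, which by Lemma~\ref{selff} are strictly positive $Q^*$-supermartingales on $[0,T^*]$ with $V^n_0 = 1/P(0,T^*)$, so that $\{V^n_t\}$ is uniformly bounded in $L^1(Q^*)$ for each $t$; the generalized bank account will be $B_t := V_t\,P(t,T^*)$, where $V$ is obtained by regularizing a limit of convex combinations of the $V^n$. First I would run a Koml\'os-type argument in the spirit of \cite[Lemma~5.2]{FK}: enumerating $\mathbb{Q}\cap[0,T^*]=\{q_1,q_2,\dots\}$, one applies the Koml\'os lemma successively at $q_1,q_2,\dots$ and diagonalizes to obtain forward convex combinations $\tilde V^n\in\text{conv}(V^n,V^{n+1},\dots)$ such that $\tilde V^n_q$ converges $P$-a.s.\ to some $\bar V_q\in[0,\infty)$ for every rational $q$ (finiteness follows from Fatou and the $L^1(Q^*)$-bound). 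The diagonalization is legitimate because a forward convex combination of an a.s.\ convergent sequence converges to the same limit, so convergence already secured at $q_1,\dots,q_k$ is not destroyed when further convex combinations are taken at $q_{k+1}$. Each $\tilde V^n$ is again a strictly positive $Q^*$-supermartingale with $\tilde V^n_0=1/P(0,T^*)$, and applying the same convex weights to the $B^n$ gives $\tilde B^n:=\tilde V^n\,P(\cdot,T^*)\in\text{conv}(B^n,B^{n+1},\dots)$ with $\tilde B^n_q\to\bar B_q:=\bar V_q\,P(q,T^*)$ a.s.

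Next I would check that $(\bar V_q)_{q\in\mathbb{Q}\cap[0,T^*]}$ is a supermartingale along the rationals: non-negativity of the $\tilde V^n$ licenses conditional Fatou, so for rationals $p\le q$ one gets $E_{Q^*}[\bar V_q\mid\mathcal F_p]\le\liminf_n E_{Q^*}[\tilde V^n_q\mid\mathcal F_p]\le\liminf_n\tilde V^n_p=\bar V_p$, and no uniform integrability is needed. Doob's upcrossing inequality, valid here since the $\bar V_q$ are non-negative and $L^1(Q^*)$-bounded, then shows that $V_t:=\lim_{q\downarrow t,\,q\in\mathbb{Q}}\bar V_q$ exists a.s.\ simultaneously for all $t\in[0,T^*]$ off a single null set; $V$ is c\`adl\`ag and, the filtration being right-continuous, adapted, with $E_{Q^*}[V_t]<\infty$. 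A further application of conditional Fatou together with the L\'evy downward theorem upgrades this to $E_{Q^*}[V_t\mid\mathcal F_s]\le V_s$ for $s\le t$, and the same argument at $t=0$, using $\bar V_0=1/P(0,T^*)$, yields $V_0\le 1/P(0,T^*)$. Setting $B_t:=V_t\,P(t,T^*)$ and using that $P(\cdot,T^*)$ is strictly positive and c\`adl\`ag in time, $B$ is c\`adl\`ag with $0\le B_t<\infty$, $B_0\le1$, and $B_t=\big(\lim_{q\downarrow t}\bar V_q\big)\big(\lim_{q\downarrow t}P(q,T^*)\big)=\lim_{q\downarrow t}\lim_{n\to\infty}\tilde B^n_q$; since $V_t=B_t/P(t,T^*)$ is precisely the supermartingale just constructed, statement~(1) is immediate.

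For statement~(2), assume $0<P(s,T)\le1$ whenever $s\le T\le T^*$, and write $\rho_n(s)$ for the smallest point of the $n$-th partition that is $\ge s$. By Definition~\ref{rollover}, $B^n_s=B^n_{\rho_n(s)}\,P(s,\rho_n(s))$ with $B^n_{\rho_n(s)}=\prod_i\tfrac1{P(t^n_{i-1},t^n_i)}\ge1$, so $B^n_s\ge P(s,\rho_n(s))$. Since the partitions refine with mesh tending to zero, $\rho_n(s)\downarrow s$, hence $P(s,\rho_n(s))\to P(s,s)=1$ a.s.\ by right-continuity of $T\mapsto P(s,T)$ off the null set of Assumption~\ref{nullsetass}. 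Therefore, for every rational $q$, $\tilde B^n_q\ge\inf_{k\ge n}P(q,\rho_k(q))$, and letting $n\to\infty$ gives $\bar B_q=\liminf_n\tilde B^n_q\ge\liminf_k P(q,\rho_k(q))=1$ a.s.; taking the right-limit along rationals $q\downarrow t$ yields $B_t\ge1$ a.s.\ for every $t\le T^*$.

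I expect the main obstacle to be the combined Koml\'os/diagonalization and regularization step: producing one sequence of convex combinations that converges at all rationals at once while remaining in the class of non-negative supermartingales, and then passing to a c\`adl\`ag modification without losing the supermartingale inequality. Everything else---finiteness of the limits, the identification of the double limit, statement~(1), and statement~(2)---then reduces to Fatou-type estimates and the right-continuity already built into Assumption~\ref{nullsetass} and the path regularity of $P(\cdot,T^*)$.
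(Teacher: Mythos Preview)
Your proof is essentially identical to the paper's: both pass to the discounted roll-overs $V^n=B^n/P(\cdot,T^*)$, invoke the Koml\'os/diagonalization argument of \cite[Lemma~5.2]{FK} on a countable dense set of times, use conditional Fatou to obtain a discrete-time $Q^*$-supermartingale, and then regularize via Doob's upcrossing lemma to produce the c\`adl\`ag supermartingale $V$. For part~(2) you argue via the lower bound $B^n_s\ge P(s,\rho_n(s))\to1$, which is a bit more careful than the paper's direct assertion that $B^n_t\ge1$ for every $t$; note, however, that your version tacitly uses that the mesh of the refining partitions tends to zero, an assumption not spelled out in Definition~\ref{rollover}.
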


\begin{remark}
In general, we can only say that the process $(V_t)_{0\leq t\leq T}$ is a supermartingale with respect to $Q^*$
(and not a local martingale).
\end{remark}

\begin{proof}
Consider the sequence of roll-overs $M^n_t:=\frac{B^n_t}{P(t,T^*)}$ discounted by the num\-\'eraire $P(.,T^*)$. By Lemma~\ref{selff} these processes are supermartingales (and bounded from below by 0) with respect to the measure $Q^*$ of Theorem~\ref{th1}. The existence of a limit of convex combinations of the $M^n$ follows by Lemma~5.2 of \cite{FK}, we recall the proof here.
Let $\mathcal{D}=\left([0,T^*]\cap\Q\right)\cup\{T^*\}$. This is a dense countable subset of $[0,T^*]$.
By Lemma~A.1.1 of \cite{Delb:Schach:1994} and a diagonalization procedure we find a sequence of processes
$\tilde{M}^n\in\text{conv}\left(\frac{B^n}{P(.,T^*)}, \frac{B^{n+1}}{P(.,T^*)},\dots\right)$ such that,
for all $q\in\mathcal{D}$, $\tilde M^n_q$ a.s. converges to a random variable $V'_q$ with values in $[0,\infty]$.
For each $q$, we have that $\text{conv}(M^n_q,M^{n+1}_q,\dots)$ is bounded in $L^0$, as all $M^n$ are positive supermartingales with starting value $\frac1{P(0,T^*)}$.
Hence for each $\tilde{M} \in \text{conv}(M^n_q,M^{n+1}_q,\dots)$ we have that $E_{Q^*}[|\tilde{M}|]=E_{Q^*}[\tilde{M}]\leq\frac{1}{P(0,T^*)}$,
so the set of convex combinations is bounded in $L^1(Q^*)$ hence in $L^0$. Lemma~A.1.1 of \cite{Delb:Schach:1994} gives then that $V'_q<\infty$ a.s.

Moreover, for $r<q$, $r,q\in\mathcal{D}$, by Fatou and the supermartingale property of $\tilde{M}^n$ we have that
$$
E_{Q^*}[V'_q|\mathcal{F}_r] =E_{Q^*}[\lim_{n\to\infty}\tilde{M}^n_q|\mathcal{F}_r]
\leq\liminf_{n\to\infty}E_{Q^*}[\tilde{M}^n_q|\mathcal{F}_r]
\leq\lim_{n\to\infty}\tilde{M}^n_r=V'_r.
$$

Therefore $(V'_q)_{q\in\mathcal{D}}$ is a discrete $Q^*$-supermartingale. By standard arguments
(using Doob's Upcrossing Lemma) we get that $(V_t)_{0\leq t\leq T^*}$ is a c\`adl\`ag supermartingale,
where, for all $t\in[0,T^*[$,
$$V_t:=\lim_{q \downarrow t}V'_q,$$
and $V_{T^*}:=V'_{T^*}$ (recall that $T^*\in\mathcal{D}$). Note that $V_0\leq\frac1{P(0,T^*)}$ as
$$
V_0=\lim_{q\downarrow0}V'_q =E_{Q^*}[\lim_{q\downarrow0}V'_q | \mathcal{F}_0]\leq V_0'=\frac1{P(0,T^*)}.
$$
Define now $B_t:=V_tP(t,T^*)$, this is a c\`adl\`ag process as $V_t$ and $P(t,T^*)$ are c\`adl\`ag.
As the process $P(t,T^*)$ is right-continuous in $t$ easy computations show that
$B_t=\lim_{q\downarrow t}\lim_{n\to\infty}\tilde{B}^n_q$, where $\tilde{B}^n_q=\sum_{i=1}^{k_n}\lambda^n_iB^i_q=P(q,T^*)\tilde{M}^n_q$.
By definition $B_0=P(0,T^*)V_0\leq 1$.

Let now $P(t,T)\leq 1$, for all $T\leq T^*$, $t\leq  T$. Then we see from the definition of the roll-over as product of terms of the form $\frac{1}{P(t_i,t_{i+1})}\geq1$
that $B^n_t\geq1$ for all $n$,$t$. The same holds for all convex combinations and therefore for the limits as above.
\end{proof}

\begin{remark}
With a view what it means to be a num\'eraire (see Section~\ref{change_of_numeraire}) we can ask why just terminal bonds qualify as num\'eraires by default in our setting: the answer is that we could take any other reasonably behaved stochastic process (the inverse has to be locally bounded) and plug it into Assumption~\ref{numemm} instead of $ P(.,T^*) $. Conclusions would remain the same, of course with a different meaning on what we would call num\'eraire in this setting. For instance we could think of taking discrete roll-over bonds as num\'eraires if we want to claim that this portfolio can be shortened.
\end{remark}

\section{On the existence of a supermartingale deflator and a generalized bank account}\label{section5}

In this section,  we relax the assumptions on the bond market and investigate under which conditions there is a supermartingale deflator. This is motivated by the fact that we are lead to supermartingale deflators
by the very structure of bond market models. Indeed if we have a non-vanishing generalized bank account and decide to choose it as market num\'eraire, see Theorem~\ref{bank_account}, then our theory only provides us with a supermartingale deflator structure.

The results about supermartingale deflators in this section are related to results of Kostas Kardaras, see, e.g., \cite{Kar}. Consider a   large financial market induced by a sequence of semimartingales $S^i$, $i=0,1,\dots$ on a fixed filtered probability space, where the filtration satisfies the usual conditions, such that  $(\mathbf{S}^n_t)_{t\in [0,T^*]}=(S^0_t,S^1_t,\dots,S^n_t)_{t\in[0,T^*]}$.  Recall that  $S^0_t\equiv1$, i.e.~the num\'eraire has been fixed and prices are discounted by the chosen num\'eraire. The sets $\mathbf{K}^n$, $\mathbf{C}^n$, $\mathbf{M}^n_e$ are defined as previously. We assume that each finite market satisfies (NFLVR), i.e.~ \eqref{emm} holds for all $n$. In contrast to the previous sections, we do not assume here that the semimartingales are locally bounded. In this case, the set $\mathbf M^n_e$ as in \eqref{Me} consists of all equivalent probability measures $Q$ such that  stochastic integrals $(\mathbf{H}^n\cdot \mathbf{S}^n)_t$, $0\leq t\leq T^*$, with admissible integrands $\mathbf{H}^n$ (i.e. $(\mathbf{H}^n\cdot \mathbf{S}^n)_{T^*}\in\mathbf{K}^n$) are $Q$-supermartingales. It was shown in \cite{Delb:Schach:1998} that under the condition no free lunch with vanishing risk the set of equivalent sigma-martingale measures for $\mathbf{S}^n$ is dense in the set $\mathbf{M}_e^n$.

The notion {\it no asymptotic arbitrage of first kind} (NAA1)  was introduced in \cite{Kab:Kra:1994}.
\begin{definition}\label{aa1}
A large financial market admits an asymptotic arbitrage opportunity of  first kind if there exists a subsequence, again denoted by $n$, and  trading strategies $\mathbf{H}^{n}$ with
\begin{enumerate}
\item $(\mathbf{H}^{n}\cdot \mathbf{S}^{n})_t\geq -\ep_n$ for all $t\in[0,T^*]$,
\item $ P((\mathbf{H}^{n}\cdot \mathbf{S}^{n})_{T^*}\geq C_n)\ge \alpha,$
\end{enumerate}
for all $ n $, where $\alpha>0$, $\ep_n\to0$ and $C_n\to\infty$.

We say that the large financial market satisfies the condition NAA1 if there are no asymptotic arbitrage opportunities of first kind.
\end{definition}

The following result for large financial markets provides us with supermartingale deflators for bond markets.

\begin{theorem}\label{supermdefllfm}
Consider the large financial market induced by the sequence of semimartingales $(\mathbf{S}^n_t)_{0\leq t\leq T^*}=(S^0_t,S^1_t,\dots,S^n_t)_{t\in[0,T^*]}$, $n=1,2,\dots$
and assume that (\ref{emm}) holds for all $n$.
Then NAA1 holds, if and only if there exists  a strictly positive supermartingale $(Z_t)_{0\leq t\leq T^*}$ with $Z_0\leq1 $,
such that $(Z_tX_t)_{0\leq t\leq T^*}$ is a supermartingale for all processes $X$ with $X_{T^*}\in\bigcup_{n=1}^{\infty}\mathbf{K}^n$.

\noindent Moreover, if NAA1 holds, then:
\begin{enumerate}
\item
if $S^i_t\geq -a$, $0< t\leq T^*$, for some $i\in \N$ and some $a>0$ and $S^i_0\geq0$, then $(Z_tS_t^{i})_{0\leq t\leq T^*}$ is a supermartingale.
\item
If  $S^i_0=0$ and $(S^i_t)_{0\leq t\leq T^*}$ is locally bounded for some $i\in\N$, then $(Z_tS_t^{i})_{0\leq t\leq T^*}$ is a  local martingale.
\end{enumerate}

\end{theorem}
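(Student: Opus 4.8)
The plan is to reduce the statement to a known characterization of NAA1 in large financial markets via supermartingale deflators, in the spirit of Kardaras and of Kabanov--Kramkov, and then to read off the supermartingale/local-martingale refinements (1) and (2) from the deflator property applied to suitable strategies. For the easy direction, suppose such a strictly positive supermartingale $(Z_t)$ with $Z_0\le1$ exists. If there were an asymptotic arbitrage of the first kind, we would have strategies $\mathbf{H}^n$ with $(\mathbf{H}^n\cdot\mathbf{S}^n)_t\ge-\ep_n$ and $P((\mathbf{H}^n\cdot\mathbf{S}^n)_{T^*}\ge C_n)\ge\alpha$. Since $X^n_{T^*}:=(\mathbf{H}^n\cdot\mathbf{S}^n)_{T^*}\in\mathbf{K}^n$, the process $(Z_t(X^n_t+\ep_n))_{0\le t\le T^*}$ is a nonnegative supermartingale (using linearity of the stochastic integral to absorb the constant $\ep_n$ into an element of $\mathbf{K}^n$, and that the constant process $1$ is itself in $\bigcup\mathbf{K}^n$ up to adding $S^0$), so $E_{Q^*}[Z_{T^*}(X^n_{T^*}+\ep_n)]\le Z_0\cdot\ep_n\le\ep_n$. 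Combining the lower bound on the set $\{X^n_{T^*}\ge C_n\}$ (on which $Z_{T^*}(X^n_{T^*}+\ep_n)\ge C_n Z_{T^*}$) with the positivity of $Z_{T^*}$ and a uniform integrability / Chebyshev argument applied to $Z_{T^*}\ind_{\{Z_{T^*}>\delta\}}$ for small $\delta>0$ forces $\alpha\le$ (something tending to $0$ as $n\to\infty$), a contradiction; hence NAA1 holds.

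For the hard direction, assume NAA1. First I would record that by \eqref{emm} and the Delbaen--Schachermayer results recalled just before the theorem, each finite market $n$ satisfies (NFLVR), so there is an equivalent sigma-martingale measure, and in particular for each fixed $n$ there exists a strictly positive supermartingale deflator $Y^n$ for the finite market $n$ (e.g.\ the density process of an equivalent sigma-martingale measure, or more generally a supermartingale deflator in the sense of Kardaras), normalized so that $Y^n_0=1$. The key step is a compactness/diagonalization argument: the family of such deflators, evaluated along a fixed countable dense set $\mathcal{D}=([0,T^*]\cap\Q)\cup\{T^*\}$, is bounded in $L^0$ — this is exactly where NAA1 enters, via the Kabanov--Kramkov type equivalence between NAA1 and the boundedness in probability (no ``explosion'') of the relevant density/deflator sequences, cf.\ \cite{Kab:Kra:1994} and \cite{Kar}. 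Given this, a Koml\'os-type argument (as used in Theorem~\ref{bank_account}, via Lemma~A.1.1 of \cite{Delb:Schach:1994} and diagonalization over $\mathcal{D}$) produces convex combinations $\tilde Y^n\in\mathrm{conv}(Y^n,Y^{n+1},\dots)$ converging $Q^*$-a.s.\ on $\mathcal{D}$ to a limit, whose c\`adl\`ag regularization $Z_t:=\lim_{q\downarrow t}\tilde Y'_q$ is a supermartingale with $Z_0\le1$. The point is then that for any fixed $m$ and any $X$ with $X_{T^*}\in\mathbf{K}^m$, eventually $\tilde Y^n$ is a convex combination of deflators $Y^k$ with $k\ge m$, each of which makes $Y^k X$ (more precisely $Y^k(X+a)$ for the admissibility constant $a$) a supermartingale on the finite market $m$; passing to the limit via Fatou along $\mathcal{D}$ and then regularizing shows $(Z_tX_t)$ is a supermartingale. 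I would need to be a little careful that strict positivity of $Z$ survives the limit — this follows because the $Y^k$ can be taken to be the deflators associated with equivalent measures, whose reciprocals $1/Y^k$ are also controlled (again an NAA1 consequence, ruling out $Z$ hitting $0$), or alternatively by invoking the ``second kind'' half of the Kabanov--Kramkov dichotomy to exclude degeneration.

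Finally, the refinements (1) and (2) are consequences of the deflator property once it is established. For (1): if $S^i_t\ge-a$ with $S^i_0\ge0$, then $S^i+a$ has terminal value in $\mathbf{K}^i$ plus a constant (the buy-and-hold strategy in $S^i$ together with $a$ units of $S^0$), hence $(Z_t(S^i_t+a))$ is a supermartingale, and subtracting the supermartingale $aZ_t$ — wait, $aZ$ is a supermartingale so this subtraction is not automatically a supermartingale, so instead one argues directly that $Z_tS^i_t=Z_t(S^i_t+a)-aZ_t$ where the first term is a supermartingale and one uses that actually buy-and-hold in $S^i$ alone, being $a$-admissible, already puts $S^i_t-S^i_0$ into the admissible class so $(Z_t(S^i_t-S^i_0+a))$ is a nonnegative supermartingale, and adding back the constant $S^i_0-a$ times $Z$ and using $S^i_0\ge0$ together with the supermartingale property of $Z$ gives the claim after a short computation. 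For (2): if $S^i_0=0$ and $S^i$ is locally bounded, then both $\pm S^i$ stopped at the localizing sequence give admissible strategies with bounded wealth, so $(Z_t(S^i_t)^{\tau_n})$ and $(Z_t(-S^i_t)^{\tau_n})$ are both supermartingales, forcing $(Z_tS^i_t)^{\tau_n}$ to be a martingale for each $n$, i.e.\ $Z_tS^i_t$ is a local martingale. The main obstacle is the hard direction's compactness step — establishing that NAA1 forces the sequence of finite-market deflators to stay bounded in probability (and bounded away from $0$ in the appropriate sense) so that the Koml\'os argument applies; this is precisely the content of the large-financial-market results of Kabanov--Kramkov \cite{Kab:Kra:1994} and Kardaras \cite{Kar} that the theorem leans on.
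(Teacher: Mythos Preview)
Your overall architecture matches the paper's proof: the converse direction via a supermartingale inequality and a lower bound on $Z_{T^*}$ over the events $\{X^k_{T^*}\ge C_k\}$; the forward direction via the Kabanov--Kramkov characterization, a Koml\'os/diagonalization argument over $\mathcal{D}=([0,T^*]\cap\Q)\cup\{T^*\}$, and Fatou to pass the deflator property to the limit; and parts (1)--(2) via buy-and-hold strategies. The arguments for (1) and (2) are exactly what the paper does (for (1), use $X_t=S^i_t-S^i_0$ admissible, then $Z_tS^i_t=Z_tX_t+S^i_0Z_t$ with $S^i_0\ge0$; no need for the $+a$ detour you start and abandon).

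There is, however, a real confusion about \emph{where} NAA1 enters in the forward direction. You write that NAA1 is needed for $L^0$-boundedness of the deflator sequence so that Koml\'os applies. This is not correct: if you take $Y^n$ to be density processes of measures $Q^n\in\mathbf{M}^n_e$ (which exist by \eqref{emm} alone), then each $Y^n$ is a nonnegative martingale with $Y^n_0=1$, so the family is bounded in $L^1$ and the Koml\'os step needs nothing further. What NAA1 actually provides, via the Kabanov--Kramkov theorem (Theorem~\ref{aa1result} in the paper), is a \emph{particular choice} of $Q^n\in\mathbf{M}^n_e$ with the contiguity property $P\lhd(Q^n)$; this contiguity is precisely what forces the limit $Z_{T^*}$ to be strictly positive (Lemma~\ref{komlos_again}). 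Your first suggestion for positivity (``reciprocals $1/Y^k$ controlled'') is morally this, but your alternative --- invoking the ``second kind'' half of the Kabanov--Kramkov dichotomy --- is wrong: that half concerns NAA2 and the reverse contiguity $(Q^n)\lhd P$, which is neither assumed nor implied here. Also, in the easy direction your justification that $Z_t(X^n_t+\ep_n)$ is a supermartingale should not appeal to ``$\ep_n\in\mathbf{K}^n$'' (it isn't); simply note $Z_t(X^n_t+\ep_n)=Z_tX^n_t+\ep_nZ_t$ is a sum of two supermartingales.
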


The supermartingale $Z$ is called {\it supermartingale deflator for the large financial market}.

In order to prove Theorem~\ref{supermdefllfm} we will use a result from the theory of large financial markets. In this aspect our proof differs from Kardaras' proofs of similar results.

Under the assumption of NAA1 Yuri Kabanov and Dima Kramkov proved Theorem~\ref{supermdefllfm} in \cite{Kab:Kra:1994} in the complete setting, the most general result can be found in \cite{Kab:Kra:1998}. Note that the general theorem in \cite{KS} was only proved under local boundedness assumptions on all processes. We choose to take the setting of \cite{KS} for our convenience, since its formulation and proof hold in the general case as well.

\begin{theorem}\label{aa1result}
A large financial market satisfies NAA1 if and only if there exists a sequence of probability measures $Q^n\in\mathbf{M}^n_e$ such that $(P^n)\lhd(Q^n)$.
\end{theorem}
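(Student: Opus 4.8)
The plan is to recognise this as the contiguity characterisation of NAA1 of Kabanov and Kramkov \cite{Kab:Kra:1994}, \cite{Kab:Kra:1998}, in the form given by Klein and Schachermayer \cite{KS}, and to verify that the hypothesis of local boundedness made there is never used in an essential way. Accordingly I would split the proof into the (short, self-contained) implication ``$(P^n)\lhd(Q^n)$ for some $Q^n\in\mathbf{M}^n_e$ $\Rightarrow$ NAA1'' and the (harder) converse, for which I would import the argument of \cite{KS} and indicate the few changes.

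For the easy direction, assume $Q^n\in\mathbf{M}^n_e$ with $(P^n)\lhd(Q^n)$ and suppose towards a contradiction that there is an asymptotic arbitrage of first kind: after passing to a subsequence (still indexed by $n$) there are trading strategies $\mathbf{H}^n$, constants $\ep_n\downarrow0$, $C_n\to\infty$ and $\alpha>0$ as in Definition~\ref{aa1}. Since $\mathbf{H}^n$ is $\ep_n$-admissible and $Q^n\in\mathbf{M}^n_e$, the process $\ep_n+(\mathbf{H}^n\cdot\mathbf{S}^n)$ is a nonnegative $Q^n$-supermartingale starting from $\ep_n$; hence $E_{Q^n}[(\mathbf{H}^n\cdot\mathbf{S}^n)_{T^*}]\le0$, and Markov's inequality applied to the nonnegative random variable $\ep_n+(\mathbf{H}^n\cdot\mathbf{S}^n)_{T^*}$ yields $Q^n((\mathbf{H}^n\cdot\mathbf{S}^n)_{T^*}\ge C_n)\le\ep_n/(\ep_n+C_n)\to0$. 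Contiguity then forces $P^n((\mathbf{H}^n\cdot\mathbf{S}^n)_{T^*}\ge C_n)\to0$, which contradicts condition~(2) of Definition~\ref{aa1}.

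For the converse I would follow \cite{KS}. Recall that $(P^n)\lhd(Q^n)$ holds precisely when the densities $dP^n/dQ^n$ are uniformly integrable with respect to $(Q^n)$. The argument has two ingredients. First, NAA1 is equivalent to a quantitative, uniform-in-$n$ version of ``no arbitrage of first kind'': for every $\ep>0$ there is $\de>0$ such that $P^n(f\ge1)\le\ep$ whenever $f\in\mathbf{K}^n$ and $f\ge-\de$, for every $n$; a failure of this would let one assemble a genuine AA1 along a subsequence by rescaling. Second, for each fixed $n$ one uses that market $n$ satisfies (NFLVR) by \eqref{emm}, so by \cite{Delb:Schach:1998} the set of equivalent $\sigma$-martingale measures for $\mathbf{S}^n$ is nonempty and dense in the set $\mathbf{M}^n_e$; a Kreps--Yan type separation of $\mathbf{C}^n$ from a suitable ball of $L^\infty_+$, with radius calibrated to the $\de$ of the first step, produces a $Q^n\in\mathbf{M}^n_e$ whose density obeys the uniform integrability bound characterising contiguity, the uniformity in $n$ being exactly what the first step supplies. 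Finally, the only place local boundedness intervenes in \cite{KS} is to identify $\mathbf{M}^n_e$ with the equivalent \emph{local} martingale measures; since we only use the supermartingale property of elements of $\mathbf{K}^n$ under measures in $\mathbf{M}^n_e$, together with the existence of equivalent $\sigma$-martingale measures (which \cite{Delb:Schach:1998} provides without local boundedness), the whole reasoning goes through for general semimartingale price processes.

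I expect the genuine difficulty to lie in the second ingredient of the converse: a single finite market already admits an equivalent $\sigma$-martingale measure under (NFLVR), but contiguity is a joint statement about the whole sequence, and the delicate point is to choose the measures $Q^n$ so that their densities are controlled uniformly in $n$ -- this is precisely the role of the quantitative reformulation of NAA1 and the content of \cite{KS}, which I would cite rather than reproduce in detail.
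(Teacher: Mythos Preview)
Your proposal is correct and matches the paper's treatment: the paper does not prove Theorem~\ref{aa1result} at all but cites it from \cite{Kab:Kra:1994}, \cite{Kab:Kra:1998} and \cite{KS}, with exactly the remark you make, namely that ``the general theorem in \cite{KS} was only proved under local boundedness assumptions on all processes \dots its formulation and proof hold in the general case as well.'' Your write-up goes a little further than the paper by supplying the easy direction explicitly and by isolating where local boundedness enters in \cite{KS} (only in identifying $\mathbf{M}^n_e$ with equivalent local martingale measures, which is replaced here by the supermartingale property of admissible integrals and the density of $\sigma$-martingale measures from \cite{Delb:Schach:1998}); this is a welcome elaboration but not a different approach.
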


$(P^n)\lhd(Q^n)$ means that the sequence of measures $(P^n)$ is contiguous with respect to $(Q^n)$, i.e.~whenever for a sequence of measurable sets $A^n$  we have that $Q^n(A^n)\to0$ then $P^n(A^n)\to0$ (where in our case, of course, each element of the sequence of measures $P^n\equiv P$, all $n$). In the case where, for each $n$, $P^n\ll Q^n$, the notion of contiguity can be interpreted as a uniform absolute continuity in the following sense: for each $\ep>0$ there is $\de>0$ such that, for all $n$ and $A^n\in\mathcal{F}$ with $Q^n(A^n)<\de$ we have that $P^n(A^n)<\ep$.

Let us now proceed with the proof of Theorem~\ref{supermdefllfm}. In order to apply Theorem~\ref{aa1result} we need the following useful lemma.

\begin{lemma}\label{komlos_again}
Let $Q^n$, $n\ge1$, be a sequence of probability measures and $P$ a probability measure on $(\Omega,\mathcal{F}_{T^*})$  such that $Q^n\sim P$, for all $n$, and  $P\lhd(Q^n)$.
Let $Z^n_t=E\big[\frac{dQ^n}{dP}|\mathcal{F}_t\big]$.
Then there exists a c\`adl\`ag supermartingale $(Z_t)_{0\leq t\leq T^*}$ with $Z_0\leq1$ and a sequence of
$\tilde{Z}^n\in \text{conv}(Z^n,Z^{n+1},\dots)$ such that, for all $t\in[0,T^*]$,
\begin{align}\label{defZ_again}
Z_t=\lim_{q\downarrow t}\lim_{n\to\infty} \tilde Z^n_q.
\end{align}
Moreover $P(Z_t>0)=1$, for all $t$.
\end{lemma}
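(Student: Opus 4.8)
The plan is to repeat, for the density martingales $Z^n$, the Koml\'os-type construction from the proof of Theorem~\ref{bank_account}, and then to use the contiguity hypothesis $P\lhd(Q^n)$ to prevent the resulting supermartingale from ever reaching zero. Since $Q^n\sim P$, each $Z^n$ has a c\`adl\`ag, strictly positive $P$-martingale version with $Z^n_0=1$, so $E_P[Z^n_q]=1$ for every $q$ and the convex hull $\text{conv}(Z^n_q,Z^{n+1}_q,\dots)$ is bounded in $L^1(P)$, hence in $L^0$. With $\mathcal D=([0,T^*]\cap\Q)\cup\{T^*\}$, I would invoke Lemma~A.1.1 of \cite{Delb:Schach:1994} and diagonalize over $q\in\mathcal D$ to obtain $\tilde Z^n=\sum_i\lambda^n_iZ^{k^n_i}\in\text{conv}(Z^n,Z^{n+1},\dots)$ with $\tilde Z^n_q\to Z'_q$ $P$-a.s., $Z'_q\in[0,\infty)$, for all $q\in\mathcal D$. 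Conditional Fatou gives, exactly as in Theorem~\ref{bank_account}, that $(Z'_q)_{q\in\mathcal D}$ is a discrete $P$-supermartingale; Doob's upcrossing lemma and right-continuity of the filtration then yield a c\`adl\`ag supermartingale $(Z_t)_{0\le t\le T^*}$ with $Z_t=\lim_{q\downarrow t}Z'_q$ for $t<T^*$, $Z_{T^*}=Z'_{T^*}$ (recall $T^*\in\mathcal D$), and $Z_0\le Z'_0=1$; this is the process of \eqref{defZ_again}, and I keep these same $\tilde Z^n$ for the rest.

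For the positivity I would first reduce to the terminal time: $Z$ is a nonnegative supermartingale, so with $A=\{Z_t=0\}\in\mathcal F_t$ the inequality $E_P[Z_{T^*}\ind_A]\le E_P[Z_t\ind_A]=0$ forces $Z_{T^*}=0$ on $A$, whence $P(Z_t=0)\le P(Z_{T^*}=0)$ for every $t$; it thus suffices to show $P(Z'_{T^*}=0)=0$. Here the contiguity enters. Since $Q^n\sim P$ we have $1/Z^n_{T^*}=dP/dQ^n$ with $E_{Q^n}[1/Z^n_{T^*}]=1$, so Markov's inequality gives $Q^n(Z^n_{T^*}\le\lambda)\le\lambda$ for all $\lambda>0$ and all $n$; by the uniform absolute continuity reformulation of $P\lhd(Q^n)$ recalled after Theorem~\ref{aa1result} (valid since $P\ll Q^n$), for each $j$ there is $\eta_j>0$ with $\sup_nP(Z^n_{T^*}\le\eta_j)\le 1/j$, and we may take $\eta_j\downarrow0$. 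Fix $j$. On $\{Z'_{T^*}=0\}$ we have $\tilde Z^n_{T^*}=\sum_i\lambda^n_iZ^{k^n_i}_{T^*}\to0$, hence $\sum_{i:Z^{k^n_i}_{T^*}\ge\eta_j}\lambda^n_i\le\eta_j^{-1}\tilde Z^n_{T^*}\to0$, so $\sum_i\lambda^n_i\ind_{\{Z^{k^n_i}_{T^*}<\eta_j\}}\to1$ along $n$ on that event; therefore $\liminf_n\sum_i\lambda^n_i\ind_{\{Z^{k^n_i}_{T^*}<\eta_j\}}\ge\ind_{\{Z'_{T^*}=0\}}$, and Fatou together with $E_P[\sum_i\lambda^n_i\ind_{\{Z^{k^n_i}_{T^*}<\eta_j\}}]=\sum_i\lambda^n_iP(Z^{k^n_i}_{T^*}<\eta_j)\le1/j$ yields $P(Z'_{T^*}=0)\le1/j$. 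Letting $j\to\infty$ gives $P(Z'_{T^*}=0)=0$, and combined with the reduction above, $P(Z_t>0)=1$ for all $t\in[0,T^*]$.

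The part of the construction that overlaps with Theorem~\ref{bank_account} is routine. The genuine obstacle is the positivity: one cannot bound $P(\tilde Z^n_{T^*}\le\eta)$ by a union bound over the terms of the convex combination, whose number may grow and whose weights may be small, so positivity has to be squeezed out of the uniform tail estimate $\sup_nP(Z^n_{T^*}\le\eta_j)\to0$ by the weighted-indicator/Fatou argument above; producing that uniform estimate is exactly where the hypothesis $P\lhd(Q^n)$ is indispensable, and it carries essentially all of the lemma's content beyond the shared construction.
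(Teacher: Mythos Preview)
Your construction of $Z$ via Koml\'os/diagonalization and the reduction of positivity to the terminal time coincide with the paper's proof. The difference is in how you establish $P(Z'_{T^*}=0)=0$.

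The paper argues by contradiction: assuming $P(A)=\alpha>0$ with $A=\{Z_{T^*}=0\}$, it builds sets $A_n\subseteq A$ with $P(A_n)\ge\alpha/2$ on which $\tilde Z^n_{T^*}$ is small, so that the convex-combination measures $\tilde Q^n$ (with density $\tilde Z^n_{T^*}$) satisfy $\tilde Q^n(A_n)\to0$; since $\tilde Q^n$ is a finite convex combination of $Q^k$'s with $k\ge n$, one extracts indices $k_n\to\infty$ with $Q^{k_n}(A_n)\to0$, and contiguity $P\lhd(Q^n)$ forces $P(A_n)\to0$, a contradiction.

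Your route is different and more direct: you first convert contiguity, via $Q^n(Z^n_{T^*}\le\lambda)\le\lambda$ and the uniform absolute continuity reformulation, into a uniform small-density tail bound $\sup_nP(Z^n_{T^*}\le\eta_j)\le1/j$; then, on $\{Z'_{T^*}=0\}$, you observe that the weighted indicators $\sum_i\lambda^n_i\ind_{\{Z^{k^n_i}_{T^*}<\eta_j\}}\to1$ and conclude by Fatou that $P(Z'_{T^*}=0)\le1/j$. This avoids both the contradiction setup and the extraction of a single index $k_n$ from each convex combination; the price is that you use the $\varepsilon$--$\delta$ form of contiguity rather than the sequential one. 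Both arguments are correct; yours is a clean quantitative variant of the paper's qualitative one.
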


\begin{proof}
Let $\mathcal{D}=\left([0,T^*]\cap\Q\right)\cup\{T^*\}$. The processes $(Z^n_t)_{0\leq t\leq T^*}$ are positive martingales with $Z^n_0=1$. As in the proof of Theorem~\ref{bank_account} there exists a sequence $\tilde Z^n\in\text{conv}(Z^n,Z^{n+1},\dots)$ such that
$Z$ as in \eqref{defZ_again} is a c\`adl\`ag supermartingale, with  $0\leq Z_t<\infty$ for $t \in [0,T^*]$.
As $\tilde Z^n_0=1$, for all $n$, we have that $Z_0\leq1$.

It remains to show that, for all $t$, $P(Z_t>0)=1$. We will show that this holds for $T^*$, which implies the statement for all $t\leq T^*$. Indeed $Z_{T^*}>0$
a.s.~implies $E[Z_{T^*}|\mathcal{F}_t]>0$ a.s. By the supermartingale property,
$$Z_t\geq E[Z_{T^*}|\mathcal{F}_t]>0\quad\text{a.s.}$$

Assume now that for $A=\{Z_{T^*}=0\}$ we have that $P(A)=\alpha>0$. As $T^*\in\mathcal D$, we have that $\ind_A\tilde Z^n_{T^*}\to \ind_AZ_{T^*}=0$ a.s. This implies that, for all $\ep>0$,
\begin{equation}\label{to_zero}
P(\ind_A\tilde Z^n_{T^*}>\ep)\to0.
\end{equation}
Hence, for $\ep=2^{-N}$, there is $m_N\uparrow\infty$ such that, for all $n\geq m_N$, $P(\ind_A\tilde Z^n_{T^*}>2^{-N})<2^{-N}$. Define
$$A_n:=A\cap\{\ind_A\tilde Z^n_{T^*}\leq 2^{-N}\}\quad\quad\text{for $m_N\leq n< m_{N+1}$}.$$
For $n\geq m_{N_0}$, such that $2^{-N_0}\leq\frac{\alpha}{2}$ we have that
\begin{equation}\label{pff}
P(A_n)\geq P(A)-P(\ind_A\tilde Z^n_{T^*}>2^{-N})\geq \frac{\alpha}2.
\end{equation}

Define the probability measure $\tilde Q^n$ by $\frac{d\tilde Q^n}{dP}:=\tilde Z^n_{T^*}$.
Then, $\tilde Q^n\in \mathbf{M}^{n}_e$ as the density $Z^n_{T^*}$ is a convex combination of densities of equivalent probability measures $Q^k\in \mathbf{M}^{k}_e$, $k\geq n$.
For $m_N\leq n\leq m_{N+1}$, we have that
$$
\tilde Q^n(A_n)=E[\tilde Z^n_{T^*}\ind_A\ind_{\{\ind_A\tilde Z^n_{T^*}\leq 2^{-N}\}}]\leq 2^{-N}.
$$
This shows that $ \tilde Q^n(A_n) \to 0$. As $\tilde Q^n\in\text{conv}(Q^n,Q^{n+1},\dots)$, there is $k_n\geq n$ with $k_n\to\infty$ such that $Q^{k_n}(A_n)\to0$, for $n\to\infty$. As $P\lhd (Q^n)$ it is contiguous with respect to any subsequence of $(Q^{k_n})$ as well, so we should have $P(A_n)\to0$ which is a contradiction to (\ref{pff}). Hence $Z_{T^*}>0$ a.s.
\end{proof}

\begin{proof}[Proof of Theorem~\ref{supermdefllfm}]
Assume that NAA1 holds. Then by Theorem~\ref{aa1result} there exists a sequence of probability measures $Q^n\in\mathbf{M}^n_e$ such that $P\lhd(Q^n)$. Take a strictly positive supermartingale $Z$ which is induced by $(Q^n)$  and $\mathcal{D}=\left([0,T^*]\cap\Q\right)\cup\{T^*\}$ as in Lemma~\ref{komlos_again}.
For all $q\in\mathcal{D}$, denote $Z'_q := \lim_{n \to \infty}\tilde Z^n_q$ where $\tilde Z^n$ are the convex combinations
as in the proof of Lemma \ref{komlos_again}.
Let $X$ be such that $X_{T^*}\in\mathbf{K}^n$ for some $n$. We will show that $(Z_tX_t)_{0 \le t \le T^*}$  is a supermartingale. Indeed, let $r<q$, $r,q\in\mathcal{D}$.
Define,  $\tilde Q^n$ by $\frac{d\tilde Q^n}{dP}:=\tilde Z^n_{T^*}$. As shown in the proof of Lemma \ref{komlos_again},  $\tilde Q^n\in\mathbf{M}^n_e$.
This implies that $(\tilde Z^n_tX_t)_{0 \le t \le T^*}$ is a supermartingale.
Then, by Fatou, we get
\begin{align}
E[Z'_qX_q|\mathcal{F}_r] &= E[\lim_{n\to\infty}\tilde Z^n_qX_q|\mathcal{F}_r]\nonumber\\
&\leq \liminf_{n\to\infty}E[\tilde Z^n_qX_q|\mathcal{F}_r]\nonumber\\
&\leq \lim_{n\to\infty}\tilde Z^n_rX_r =Z'_rX_r.\nonumber
\end{align}
So $(Z'_qX_q)_{q\in\mathcal{D}}$ is a discrete supermartingale. Let now $s<t< T^*$ and $s_k\downarrow s$, $t_j\downarrow t$ for rational $s_k$, $t_j$ (for $t=T^*$ take $t_j\equiv T^*$). Then we have that
\begin{align}
E[Z_tX_t|\mathcal{F}_{s_k}] &= E[\lim_{j\to\infty} Z'_{t_j}X_{t_j}|\mathcal{F}_{s_k}]\nonumber\\
&\leq \liminf_{j\to\infty} E[Z'_{t_j}X_{t_j}|\mathcal{F}_{s_k}]\nonumber\\
&\leq Z'_{s_k}X_{s_k},\nonumber
\end{align}
where the  equality holds by the definition of $Z$ and by the right continuity of $X$, the first inequality is Fatou, the second inequality is the discrete supermartingale property. The right-continuity of the filtration together with the definition of $Z$ and the right continuity of $X$
gives
$$E[Z_tX_t|\mathcal{F}_{s}]=\lim_{k\to\infty}E[Z_tX_t|\mathcal{F}_{s_k}]\leq\lim_{k\to\infty}Z'_{s_k}X_{s_k}=Z_sX_s.$$
Hence, $ZX$ is a supermartingale.

For the converse, assume  that there is a supermartingale deflator for the large financial market. Suppose there exists an asymptotic arbitrage of first kind, that is, there exists a sequence $X^k_{T^*}\in\mathbf{K}^{n_k}$ such that $X^k_t\geq-\ep_k$, $0\leq t\leq T^*$, and $P(X^k_{T^*}\geq C_k)\geq\alpha$ with $\ep_k\to0$ and $C_k\to\infty$.
We have that $X^kZ$ are supermartingales, for all $k$. Hence, as $X^k_0=0$,
\begin{equation}\label{leq0}
E[X^k_TZ_T]\leq X^k_0Z_0=0.
\end{equation}
On the other hand, let $A_k:=\{X^k_{T^*}\geq C_k\}$. As $Z_0\leq1$ and by the properties of $X^k$,
\begin{equation}\label{>0}
E[X^k_{T^*}Z_{T^*}]\geq C_kE[Z_{T^*}\ind_{A_k}]-\ep_k E[Z_{T^*}]\geq C_kE[Z_{T^*}\ind_{A_k}]-\ep_k.
\end{equation}
By assumption  $P(A_k)\geq\alpha$, for all $k$.
We claim that there exists $\beta>0$ such that,  $P(\{Z_{T^*}>\beta\}\cap A_k)\geq\frac{\alpha}{2}$ for all $k$.  Suppose not, then
for each $j\geq 1$ and $\beta=\frac1{j}$, there is $k_j$ such that $P(\{Z_{T^*}>\frac1{j}\}\cap A_{k_j})<\frac{\alpha}{2}$ and hence
$P(\{Z_{T^*}\leq\frac1{j}\}\cap A_{k_j})\geq P(A_{k_j})-\frac{\alpha}{2}\geq\frac{\alpha}2$. Therefore
$$P(Z_{T^*}=0)=\lim_{j\to\infty}P(Z_{T^*}\leq\frac1{j})\geq \liminf_{j\to\infty}P(\{Z_{T^*}\leq\frac1{j}\}\cap A_{k_j})\geq\frac{\alpha}{2},$$
a contradiction to the integrability of $Z_{T^*} $ and $Z_{T^*}>0$ a.s. Equation (\ref{>0}) then implies that
$$E[X^k_{T^*}Z_{T^*}]\geq C_kE[Z_{T^*}\ind_{A_k\cap\{Z_{T^*}>\beta\}}]-\ep_k\geq C_k\beta\frac{\alpha}2-\ep_k,$$
which is strictly positive for $k$ large enough. This gives a contradiction to (\ref{leq0}).

We still have to prove (1) and (2). Assume that $S^i_t\geq -a$, $0\leq t\leq T^*$ for some $i$.
Define the trivial predictable $H_t=\ind_{]0,T^*]}(t)$, then, for $t\leq T^*$,
$$X_t=(H\cdot S^i)_t=S^i_t-S^i_0\geq -a-S^i_0.$$
Hence $X_{T^*}\in \mathbf{K} ^n$, for $n\geq i$. Therefore $(X_tZ_t)_{0 \le t \le T^*}$ is a supermartingale. This immediately gives that $S^i_tZ_t=X_tZ_t + S^i_0Z_t$ is a supermartingale as $S^i_0\geq0$, which shows  (1).

Denote $S^i=S$. For (2) note that we can stop $S$ such that, for all $n$, there is a $\kappa_n\geq 0$ and $|S^{\tau_n}_t|\leq\kappa_n$.
Then for $H^+_t=\ind_{\lsto 0,\tau_n\rsto}(t)$ and $H^-_t=-\ind_{\lsto 0,\tau_n\rsto}(t)$ we have that
$X_t=(H^+\cdot S)_t=S_{\tau_n\wedge t}-S_0\geq -\kappa_n-S_0$ and $-X_t=(H^-\cdot S)_t=-S_{\tau_n\wedge t}+S_0\geq -\kappa_n+S_0$.
Therefore $X_{T^*}$ and $-X_{T^*}$ are in $\mathbf{K}^n$, for $n\geq i$. Hence, for all $n$, $\pm Z_t(S_{\tau_n\wedge t}-S_0)$, $0\leq t\leq T^*$, are supermartingales. Now, $S_0=0$, so
$\pm Z_tS_{\tau_n\wedge t}$, $0\leq t\leq T^*$, are supermartingales. Stopped supermartingales are supermartingales, hence $\pm(ZS)^{\tau_n}$ are supermartingales, and therefore   $(ZS)^{\tau_n}$ is a martingale. This holds for all $n$, and so $(Z_tS_t)_{0\leq t\leq T^*}$ is a local martingale.
\end{proof}

In the sequel we apply the results to bond markets: again, for each $ T\in[0,T^*]$, $(P(t, T))_{0\leq t\leq T}$ is a strictly positive c\`adl\`ag stochastic process adapted to
$(\mathcal{F}_t)_{0\leq t\leq  T}$ with $P( T, T)=1$. We assume that, for fixed $t$, the function $ T\mapsto P(t, T)$ is almost surely right-continuous. Note, that in this section, we do not have any local boundedness assumptions on $P(t, T)$ or $\frac1{P(t,T)}$. We will again have to assume that in the case of a finite number of assets discounted with the num\'eraire $P(t,T^*)$ we will not have any arbitrage opportunities. This is again a consequence of the following assumption on existence of the $T^*$-forward measure. Consider any $0<T_1<\dots<T_n \leq T^*$ and define the cone $\mathbf{C}(T_1,\dots,T_n,T^*)$  as in (\ref{K}) where $S^i$ is defined as in  \eqref{defSi}.
Note that we do not assume here that $T^*\in\cT$.

\begin{assumption}\label{sigma}
For all finite collections of maturities $T_1<T_2<\dots<T_n \leq T^*$  there
exists a measure $Q \sim P|_{\mathcal{F}_{T^*}}$ such that for all $f\in \mathbf{C}(T_1,\dots,T_n,T^*)$ we have that $E_{Q}[f]\leq 0$.
\end{assumption}
The condition $E_{Q}[f]\leq 0$ means that the measure  $Q \in\mathbf{M}_e(T_1,\dots,T_n,T^*)$, where the definition of the set of separating measures is analogous as in (\ref{Me}).
Note that Assumption~\ref{sigma} implies the existence of an equivalent sigma-martingale measure for $S^1,\dots,S^n$ given as as in \eqref{defSi}, and therefore these processes are semimartingales, see \cite{Delb:Schach:1998}.
As in Definition~\ref{LFM}, any sequence $(T_i)_{i\in\N}$ in $[0,T^*]$ induces a large financial market.

\begin{definition}\label{naa1}
The bond market satisfies NAA1 w.r.t.~a sequence $(T_i)_{i\in\N}$ in $[0,T^*]$ if for the large financial market induced
by $(T_i)_{i\in\N}$ there does not exist an asymptotic arbitrage of first kind.
\end{definition}

\begin{theorem}\label{supermdefl}Fix a sequence $(T_i)_{i\in\N}$ in $[0,T^*]$.
The bond market satisfies NAA1 w.r.t.~$(T_i)_{i\in\N}$ if and only if there exists a strictly positive supermartingale deflator $(Z_t)_{0\leq t\leq T^*}$ for the large financial market induced by  $(T_i)_{i \in \N}$.
If $(T_i)_{i\in\N}$ is dense in $[0,T^*]$, then $\left(Z_t\frac{P(t, T)}{P(t,T^*)}\right)_{0\leq t\leq T}$ is a supermartingale for all $ T\leq T^*$.
\end{theorem}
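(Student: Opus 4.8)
The plan is to deduce the stated equivalence directly from Theorem~\ref{supermdefllfm}, applied to the large financial market induced by $(T_i)_{i\in\N}$, and then to treat a general maturity $T\le T^*$ by a density argument in the maturity variable, in the spirit of the second half of the proof of Theorem~\ref{th1} but with supermartingales replacing local martingales. First I would note that Assumption~\ref{sigma} forces $\mathbf{M}^n_e\ne\emptyset$ for all $n$, so that \eqref{emm} holds for the large financial market of Definition~\ref{naa1}; Theorem~\ref{supermdefllfm} then yields at once that NAA1 w.r.t.~$(T_i)_{i\in\N}$ is equivalent to the existence of a strictly positive supermartingale deflator $(Z_t)_{0\le t\le T^*}$ with $Z_0\le1$, strict positivity being exactly the last assertion of Lemma~\ref{komlos_again}.

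Assuming NAA1, fix such a $Z$. For each $i$ the process $S^i$ of \eqref{defSi} is strictly positive with $S^i_0>0$, hence satisfies the hypotheses of part~(1) of Theorem~\ref{supermdefllfm} (take $a=1$); therefore $(Z_tS^i_t)_{0\le t\le T^*}$ is a $P$-supermartingale, and by restriction $\bigl(Z_t\frac{P(t,T_i)}{P(t,T^*)}\bigr)_{0\le t\le T_i}$ is a supermartingale for every $i$. This settles all maturities belonging to the sequence.

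Now suppose $(T_i)_{i\in\N}$ is dense in $[0,T^*]$ and fix $T\le T^*$ not in the sequence, the cases $T=T^*$ (where $Z_t\cdot1=Z_t$) and $T\in(T_i)_{i\in\N}$ (done above) being trivial or already treated. Using density, pick $\tilde T_i\downarrow T$ with $\tilde T_i\ge T$ and set $X_t:=Z_t\frac{P(t,T)}{P(t,T^*)}$ and $X^i_t:=Z_t\frac{P(t,\tilde T_i)}{P(t,T^*)}$ on $[0,T]$; each $X^i$ is a nonnegative supermartingale on $[0,T]\subseteq[0,\tilde T_i]$ by the previous step. For each fixed $t\le T$, a.s.~right-continuity of $U\mapsto P(t,U)$ at $U=T$ gives $X^i_t\to X_t$ a.s.; in particular $X^i_0\to Z_0\frac{P(0,T)}{P(0,T^*)}<\infty$, so $E[X_t]\le\liminf_iE[X^i_t]\le\liminf_iX^i_0<\infty$ by Fatou and $X_t$ is integrable. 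Finally, for $s<t\le T$, conditional Fatou together with the supermartingale property of the $X^i$ gives $E[X_t\mid\mathcal{F}_s]=E[\lim_iX^i_t\mid\mathcal{F}_s]\le\liminf_iE[X^i_t\mid\mathcal{F}_s]\le\liminf_iX^i_s=X_s$ a.s. Since $Z$, $P(\cdot,T)$ and $P(\cdot,T^*)$ are càdlàg, so is $X$, and hence $\bigl(Z_t\frac{P(t,T)}{P(t,T^*)}\bigr)_{0\le t\le T}$ is a supermartingale, as claimed.

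The work here is light, so there is no serious obstacle; the one point to watch is that, absent any local boundedness assumption (in contrast to the setting of Theorem~\ref{th1}), one cannot promote the limit to a local martingale and must be content with the supermartingale inequality produced by Fatou — which, fortunately, has the right sign. The other point is that the convergence $P(t,\tilde T_i)\to P(t,T)$ is only needed at the two fixed times $s,t$ entering the conditional expectation, so the pointwise-in-$t$ a.s.~right-continuity of the maturity map suffices and the uniform nullset of Assumption~\ref{nullsetass} is not required here, precisely because no random stopping times appear in the argument.
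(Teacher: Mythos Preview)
Your proof is correct and follows essentially the same approach as the paper: the equivalence is read off directly from Theorem~\ref{supermdefllfm}, the supermartingale property for maturities in the sequence comes from part~(1) of that theorem, and the extension to arbitrary $T\le T^*$ is obtained by approximating from the right in the maturity variable and applying conditional Fatou. Your additional remarks on integrability and on why Assumption~\ref{nullsetass} is not needed here (fixed times only, no random stopping) are correct refinements that the paper leaves implicit.
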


\begin{proof}[Proof of Theorem~\ref{supermdefl}]
Everything follows by Theorem~\ref{supermdefllfm}. It only remains
   to show, that  $\frac{P(t,T)}{P(t,T^*)}Z_t$ is a supermartingale for each $ T\leq T^*$ which is not an element of the dense sequence. Note that for $T^*$ the statement holds, as $\frac{P(t,T^*)}{P(t,T^*)}Z_t=Z_t$ is a supermartingale. Let $T<T^*$.
Choose $\tilde T_i\downarrow T$ with $\tilde T_i$ elements of the dense sequence in $[0,T^*]$. Let $X_t:=\frac{P(t,T)}{P(t,T^*)}$ and $X^i_t:=\frac{P(t,\tilde T_i)}{P(t,T^*)}$. As, for each $i$,  $X^i_t>0$ a.s., for all $t$, we get by Theorem~\ref{supermdefllfm}, (1), that $Z_tX^i_t$ is a supermartingale. Hence
\begin{align}
E[X_tZ_t|\mathcal{F}_s] &=E[\lim_{i\to\infty}X^i_tZ_t|\mathcal{F}_s]\nonumber
&\leq \liminf_{i\to\infty}E[X^i_tZ_t|\mathcal{F}_s]\nonumber
&\leq
\lim_{i\to\infty}X^i_sZ_s=X_sZ_s,\nonumber
\end{align}
as $U\mapsto P(v,U)$ is right-continuous, for each v (and therefore $X^i_v\to X_v$ for $v=s,t$) and by Fatou.

\end{proof}

Finally we will show that under the weaker assumptions of this section we will still be able to define a generalized bank account.

\begin{theorem}\label{super_defl_ba}
Let $(T_i)$ be a dense sequence in $[0,T^*]$ such that NAA1 holds.
Let $(B^n_t)_{t\in[0,T^*]}$ be the sequence of roll-overs as in Definition~\ref{rollover}, where the refining partition $\{t^n_1,\dots, t^n_{k_n}\}$ is chosen such
$\bigcup_{n\in\N}\{t^n_1,\dots, t^n_{k_n}\}\subseteq(T_i)$. Then
there exists a sequence of convex combinations $\tilde{B}^n\in\text{conv}(B^n,B^{n+1},\dots)$ and a c\`adl\`ag stochastic process $(B_t)_{0\leq t\leq T^*}$ (the generalized bank account) such that
$$B_t=\lim_{q\downarrow t}\lim_{n\to\infty}\tilde{B}^n_q,$$
with $B_0\leq 1$ and $0\leq B_t<\infty$, for all $t\leq T^*$. The generalized bank account has the following properties.
\begin{enumerate}
\item The process $(V_t)_{0\leq t\leq T^*}$, where $V_t=Z_t\big(\frac{B_t}{P(t,T^*)}-\frac1{P(0,T^*)}\big)$, is a supermartingale, where $Z$ is the supermartingale deflator as in Theorem~\ref{supermdefl}. This implies that the bank account $B$ discounted with respect to $P(.,T^*)$ multiplied by $Z$ is a supermartingale as
    $$\frac{Z_t B_t}{P(t,T^*)}=V_t+\frac{Z_t}{P(0,T^*)}.$$
\item If $0<P(t,T)\leq1$, for all $ T\leq T^*$, then $P(B_t\ge1)=1$, for all $t\leq T^*$.
\end{enumerate}
\end{theorem}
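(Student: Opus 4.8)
The plan is to mimic the proof of Theorem~\ref{bank_account}, but with the supermartingale deflator $Z$ replacing the forward measure $Q^*$. The key observation is encoded in Lemma~\ref{selff}: the discounted roll-over satisfies $\frac{B^n_t}{P(t,T^*)}=\frac1{P(0,T^*)}+(\mathbf{H}^n\cdot\mathbf{S}^n)_t$ with $\mathbf{H}^n$ admissible, so that $\frac{B^n_t}{P(t,T^*)}-\frac1{P(0,T^*)}=(\mathbf{H}^n\cdot\mathbf{S}^n)_t$ has terminal value in $\mathbf{K}^n$. Since the partition points $\{t^n_1,\dots,t^n_{k_n}\}$ are chosen inside the dense sequence $(T_i)$, the strategy $\mathbf{H}^n$ indeed trades only in bonds of the large financial market induced by $(T_i)$. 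Hence, by Theorem~\ref{supermdefllfm}, each $\big(Z_t\big(\frac{B^n_t}{P(t,T^*)}-\frac1{P(0,T^*)}\big)\big)_{0\leq t\leq T^*}$ is a $P$-supermartingale, starting at value $0$.

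First I would set $M^n_t:=Z_t\big(\frac{B^n_t}{P(t,T^*)}-\frac1{P(0,T^*)}\big)$ and check these are supermartingales with $M^n_0=0$; note they need not be positive (since $\frac{B^n_t}{P(t,T^*)}-\frac1{P(0,T^*)}$ can be negative), so I cannot directly invoke the $L^0$-boundedness argument from the proof of Theorem~\ref{bank_account} on $M^n$ itself. Instead I would work with the positive supermartingales $N^n_t:=Z_t\frac{B^n_t}{P(t,T^*)}=M^n_t+\frac{Z_t}{P(0,T^*)}$: these are nonnegative (product of the nonnegative deflator and the positive discounted roll-over, or equivalently the sum of a supermartingale and the positive supermartingale $\frac{Z}{P(0,T^*)}$), and $N^n_0=Z_0\frac1{P(0,T^*)}\leq\frac1{P(0,T^*)}$, so $\{N^n_q\}$ is bounded in $L^1(P)$ hence in $L^0$ for each fixed $q$. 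Then I apply the Koml\'os-type argument exactly as in Theorem~\ref{bank_account}: on the countable dense set $\mathcal{D}=([0,T^*]\cap\Q)\cup\{T^*\}$, use Lemma~A.1.1 of \cite{Delb:Schach:1994} and a diagonalization to extract convex combinations $\tilde N^n\in\mathrm{conv}(N^n,N^{n+1},\dots)$ converging a.s.\ along $\mathcal{D}$ to finite limits, Fatou gives the discrete supermartingale property, and Doob's upcrossing lemma produces a c\`adl\`ag supermartingale. Taking the same convex weights $\lambda^n_i$ on the roll-overs themselves, set $\tilde B^n_q:=\sum_i\lambda^n_iB^i_q$ and define $B_t:=\lim_{q\downarrow t}\lim_{n\to\infty}\tilde B^n_q$, which exists and is finite because $P(.,T^*)$ is strictly positive, right-continuous and the deflator $Z$ is strictly positive (by Lemma~\ref{komlos_again}, $P(Z_t>0)=1$ for all $t$). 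The process $V_t:=Z_t\big(\frac{B_t}{P(t,T^*)}-\frac1{P(0,T^*)}\big)=\lim_{q\downarrow t}\lim_n\big(\tilde N^n_q-\frac{Z_q}{P(0,T^*)}\big)$ is then a supermartingale as the a.s.\ limit in the supermartingale closure, proving part~(1); the identity $\frac{Z_tB_t}{P(t,T^*)}=V_t+\frac{Z_t}{P(0,T^*)}$ is immediate. For $B_0\leq1$ I use $V_0\leq\lim_{q\downarrow 0}V'_q\leq V'_0=0$ together with $Z_0\leq1$, exactly as in Theorem~\ref{bank_account}. Part~(2) is identical to the corresponding part of Theorem~\ref{bank_account}: if $0<P(t,T)\leq1$ for all $T\leq T^*$ then each roll-over is a product of factors $\frac1{P(t^n_{i-1},t^n_i)}\geq1$, so $B^n_t\geq1$, this passes to convex combinations and then to the limits.

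The main obstacle is the loss of positivity of $M^n$: unlike in Theorem~\ref{bank_account} where one worked directly with the positive supermartingales $\frac{B^n}{P(.,T^*)}$ under $Q^*$, here $Z_t\big(\frac{B^n_t}{P(t,T^*)}-\frac1{P(0,T^*)}\big)$ is only a supermartingale starting at $0$ and can take negative values, so one must be slightly careful to run the Koml\'os extraction on the genuinely positive family $N^n=Z\frac{B^n}{P(.,T^*)}$ (to get $L^0$-boundedness and hence a.s.\ convergence of convex combinations along $\mathcal{D}$) and only afterwards subtract the correction term $\frac{Z}{P(0,T^*)}$. A second minor point to verify is that the strategies $\mathbf{H}^n$ from Lemma~\ref{selff} are admissible for the large financial market indexed by $(T_i)$ — this is precisely guaranteed by the hypothesis $\bigcup_n\{t^n_1,\dots,t^n_{k_n}\}\subseteq(T_i)$ — so that Theorem~\ref{supermdefllfm} applies to give $ZX^n$ supermartingale for $X^n_{T^*}=(\mathbf{H}^n\cdot\mathbf{S}^n)_{T^*}\in\bigcup_m\mathbf{K}^m$. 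Everything else is a verbatim transcription of the argument in Theorem~\ref{bank_account}.
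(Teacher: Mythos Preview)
Your approach is correct and essentially identical to the paper's: both invoke Lemma~\ref{selff} to place the discounted roll-overs (minus $1/P(0,T^*)$) in $\bigcup_m\mathbf{K}^m$, apply Theorem~\ref{supermdefl} to obtain $P$-supermartingales after multiplication by $Z$, and then run the Koml\'os/Fatou/upcrossing machinery from Theorem~\ref{bank_account}. The only cosmetic difference is that the paper applies the Koml\'os extraction directly to $V^n_t=Z_t\big(\tfrac{B^n_t}{P(t,T^*)}-\tfrac1{P(0,T^*)}\big)$, whereas you shift first and extract from the nonnegative $N^n=V^n+\tfrac{Z}{P(0,T^*)}$; since the shift is by a fixed integrable process this changes nothing substantive, though your version is arguably cleaner about the $L^0$-boundedness and the lower bound needed for Fatou (the paper's assertion ``$0\le V_t$'' is in fact imprecise, as you implicitly note). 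Your remark that $Z_0\le 1$ is needed for $B_0\le1$ is superfluous---$V_0\le 0$ alone gives $B_0=\big(\tfrac{V_0}{Z_0}+\tfrac1{P(0,T^*)}\big)P(0,T^*)\le1$---but harmless.
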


The interpretation of this Theorem is, that for any refining sequence of partitions, which does not produce an asymptotic arbitrage opportunity of first kind in the induced large financial market, there does exist a generalized bank account.
In particular, if the bond market does not allow an asymptotic arbitrage opportunity of first kind for any sequence of maturities in $[0,T^*]$ (for the respective induced large financial market as in Definition~\ref{LFM}), then any refining sequence of partitions gives a generalized bank account in the sense of Theorem~\ref{super_defl_ba}. If, moreover, the bond $P(t, T)\leq 1$, $0\leq t\leq T\leq T^*$, then we can say that $B_t$ is bounded from below by $1$ a.s. This corresponds to the case, where a non-negative short rate exists.

\begin{proof}
By Lemma~\ref{selff} we have that $\frac{B^n_{T^*}}{P(t,T^*)} - \frac{1}{P(0,T^*)} \in\mathbf{K}^{m_n}$ for some $m_n$ large enough. By NAA1 and Theorem~\ref{supermdefl} there exists a strictly positive c\`adl\`ag supermartingale $Z$ such that, for all $n$, $(V^n_t)_{t\in[0,T^*]}$ is a supermartingale, where $V^n_t:=Z_t\big(\frac{B^n}{P(t,T^*)}-\frac1{P(0,T^*)}\big)$, since all points of the partition defining the roll-over bond are contained in the dense sequence instrumental for the definition of $Z$. As in the proof of Theorem~\ref{bank_account} we get a sequence of convex combination $\tilde V^n_t\in\text{conv}(V^n,V^{n+1},\dots)$ and a c\`adl\`ag supermartingale $0\leq V_t<\infty$ such that $V_t=\lim_{q\downarrow t}\lim_{n\to\infty}\tilde V^n_q$. Moreover,
$V_0\leq 0$ as $V_0\leq\lim_{n\to\infty}\tilde V^n_0=0$.
Define $$B_t=\big(\frac{V_t}{Z_t}+\frac1{P(0,T^*)}\big)P(t,T^*).$$
 Similarly as in the proof of Theorem~\ref{bank_account}, we see that $B_t=\lim_{q\downarrow t}\lim_{n\to\infty}\tilde B^n_q$, where $\tilde B^n_q$ are the corresponding convex combinations of $B^n_q$, i.e.,
 $\tilde B^n_q=\big(\frac{\tilde V^n_q}{Z_q}+\frac1{P(0,T^*)}\big)P(q,T^*)$.
(Use the right continuity of $t\to P(t,T)$ and $t\to Z_t$.) Clearly $B_0=\big(\frac{V_0}{Z_0}+\frac1{P(0,T^*)}\big)P(0,T^*)\leq1$.
\end{proof}


\section{examples}\label{section7}

\subsection{A strict local martingale deflator}\label{benchmark-example}
In this section we consider the example touched upon in the introduction in more detail.
Let $S^*$ denote the growth optimal portfolio. In the benchmark approach fair prices of
a payoff $X$ at time $T>0$ are given by
$$
\pi_t(X) = 	S_t^* E_Q[\frac{X}{S_T^*}| \cF_t], $$
and, as a consequence, one obtains bond prices of the form
\begin{equation}\label{def:bond}
P(t,T)=E_Q \big[ \frac{S_t^*}{S^*_T}| \mathcal{F}_t \big].
\end{equation}
We give an example where the inverse of the growth optimal portfolio is related to a strict local
martingale. Consider the case where
$$ \frac{1}{S_t^*} = \frac{A(t)}{\parallel x + W_t \parallel^2}=:\xi_t $$
with a positive, deterministic, c\`adl\`ag function $A:[0,\infty) \mapsto (0,\infty)$, a four-dimen\-sional standard Brownian motion $W$ and $0 \not = x \in \R^4$.
Then $(\parallel x + W_t \parallel^2)_{t \ge 0}$ is a squared Bessel process of dimension four and
its inverse is a strict local martingale.

\noindent In this example, for each $T^*$,  there exists an equivalent probability measure $Q^*$ such that all bond prices with maturity $T\leq T^*$ discounted by the num\'eraire $P(.,T^*)$ are martingales under $Q^*$. So we are in the situation of Theorem~\ref{th1}. Indeed,  let $\alpha=\frac1{E_Q[\xi_{T^*}]}$ and define
$$\frac{dQ^*}{dQ}=\alpha\xi_{T^*}.$$
 The density process $Z^*_t$, $0\leq t\leq T^*$, satisfies
$$Z_t^*=\alpha E_{Q}[\xi_{T^*}|\mathcal{F}_t]=\alpha\xi_tP(t,T^*).$$
Therefore
$$Z_t^*\frac{P(t,T)}{P(t,T^*)}=\alpha E_{Q}[\xi_T|\mathcal{F}_t],$$
hence $\frac{P(t,T)}{P(t,T^*)}$, $0\leq t\leq T$, is a martingale with respect to $Q^*$.

Using Markovianity of $W$ and integrating over the transition density of  squared Bessel processes one obtains the following explicit expression for $P(t,T)$.
\begin{align} \label{tslocal}
P(t,T)
&= \frac{A(T)}{A(t)} E_Q\left[\frac{\parallel x + W_t \parallel^2}{\parallel x + W_T \parallel^2}\,\Big| \,\cF_t\right]
=\frac{A(T)}{A(t)} \left(1-e^{- \frac{\parallel x+W_t\parallel^{-2}}{2(T-t)}}\right),
\end{align}
see Equation 8.7.17 in \cite{HeathPlaten}.

Assume that $B^n_t$ is defined as in Definition~\ref{rollover}, where we additionally assume that there exists a constant $K\geq1$ such that, for all $n$, $\max_{1\leq i\leq k_n}|t^n_i-t^n_{i-1}|\leq \frac{K T^*}{k_n}$. Then we get the following.
\begin{lemma}\label{limit}
$B^n_t \to B_t:=\frac{A(0)}{A(t)}$ a.s., for $n\to\infty$.
\end{lemma}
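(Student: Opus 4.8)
The plan is to compute $B^n_t$ explicitly using the closed-form expression \eqref{tslocal} for $P(t,T)$ and then pass to the limit as the mesh of the partition shrinks to zero. First I would, for a fixed $t = t^n_j$, write out
\[
B^n_{t^n_j} = \prod_{i=1}^j \frac{1}{P(t^n_{i-1}, t^n_i)}
= \prod_{i=1}^j \frac{A(t^n_{i-1})}{A(t^n_i)} \cdot \prod_{i=1}^j \left(1 - e^{-\frac{\|x + W_{t^n_{i-1}}\|^{-2}}{2(t^n_i - t^n_{i-1})}}\right)^{-1}.
\]
The first product telescopes to $A(0)/A(t^n_j)$, which is already the candidate limit (up to the off-grid interpolation, handled at the end via right-continuity of $A$ and of $t\mapsto P(t,t^n_j)$, since $P(t^n_j, t^n_j) = 1$). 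So the entire content of the lemma reduces to showing that the second product converges to $1$ almost surely as $n \to \infty$.

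For that second product, I would take logarithms and show $\sum_{i=1}^{k_n} \log\left(1 - e^{-\frac{\|x + W_{t^n_{i-1}}\|^{-2}}{2(t^n_i - t^n_{i-1})}}\right)^{-1} = -\sum_{i=1}^{k_n} \log\left(1 - e^{-\frac{\|x + W_{t^n_{i-1}}\|^{-2}}{2(t^n_i - t^n_{i-1})}}\right) \to 0$ a.s. Each summand is $-\log(1 - \de^n_i)$ with $\de^n_i := e^{-\frac{\|x + W_{t^n_{i-1}}\|^{-2}}{2(t^n_i - t^n_{i-1})}}$; since $t^n_i - t^n_{i-1} \le K T^*/k_n \to 0$, each $\de^n_i \to 0$, and $-\log(1-\de^n_i) \le 2\de^n_i$ once $\de^n_i \le \tfrac12$. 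Hence it suffices to prove $\sum_{i=1}^{k_n} \de^n_i \to 0$ a.s. On the event (of probability one) where $\omega \notin$ the zero set of the squared Bessel process, one has $\min_{0 \le s \le T^*} \|x + W_s(\omega)\|^2 =: m(\omega) > 0$ by path-continuity, so $\de^n_i \le \exp\!\left(-\frac{k_n\, m(\omega)}{2 K T^*}\right)$ uniformly in $i$, giving $\sum_{i=1}^{k_n} \de^n_i \le k_n \exp\!\left(-\frac{k_n\, m(\omega)}{2 K T^*}\right) \to 0$. This is the crux of the argument.

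The main obstacle is exactly the control of $\sum_i \de^n_i$: the exponent $\|x+W_{t^n_{i-1}}\|^{-2}$ blows up whenever the four-dimensional Brownian motion started at $x \ne 0$ comes close to the origin, so the naive bound "$\de^n_i$ small" is not uniform in $\omega$ without a pathwise argument. The resolution, as above, is that a continuous path on the compact interval $[0,T^*]$ stays a strictly positive distance from $0$ (for a.e.\ $\omega$, since the four-dimensional Bessel process never hits $0$), which makes the per-step bound exponentially small in $k_n$ and beats the factor $k_n$ from the number of terms; here the hypothesis $\max_i |t^n_i - t^n_{i-1}| \le K T^*/k_n$ is what converts "mesh $\to 0$" into the quantitative decay $k_n^{-1}$ in the exponent. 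Finally I would note that the convergence is genuinely pathwise (a.s.), not merely in probability, and then extend from grid points $t = t^n_j$ to arbitrary $t \in [0,T^*]$ using $B^n_t = B^n_{t^n_j} P(t, t^n_j)$ for $t^n_{j-1} < t \le t^n_j$, where $t^n_j \downarrow t$ and $P(t, t^n_j) \to P(t,t) = 1$ by right-continuity in the maturity variable, together with $A(t^n_{j-1}) \to A(t^{-})$; a small care is needed here since $A$ is only assumed c\`adl\`ag, but as the telescoped factor is $A(0)/A(t^n_j)$ evaluated at grid points converging down to $t$, right-continuity of $A$ delivers $A(0)/A(t)$ in the limit.
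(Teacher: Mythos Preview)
Your overall strategy---telescoping the $A$-factors and showing the remaining product converges to $1$ via a uniform pathwise bound exploiting the mesh hypothesis---matches the paper's proof. However, the key inequality is stated with the wrong quantity. You set $m(\omega) = \min_{0 \le s \le T^*} \|x+W_s(\omega)\|^2$ and claim $\delta^n_i \le \exp\bigl(-k_n\, m(\omega)/(2KT^*)\bigr)$. But $\delta^n_i = \exp\bigl(-\|x+W_{t^n_{i-1}}\|^{-2}\,/\,(2(t^n_i - t^n_{i-1}))\bigr)$, so to bound $\delta^n_i$ from above you need a \emph{lower} bound on $\|x+W_s\|^{-2}$, i.e.\ an \emph{upper} bound on $\|x+W_s\|^2$. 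The correct quantity is $m_t := \min_{s \le t} \|x+W_s\|^{-2} = \bigl(\max_{s \le t}\|x+W_s\|^2\bigr)^{-1}$, which is a.s.\ strictly positive simply by continuity of the path on a compact interval; this is exactly what the paper uses.

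Your discussion of the ``main obstacle'' is correspondingly inverted: when the path approaches the origin, $\|x+W\|^{-2}$ is large and $\delta^n_i$ is tiny, which is harmless; the only risk is the path wandering far from the origin, making $\|x+W\|^{-2}$ small---but that is controlled trivially by boundedness of continuous paths on $[0,T^*]$, and the fact that the four-dimensional Bessel process avoids $0$ plays no role here. With $m(\omega)$ replaced by $m_t$ your argument goes through and coincides with the paper's (which bounds the product directly by $(1-e^{-k_n m_t/(2KT^*)})^{-k_n}\to 1$ rather than passing to logarithms, an inessential variation).
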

\begin{proof}
Let $0=t^n_0<t^n_1\dots<t^n_{k_n}=T^*$ and
$\delta_n:=\max_{1\leq i\leq {k_n}}(t^n_{i}-t^n_{i-1})$. By assumption $\delta_n\leq \frac{K T^*}{k_n}\to 0$ for $n\to\infty$.  Fix $t$, then for each $n$ there is $j_n\leq k_n$ such that $t^n_{j_n-1}< t\leq t^n_{j_n}$. We have that
$$B^n_t =\prod_{i=1}^{j_n}\frac{1}{P(t^n_{i-1},t^n_i)}P(t,t^n_{j_n})
=\frac{A(0)}{A(t^n_{j_n})}\prod_{i=1}^{j_n}\big(1-e^{-\frac{\|x+W_{t^n_{i-1}}\|^{-2}}{2(t^n_i-t^n_{i-1})}}\big)^{-1}P(t,t^n_{j_n}).
$$
By the right continuity of $A$ we have that $A(t^n_{j_n})\to A(t)$ for $n\to\infty$ and it is clear that $P(t,t^n_{j_n})\to1$ for $n\to\infty$. Moreover, as $0<1-e^{-\frac{\|x+W_{t^n_{i-1}}\|^{-2}}{2(t^n_i-t^n_{i-1})}}\leq 1$, for all $i$, we get
$$\prod_{i=1}^{j_n}\big(1-e^{-\frac{\|x+W_{t^n_{i-1}}\|^{-2}}{2(t^n_i-t^n_{i-1})}}\big)^{-1}\geq1.$$
We will now show that
$$\lim_{n\to\infty}\prod_{i=1}^{j_n}\big(1-e^{-\frac{\|x+W_{t^n_{i-1}}\|^{-2}}{2(t^n_i-t^n_{i-1})}}\big)^{-1}\leq1 \quad \text{a.s.},
$$
which then implies that $B^n_t \to B_t=\frac{A(0)}{A(t)}$. Let $m_t:=\min_{s\leq t}\|x+W_s\|^{-2}$ be the running minimum of the inverse of the squared Bessel process of dimension 4. We have that $m_t>0$ a.s. By assumption $\delta_n\leq \frac{K T^*}{k_n}$. Hence
\begin{align}
\prod_{i=1}^{j_n}\big(1-e^{-\frac{\|x+W_{t^n_{i-1}}\|^{-2}}{2(t^n_i-t^n_{i-1})}}\big)^{-1}
&\leq\prod_{i=1}^{j_n}\big(1-e^{-\frac{k_nm_t}{2K T^*}}\big)^{-1}\nonumber\\
=\big(1-e^{-\frac{k_nm_t}{2K T^*}}\big)^{-j_n}&\leq \big(1-e^{-\frac{k_nm_t}{2K T^*}}\big)^{-k_n},\nonumber
\end{align}
as, clearly, $0<1-e^{-\frac{k_nm_t}{2K T^*}}\leq 1$. But as for almost all $\omega$ we have that $m_t(\omega)>0$ and as
$(1-e^{-ak_n})^{-k_n}\to1$ for $n\to\infty$ and $a>0$ we get that
$$\lim_{n\to\infty}\big(1-e^{-\frac{k_nm_t}{2K T^*}}\big)^{-k_n}=1\text{ a.s.}
$$
\end{proof}

By Theorem~\ref{bank_account},  $(B_t)_{0\leq t\leq T^*}$ discounted with respect to the num\'eraire $P(.,T^*)$ is a $Q^*$--supermartingale. Lemma~\ref{limit} and the definition of the measure $Q^*$ moreover gives  that $\big(\frac{B_t}{P(t,T^*)}\big)_{0\leq t\leq T^*}$  is a strict local martingale under $Q^*$ as

$$Z^*_t\frac{B_t}{P(t,T^*)}=\alpha\xi_tB_t=\alpha\frac{A(0)}{\|x+W_t\|^2};$$
hence it does not qualify as strong nor as weak num\'eraire.

Besides the term structure given by $(P(t,T))_{0 \le t \le T}$,  we observe a second virtual term structure in correspondance with Definition \ref{def:virtual}: if there was enough liquidity such that the bank account qualifies as num\'eraire,
one can price with the supermartingale deflator
$$
\frac{1}{E_{Q^*}[\frac{B_T}{P(T,T^*)}]} \frac{B_T}{P(T,T^*)}  \frac{1}{B_T},
$$
which stems from changing measure by the supermartingale $ \frac{B_T}{P(T,T^*)} $.
Pricing $ 1 $ at time $ T $ with respect to this deflator, we obtain
the  virtual term structure
\begin{align}
\tilde{P}(0,T) &= B_0 E_{Q^*} \Big[  \frac{B_T}{P(T,T^*)} \frac{1}{E_{Q^*}[\frac{B_T}{P(T,T^*)}]} \frac{1}{B_T} \Big] \nonumber \\
&= \frac{B_0}{B_T} = \frac{A(T)}{A(0)}. \label{tsvirtual}
\end{align}
The illiquidity premium for the claim $Y\equiv 1$ at time $0$ can be computed from \eqref{eq:liquidity} and we obtain
\begin{align*}
 1- \big(E_{Q^*}\big[\frac{B_T}{P(T,T^*)}\big]\big)^{-1} &= 1-\frac{1}{ A(0) \, E_{Q^*}\big[(A(T)P(T,T^*))^{-1}\big]} \\
 &= 1-\frac{A(T^*)}{ A(0) \, E_{Q^*}\Big[\Big( 1-e^{- \frac{\parallel x+W_T\parallel^{-2}}{2(T^*-T)}}   \Big)^{-1}\Big] } ,
\end{align*}
by \eqref{tslocal}. The expectation is given in terms of the transition density of Bessel processes.

In Figure \ref{fig:localmartingaledeflator} we consider the case where $A_t \equiv 1$ and show the term structure given by \eqref{tslocal} in comparison to the virtual  term structure given in \eqref{tsvirtual}.  In Section \ref{sec:exotic} we will meet an example where investing in the roll-over account may even lead to a total loss of invested money.

\begin{figure}[h]
\begin{overpic}[width=8cm]{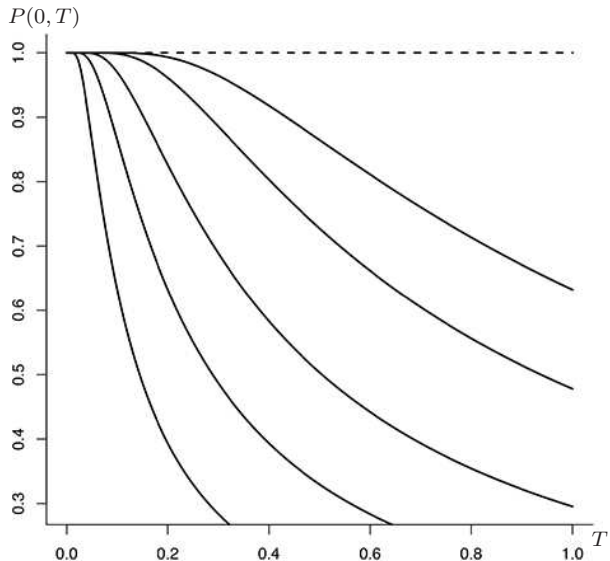}
\put(98,4){{\footnotesize $T$}}
\put(1,91){{\footnotesize $P(0,T)$}}
\end{overpic}
\caption{\label{fig:localmartingaledeflator} This figure illustrates the two term structures: the term structure from Equation \eqref{tslocal} with $A_t\equiv 1$, is shown by the lines $T\mapsto P(0,T)$ for different $\parallel x \parallel^{-2} \in\{0.2,0.4,0.7,1.3,2\}$. The constant virtual term structure $T \mapsto \tilde P(0,T) \equiv 1$ from Equation \eqref{tsvirtual} is represented by the dashed line.}
\end{figure}

\subsection{Bond markets driven by fractional Brownian motion}
The purpose of this section is to illustrate the applicability of our approach beyond semimartingale models. The semimartingale assumption is standard in the literature on bond markets (see  for example \cite{DoeberleinSchweizer2001} and the referenced literature therein), while we were able to show that this is in general not necessary.
In this regard, we study some models for  bond markets driven by fractional Brownian motion. In this models, the bank account or the bond prices may no longer be semimartingales in this setting, while \emph{discounted} prices are, such that an appropriate no-arbitrage condition still holds.

More precisely, we first consider  a locally integrable short rate process which is given by a time-inhomogeneous variant of fractional Brownian motion and show that discounted bond prices turn out to be martingales. While in this example the bank account is absolutely continuous and hence a semimartingale, we show in Remark \ref{rem:FB} that it is also possible to consider a bank account directly driven by the fractional Brownian motion. In this setting, no short rate exists and  the bank account is no longer a semimartingale. 

Regarding \eqref{bondprice} we need to obtain the conditional distribution of a fractional
Brownian motion, which we establish following \cite{PipirasTaqqu:2001};  see also \cite{FinkZaehleKlueppelberg2012} for related results.
A fractional Brownian motion (FBM) with Hurst parameter $H\in(0,1)$ is a zero-mean stationary Gaussian process $Z=Z^H$ with covariance function
\begin{align*}
E[Z_s Z_t] &= \frac{1}{2} \big( |s|^{2H} + |t|^{2H} - |s-t|^{2H} \big).
\end{align*}
For $H=\frac{1}{2}$, $Z$ is a standard Brownian motion. If $H>\frac{1}{2}$ the fractional Brownian motion has long-range dependence. Moreover, for $H\not =\frac{1}{2}$, $Z$ is no longer a semimartingale.

To ease the exposition we consider $H>\frac{1}{2}$ only. Define the right-sided fractional Rieman-Liouville integral of order $\alpha >0$ by 
\begin{align*}
\big( I_{t-}^\alpha f\big) (s) &:= \frac{1}{\Gamma(\alpha)} \int_s^t f(u) (u-s)^{\alpha-1} du, \quad s \in (0,t).
\end{align*}
$I^0$ is the identity. The fractional derivative of order $0<\alpha<1$ is denoted by $I^{-\alpha}$, i.e.
\begin{align*}
\big( I_{t-}^{-\alpha} f\big) (s) &:= - \frac{1}{\Gamma(1-\alpha)} \frac{d}{ds} \int_s^t f(u) (u-s)^{-\alpha} du,
\quad s \in (0,t).
\end{align*}
For the further analysis it will be useful to consider $\kappa := H-\frac{1}{2}$ instead of $H$ itself.
Fix a finite time horizon $T^*>0$.
Let\footnote{
We write $\cdot^{-\kappa} f(\cdot)$  short for the function $u \mapsto u^{-\kappa} f(s)$.}
\begin{align*}
(K_\kappa \, f)(s) := c_\kappa s^{- \kappa} \, \Big( I_{T^*-}^{\kappa} (\cdot^{\kappa} f(\cdot) ) \Big) (s),
\end{align*}
with constant $c_\kappa = \sqrt{ \frac{\pi \kappa (2 \kappa +1) }{\Gamma(1-2\kappa) \sin(\pi \kappa) } }$.
The adjoint operator of $K_\kappa$ is
\begin{align*}
(K^*_\kappa \, f)(s) := c_\kappa s^{- \kappa} \, \Big( I_{T^*-}^{-\kappa} (\cdot^{\kappa} f(\cdot)) \Big) (s).
\end{align*}
It turns out that for $H>\frac{1}{2}$ the proper space of deterministic integrands to consider is, see \cite{PipirasTaqqu:2001},
$$ \Lambda_{T^*}^\kappa := \big\{ f: \exists \, \phi_f \in L^2[0,T^*], \ \text{s.t. } f(s) = (K_\kappa^* \phi_f) (s) \big\}.
$$
Then $ \Lambda_{T^*}^\kappa$ is a Hilbert space with corresponding norm
$$ \parallel f \parallel _{\Lambda_{T^*}^\kappa} := \parallel K_\kappa f \parallel_{L^2([0,T^*])}.$$
The integral of $f\in\Lambda_{T^*}^\kappa$ w.r.t.~the fractional Brownian motion $Z$ is obtained as the limit of $\int f_n dZ$ with elementary $f_n$ s.t.\
$\parallel f_n - f \parallel_{\Lambda_{T^*}^\kappa} \to 0$. Of course, for elementary $f$, say $f= \sum a_i \ind_{(s_i,t_i) }$
the integral equals $\int f dZ = \sum a_i (Z_{t_i}-Z_{s_i})$.
Letting $k^\kappa(t,s):=(K^\kappa \ind_{[0,t]})(s)$  the covariance function of $Z$   has the following representation
\begin{align}\label{eq:Rkappa}
R_\kappa (t,s) := E[Z_s Z_t] = \int_0^t k^\kappa(t,w) k^\kappa(s,w) dw.
\end{align}
For $\kappa = 0$ we obtain $K_\kappa = \text{id}$, i.e.\ $R_\kappa(t,s) = s \wedge t$
which is the covariance function of a Brownian motion.
From the representation in \eqref{eq:Rkappa} it is immediate that
\begin{align}\label{eq:reprFBM}
Z_t \stackrel{\ccL}{=} \int_0^t k^\kappa(t,w) dB_w
\end{align}
where $B$ is a standard Brownian motion. This result was already discovered in the seminal work of \cite{Mandelbrot:1968} and leads
to the following representation of conditional expectations (Theorem 7.1 in \cite{PipirasTaqqu:2001}): let $\cF^Z_t := \sigma(Z_s: 0 \le s \le t)$.
For $0<s<t$
\begin{align}\label{eq:condexpectationFBM}
E \big[ Z_u | \cF^Z_t \big] = Z_t + \int_0^t \psi_u(t,w) dZ_w
\end{align}
with
$$ \psi_u(t,w) = \psi^\kappa_u(t,w) :=  \frac{\sin (\pi \kappa)}{\pi} w^{-\kappa} (t-w)^{- \kappa} \int_t^u \frac{z^\kappa (z-t)^\kappa}{z-w} dz. $$
Note that
$$ \psi_u(t,w) = w^{-\kappa} (I_{t-}^{-\kappa} ( I_{u-}^\kappa (\cdot^\kappa \ind_{[t,u)}))) (w). $$
Proceeding similarly, we are able to compute the conditional covariance of $Z$.
\begin{lemma} For $0 < t < u, v$, \label{lem:condVarianceFBM}
\begin{align*}
E[Z_u Z_v|\cF^Z_t]  = E[Z_u|\cF_t^Z]\cdot E[Z_v|\cF_t^Z] + \int_t^{ u \wedge v} k^\kappa(u,w) k^\kappa(v,w) dw.
\end{align*}
\end{lemma}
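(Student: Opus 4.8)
The plan is to exploit the joint Gaussianity of $Z$ together with the moving-average representation \eqref{eq:reprFBM}. Recall that there is a standard Brownian motion $B$ with $Z_t = \int_0^t k^\kappa(t,w)\,dB_w$ almost surely for every $t$, and that, following \cite{PipirasTaqqu:2001}, the natural filtration of $Z$ coincides with that of $B$, so that $\cF^Z_t = \cF^B_t$. First I would split, for $u>t$,
$$
Z_u = \int_0^t k^\kappa(u,w)\,dB_w + \int_t^u k^\kappa(u,w)\,dB_w =: P_tZ_u + R_u ,
$$
where $P_tZ_u$ is $\cF^Z_t$-measurable and, by \eqref{eq:condexpectationFBM}, equals $E[Z_u\,|\,\cF^Z_t]$, while $R_u$ is a Wiener integral over $(t,u]$, hence a centred Gaussian random variable independent of $\cF^Z_t$; the same decomposition applies to $v$.

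Next I would expand the product $Z_uZ_v = P_tZ_u\,P_tZ_v + P_tZ_u\,R_v + R_u\,P_tZ_v + R_uR_v$ and take the conditional expectation given $\cF^Z_t$. The first term is $\cF^Z_t$-measurable; the two mixed terms vanish, since $R_u,R_v$ are centred and independent of $\cF^Z_t$, so that e.g.\ $E[P_tZ_u\,R_v\,|\,\cF^Z_t] = P_tZ_u\,E[R_v] = 0$; and for the last term independence gives $E[R_uR_v\,|\,\cF^Z_t] = E[R_uR_v]$, which by the It\^o isometry for deterministic integrands equals $\int_t^{u\wedge v} k^\kappa(u,w)\,k^\kappa(v,w)\,dw$. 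Combining these and using $P_tZ_u\,P_tZ_v = E[Z_u\,|\,\cF^Z_t]\,E[Z_v\,|\,\cF^Z_t]$ yields the claimed identity.

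I expect the only delicate point to be the identification $\cF^Z_t = \cF^B_t$ and the ensuing independence of the residual $R_u$ from $\cF^Z_t$ — this is precisely the content of the canonical representation of \cite{PipirasTaqqu:2001}, and it is what makes \eqref{eq:condexpectationFBM} a genuine $L^2$-projection formula. If one prefers to avoid invoking it, one can argue purely in terms of joint Gaussianity: the conditional law of $(Z_u,Z_v)$ given $\cF^Z_t$ is Gaussian with conditional means given by \eqref{eq:condexpectationFBM} and a \emph{deterministic} conditional covariance, so it suffices to compute $\mathrm{Cov}(Z_u,Z_v\,|\,\cF^Z_t) = R_\kappa(u,v) - \mathrm{Cov}\big(E[Z_u|\cF^Z_t],E[Z_v|\cF^Z_t]\big)$ and to check, via \eqref{eq:Rkappa} and \eqref{eq:reprFBM}, that this deterministic quantity reproduces $\int_t^{u\wedge v} k^\kappa(u,w)\,k^\kappa(v,w)\,dw$. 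Either route reduces the lemma to routine manipulations and no further analytic subtleties arise.
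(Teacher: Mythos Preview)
Your proposal is correct and follows essentially the same route as the paper: both use the Volterra representation $Z_u=\int_0^u k^\kappa(u,w)\,dB_w$, split the Wiener integral at $t$, and exploit the independence of the Brownian increments on $(t,u]$ from $\cF^B_t$ to isolate the covariance term $\int_t^{u\wedge v} k^\kappa(u,w)k^\kappa(v,w)\,dw$. If anything you are more explicit than the paper about the identification $\cF^Z_t=\cF^B_t$ and the vanishing of the cross terms, which the paper subsumes under ``standard Brownian motion has independent increments''.
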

\begin{proof}
The proof mainly relies on \eqref{eq:reprFBM}. We have that 
\begin{align}
\lefteqn{E\Big[ \int_0^u k^\kappa (u,w) dB_w \int_0^v k^\kappa (v,w) dB_w | \cF^B_t \Big] } \hspace{2cm} \nonumber\\
& = \int_0^t  k^\kappa(u,w) dB_w \, \int_0^t k^\kappa (v,w) dB_w  \label{temp237}\\
&+ E\Big[ \int_t^u k^\kappa(u,w) dB_w \int_t^v k^\kappa(v,w) dB_w |\cF^B_t\Big]. \nonumber
\end{align}
As standard Brownian motion has independent increments, the last expectation is easily computed, leading to the last term in our result.
It remains to represent the first addend in terms of $Z$. Using \eqref{eq:reprFBM} we obtain 
\begin{align*}
E[Z_u|\cF_t^Z] &= \int_0^t k^\kappa(u,w) dB_w
\end{align*}
and we conclude. \end{proof}

With this results at hand we are ready to consider bond markets where the short rate is driven by  fractional Brownian motion.
Fix a measure $Q\sim P$ and assume that $Z$ is a FBM with parameter $\kappa$ under $Q$.  As num\'eraire we consider the bank account $B(t)=\exp(\int_0^t r_u du)$   where the short rate is given by
\begin{align} r_t = \mu(t) + \sigma(t) Z_t, \quad 0 \le t \le T^*,\label{eq:rFBM} \end{align}
with $\mu: [0,T^*] \mapsto \R^+$  in $L^1[0,T^*]$ and  $\sigma: [0,T^*] \mapsto \R^+$ being
an element of $\Lambda_{T^*}^\kappa.$ Bond prices are given by
\begin{align} \label{fbm:bonds}
 P(t,T) = E_Q[ \frac{B_t}{B_T} | \cF_t], \qquad 0 \le t \le T \le T^*. 
\end{align}

\begin{proposition}\label{bondfractional}
Let $\mu^*(t,T) := \int_t^T \mu(u) du$ and $\sigma^*(t,T) := \int_t^T \sigma(u) du$.
Under \eqref{eq:rFBM} the bond prices equal
\begin{align*}
P(t,T) &= \exp\bigg[ \mu^*(t,T) + \sigma^*(t,T) Z_t + \int_t^T \int_0^t \sigma(u)   \psi_u(t,w) dZ_w \, du \nonumber\\
&+\frac{1}{2} \int_t^T \int_t^T \int_0^{u \wedge v} \sigma(u) \sigma(v) k^\kappa(u,w) k^\kappa(v,w) \,dw \,du \,dv \bigg], \quad 0 \le t \le T \le T^*.
\end{align*}
\end{proposition}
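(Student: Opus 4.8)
The plan is to reduce the formula for $P(t,T)$ to the conditional Laplace transform of a Gaussian random variable and then read off the conditional mean and variance from \eqref{eq:condexpectationFBM} and Lemma~\ref{lem:condVarianceFBM}. From \eqref{eq:rFBM} and an $\omega$-wise Fubini in time one has $\int_t^T r_u\,du = \mu^*(t,T) + \int_t^T \sigma(u) Z_u\,du =: \mu^*(t,T) + G_{t,T}$, so by \eqref{fbm:bonds}
\[
P(t,T) = e^{-\mu^*(t,T)}\, E_Q\!\big[\,e^{-G_{t,T}}\,\big|\,\cF_t\,\big].
\]
Since the working filtration is the one generated by $Z$ and $Z$ is a centred Gaussian process under $Q$, the conditional law of the linear time-functional $G_{t,T}$ given $\cF_t$ will turn out to be Gaussian, say $G_{t,T}\mid\cF_t\sim N(m_{t,T},v_{t,T})$ with $v_{t,T}<\infty$ a.s.; granting this, the Gaussian Laplace transform gives $E_Q[e^{-G_{t,T}}\mid\cF_t]=\exp(-m_{t,T}+\tfrac12 v_{t,T})$, so it only remains to evaluate $m_{t,T}$ and $v_{t,T}$.

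For the conditional mean I would use the conditional Fubini theorem (licit since $u\mapsto E_Q|Z_u|$ is bounded on $[0,T^*]$ and $\sigma$ is integrable) together with the representation \eqref{eq:condexpectationFBM} of $E_Q[Z_u\mid\cF_t]$, obtaining
\[
m_{t,T}=\int_t^T \sigma(u)\,E_Q[Z_u\mid\cF_t]\,du=\sigma^*(t,T)\,Z_t+\int_t^T\!\!\int_0^t \sigma(u)\,\psi_u(t,w)\,dZ_w\,du .
\]
For the conditional variance, bilinearity and a further conditional Fubini step give $v_{t,T}=\int_t^T\!\int_t^T \sigma(u)\sigma(v)\,\mathrm{Cov}_Q(Z_u,Z_v\mid\cF_t)\,du\,dv$, and substituting the conditional covariance supplied by Lemma~\ref{lem:condVarianceFBM} turns this into $\int_t^T\!\int_t^T\!\int_t^{u\wedge v}\sigma(u)\sigma(v)k^\kappa(u,w)k^\kappa(v,w)\,dw\,du\,dv$. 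Inserting these expressions for $m_{t,T}$ and $v_{t,T}$ into the formula for $P(t,T)$ above yields the asserted closed form for the bond prices.

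The main obstacle is the rigorous verification that $G_{t,T}$ is conditionally Gaussian with exactly these parameters, given that $Z$ is not a semimartingale. I would approximate $G_{t,T}$ in $L^2(Q)$ by sums $S_n=\sum_i Z_{u^n_{i-1}}\int_{u^n_{i-1}}^{u^n_i}\sigma(u)\,du$ along partitions of $[t,T]$ with mesh tending to $0$; each $S_n$ is conditionally Gaussian, being a finite linear functional of the Gaussian process $(Z_s)_{s\le T}$ conditioned on the sub-$\sigma$-algebra $\sigma(Z_s:s\le t)$, with conditional mean and variance equal to the Riemann sums of the integrals defining $m_{t,T}$ and $v_{t,T}$. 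The bound $|S_n-G_{t,T}|\le 2\,\|Z\|_{\infty,[t,T]}\,\|\sigma\|_{L^1}$ together with path-continuity of $Z$ and $\|Z\|_{\infty,[t,T]}\in L^2(Q)$ (Fernique) gives $S_n\to G_{t,T}$ in $L^2(Q)$, hence convergence of the conditional means and of the conditional second moments; passing to the limit in the conditional characteristic functions then identifies $G_{t,T}\mid\cF_t\sim N(m_{t,T},v_{t,T})$ and legitimizes the conditional Laplace-transform identity, finiteness of $E_Q[e^{-G_{t,T}}\mid\cF_t]$ being automatic once the conditional variance is finite. The remaining points — the Fubini interchanges, the continuity in $u$ of $u\mapsto E_Q[Z_u\mid\cF_t]$, and the well-definedness of $P(t,T)$ itself — are routine from the explicit kernels $\psi_u(t,\cdot)$ and $k^\kappa(u,\cdot)$; no integration theory for $Z$ beyond the Wiener integral already appearing in \eqref{eq:condexpectationFBM} is needed.
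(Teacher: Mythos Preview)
Your approach is essentially the same as the paper's: both recognize that $\int_t^T r_u\,du$ is conditionally Gaussian given $\cF_t$, compute its conditional mean from \eqref{eq:condexpectationFBM} and its conditional variance from Lemma~\ref{lem:condVarianceFBM}, and then apply the Gaussian Laplace transform. The paper simply asserts the conditional Gaussianity in one line, whereas you supply the additional $L^2$-approximation argument via Riemann sums and Fernique's theorem; this extra care is welcome but does not constitute a different strategy.
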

\begin{proof}
First, note that $J(t,T) := \int_t^T r_u du$ is a Gaussian process with
\begin{align*}
E_Q[J(t,T)|\cF_t] & = \int_t^T \Big( \mu(u) + \sigma(u) E_Q[Z_u |\cF_t] \Big) du \\
&= \mu^*(t,T) + \int_t^T  \sigma(u) (Z_t + \int_0^t \psi_u(t,w) dZ_w)  du \\
&= \mu^*(t,T) + \sigma^*(t,T) Z_t +  \int_t^T  \sigma(u)  \int_0^t \psi_u(t,w) dZ_w\,   du ,
\end{align*}
using \eqref{eq:condexpectationFBM}. For the conditional variance of $J(t,T)$ note that
\begin{align}\label{temp:variance}
\Var[J(t,T) | \cF_t] &= E_Q \bigg[\bigg(\int_t^T \sigma(u) \Big(Z_u - E_Q[Z_u|\cF_t]\Big) du\bigg)^2 \big|\cF_t\bigg] .
\end{align}
By Lemma  \ref{lem:condVarianceFBM} we obtain
\begin{align*}
E_Q[Z_u Z_v |\cF_t] - E_Q[Z_u|\cF_t] E_Q[Z_v|\cF_t] =
\int_t^{ u \wedge v} k^\kappa(u,w) k^\kappa(v,w) dw.
\end{align*}
Inserting this into \eqref{temp:variance} gives that
\begin{align*}
\Var[J(t,T) | \cF_t] &= \int_t^T \int_t^T \sigma(u) \sigma(v) \int_t^{ u \wedge v} k^\kappa(u,w) k^\kappa(v,w) dw\, dv \, du.
\end{align*}
Finally, note that $J(t,T)$ is, conditional on $\cF_t$, a Gaussian random variable. Using its Laplace-Stieltjes transform  we conclude.
\end{proof}
By \eqref{fbm:bonds}, discounted bond prices are $Q$-martingales. Moreover, from  Lemma \ref{lem:2.4} it follows that  Assumption \eqref{numemm} holds. Hence $Q$ is an ELMM such that by Theorem \ref{th1} NAFL holds.

\begin{remark}\label{rem:FB}
In \cite{DoeberleinSchweizer2001} semimartingale models are covered while we drop this assumption
in our setup. Consider for example the case where
$ B_t = \exp ( Z_t )$ such that for $H \not = \frac{1}{2}$ the bank account is not a semimartingale.
Analogously to Proposition \ref{bondfractional}, bond prices can be computed and are
$Q$-martingales. Again, by Lemma \ref{lem:2.4} this case is included in our setup.
\end{remark} 

\subsection{An extension of the HJM setup}

In this section we extend the HJM setup by an additional component which is not absolutely continuous in terms of maturity, such that, in general, a short rate  does not exist in this framework. Using Theorem \ref{th1} we classify those models which satisfy NAFL by means of a generalised drift condition. The HJM-model is contained as special case.

Fix a finite time horizon $T^*$ and a measure $Q\sim P |_{T^*}$. There are two independent $Q$-Brownian motions $W$ and $V$ where $W$ is $d$-dimensional and $V$ is one-dimensional.  We consider the filtration $(\cF_t)_{t \ge 0}$ given by
$$ \cF_t = \sigma( W_s: 0 \le s \le t, V_u : u \ge 0) \vee N $$
which is the initial enlargement of the natural filtration of $W$ with the full path of $V$ and all $P$-nullsets $N$.
We assume that  bond prices are given by
\begin{align}\label{HJMsetup}
P(t,T) = \exp\Big( - \int_t^T f(t,u) d V(u) - \int_t^T g(t,u) du \Big), \quad 0 \le t \le T \le T^*
\end{align}
with families of It\^o-processes $f$ and $g$ to be specified below.
This includes the HJM-framework if $f\equiv 0$. In the following we characterize when the considered measure $Q$ is an (equivalent) local martingale measure in the sense used in Theorem \ref{th1}. All models which satisfy NAFL are given by an equivalent change to such a local martingale measure.

For given initial curves $T \mapsto f(0,T)$ and $T \mapsto g(0,T)$ we assume that $f$ and $g$ satisfy
\begin{align}
f(t,T) &= f(0,T) + \int_0^t a(s,T) ds + \int_0^t b(s,T) dW_s, \label{HJM:f}\\
g(t,T) &= g(0,T) + \int_0^t c(s,T) ds + \int_0^t d(s,T) dW_s, \label{HJM:g}
\end{align}
for $0 \le t \le T \le T^*$. Denote by $\mathcal{O}$ the optional sigma-algebra on $\Omega\times \R_+$. We assume the following regularity conditions:
\begin{align}
\tag{HJM1} &a,b,c \text{ and } d \text{ are }\mathcal{O} \otimes \mathcal{B}(\R_+)-\text{measurable}, \\
\tag{HJM2} &\int_0^{T^*} \int_0^{T^*} (|a(s,t)| + |c(s,t)| )ds \, dt < \infty, \\
\tag{HJM3} &\sup_{0 \le s \le t \le T^*}(\parallel b(s,t) \parallel + \parallel d(s,t) \parallel) < \infty.
\end{align}
Recall that $Q$ is an equivalent local martingale measure (ELMM) if
$$ \left(\frac{P(t,T)}{P(t,T^*)}\right)_{0\leq t\leq T}
\text{ is a local martingale for all } T\in[0,T^*]. $$
For any $T \le T^*$ we set
$$ A(t,T) := \int_T^{T^*} a(t,u) dV_u, \quad C(t,T) := \int_T^{T^*} c(t,u) du,$$
and similar for $b$ (as on the left hand side) and $d$ (as on the right hand side).

\begin{proposition}\label{prop:HJM}
Under (HJM1)-(HJM3), $Q$ is an  ELMM iff
\begin{align}\label{HJM:dc}
 0 &= A(t,T) + C(t,T) + \frac{1}{2}( \parallel B(t,T) \parallel^2 +  \parallel D(t,T) \parallel^2 ), \quad \text{for }t \le T \le T^*,
\end{align}
$dQ \otimes dt - a.s.$
\end{proposition}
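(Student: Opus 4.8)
The plan is to turn the defining property of an ELMM into an explicit It\^o computation for the ratio process. First I would note that, by \eqref{HJMsetup},
\[
\frac{P(t,T)}{P(t,T^*)}=\exp\Big(\int_T^{T^*}f(t,u)\,dV(u)+\int_T^{T^*}g(t,u)\,du\Big)=:\exp\big(Y^T_t\big),
\]
and the crucial observation is that the maturity interval $[T,T^*]$ over which we integrate does not depend on $t$; hence the $t$-dynamics of $Y^T$ are governed solely by the It\^o dynamics \eqref{HJM:f}--\eqref{HJM:g} of the integrands $f(t,u)$ and $g(t,u)$.

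Next I would compute the semimartingale decomposition of $t\mapsto Y^T_t$. Substituting \eqref{HJM:f} and \eqref{HJM:g} into the two integrals and applying a stochastic Fubini theorem to interchange $\int_T^{T^*}\,\cdot\,dV(u)$ and $\int_T^{T^*}\,\cdot\,du$ with the $ds$- and $dW_s$-integrals, one obtains
\[
Y^T_t=Y^T_0+\int_0^t\big(A(s,T)+C(s,T)\big)\,ds+\int_0^t\big(B(s,T)+D(s,T)\big)\cdot dW_s,
\]
with $A,B,C,D$ the processes defined just before the proposition; here $B(s,T)=\int_T^{T^*}b(s,u)\,dV(u)$ is again read as a Wiener integral in the enlarged filtration, and (HJM1)--(HJM3) provide exactly the integrability needed to justify the interchange and the convergence of the double integrals.

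Then I would apply It\^o's formula to $X_t:=\exp(Y^T_t)=P(t,T)/P(t,T^*)$, which is strictly positive; thus $X$ is a local $Q$-martingale if and only if the finite-variation part of
\[
dX_t=X_t\,dY^T_t+\tfrac12\,X_t\,d\langle Y^T\rangle_t
\]
vanishes $dQ\otimes dt$-a.e., i.e.\ $A(t,T)+C(t,T)+\tfrac12\,d\langle Y^T\rangle_t/dt=0$. Reading off $d\langle Y^T\rangle_t$ from the $dW$-part of the decomposition above produces the quadratic-variation contribution $\tfrac12\big(\parallel B(t,T)\parallel^2+\parallel D(t,T)\parallel^2\big)$ of \eqref{HJM:dc}, so that collecting all finite-variation terms yields \eqref{HJM:dc} for each fixed $T\le T^*$. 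Running the same computation backwards shows the converse, and a final minor step — restricting to a countable dense set of maturities, using the right-continuity of $T\mapsto P(t,T)$ from Assumption~\ref{nullsetass}, and passing to the limit — translates between ``\eqref{HJM:dc} holds for every $T$'' and the joint $dQ\otimes dt$-a.s.\ formulation.

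The step I expect to be the main obstacle is the stochastic Fubini argument for the $dV$-integral. Because the whole path of $V$ is $\mathcal{F}_0$-measurable, $\int_t^T f(t,u)\,dV(u)$ is not an adapted It\^o integral but a (conditional) Wiener integral in the enlarged filtration, so one has to verify carefully that it may be interchanged with the $ds$- and $dW_s$-integrals coming from \eqref{HJM:f}--\eqref{HJM:g} and that the resulting process really admits the claimed decomposition; the regularity conditions (HJM1)--(HJM3) together with the enlarged-filtration construction are precisely what make this legitimate. The subsequent It\^o and conditioning arguments are routine.
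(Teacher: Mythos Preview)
Your proof follows essentially the same route as the paper's: both write the ratio $P(t,T)/P(t,T^*)=\exp(Y^T_t)$, obtain the semimartingale decomposition of $Y^T$ by substituting \eqref{HJM:f}--\eqref{HJM:g} and interchanging the maturity integrals with the $ds$- and $dW_s$-integrals via a stochastic Fubini theorem, apply It\^o's formula, and set the resulting drift to zero. The paper is terser---it just invokes stochastic Fubini (citing Filipovi\'c) without dwelling on the enlarged-filtration subtlety for the $dV$-integral, and it omits the dense-maturity/right-continuity step you sketch at the end---but the argument is the same.
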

\begin{proof}
The formulation in terms of forward rates in \eqref{HJMsetup} directly gives that
$$ Z(t,T) :=  \frac{P(t,T)}{P(t,T^*)}
= \exp\Big(\int_T^{T^*} f(t,u) dV(u) + \int_T^{T^*} g(t,u) du \Big). $$
The dynamics of $f$, given in \eqref{HJM:f}, implies
\begin{align*}
\int_T^{T^*} f(t,u) dV(u) & = \int_T^{T^*}f(0,u) dV(u) \\
&+ \int_T^{T^*} \int_0^t a(s,u) ds \, dV(u) + \int_T^{T^*} \int_0^t b(s,u) dW_s \, dV(u) \\
&= \int_T^{T^*}f(0,u) dV(u) + \int_0^t A(s,T) ds + \int_0^t B(s,T) dW_s
\end{align*}
by the stochastic Fubini theorem (see, e.g., Theorem 6.2 in \cite{Filipovic2009}). We obtain a similar expression for the second integral. Hence, by the It\^o formula,
\begin{align*}
d Z(t,T) = Z(t,T) \Big( &A(t,T) dt +B(t,T) dW_t +  \frac{1}{2} \parallel B(t,T) \parallel ^2 dt  \\
 +&  C(t,T) dt +D(t,T) dW_t +  \frac{1}{2} \parallel D(t,T) \parallel ^2 dt \Big),
\end{align*}
for $0 \le t \le T \le T^*$.
These processes are  local martingales if and only if their drifts vanish. This is equivalent to \eqref{HJM:dc} and we conclude.
\end{proof}

\begin{remark}
The classical HJM-drift condition, i.e.\ the drift condition for the case  $f\equiv 0$, can be obtained as follows: if the limit of the  roll-overs $B(t) = \exp(\int_0^t g(s,s) ds)$ qualifies as num\'eraire, which is equivalent to the assumption that
$$ \frac{B(t) P(0,T^*)}{P(t,T^*)}, \quad 0 \le t \le T^* $$
is a true $Q$-martingale, one can change to the equivalent measure $\tilde Q$ where  $B$ is taken as num\'eraire. 
Considering the dynamics of $g$, as in \eqref{HJM:g}, under $\tilde Q$ then gives the well-known drift condition as in \cite{HJM}. 
\end{remark}

\begin{example}
Consider the simple case where $f(t,u) = \int_0^t a(s,u) du + W(t)$. We assume that $a(t,u)$ and $g(t,u)$ are deterministic functions which are bounded and continuous. Moreover, $g$ is differentiable in the first coordinate.
The initial term structure is flat, i.e.  $a(0,T)=g(0,T)=0$ for all $T \ge 0$. We have that $b(t,T)=1$, $c(t,T)=g^\prime(t,T)$ and $d(t,T)=0$.
The drift condition  \eqref{HJM:dc} in this setup reads
\begin{align} \label{HJMdrift}
0 &= \int_T^{T^*}a(t,u) dV(u)   + \int_T^{T^*} g'(t,u) du  +  \frac{1}{2} (V_{T^*} - V_T)^2.
\end{align}
We consider $(V_{T^*} - V_T)^2$ as pathwise stochastic integral and an application of  It\^o's formula
reversely in time gives
$$ (V_{T} - V_{T^*})^2 =  \int_{T^*}^T  2(V(u)- V(T^*)) \,  dV(u) +  \int_{T^*}^T  \, du. $$
We conclude that \eqref{HJM:dc} holds if and only if
$a(t,u) = (V(u) - V(T^*))$ and $g'(t,u)=1/2.$
\end{example}

\subsection{Exotic bank account processes}
\label{sec:exotic}

In this section we elaborate on a question raised in Section \ref{benchmark-example}, namely how much money can be lost by investing in the roll-over strategy. We construct an example by means of not uniformly integrable martingales in which the rollover reaches zero almost surely in finite time.

We consider a market with NAFL  and denote by $Q^*$ the measure in Theorem \ref{th1}.
Our starting point are bond prices of the form
\begin{equation} \label{bondprice:ex}
P(t,T)=E_{Q^*} \big[ \frac{N_t}{N_{T}}| \mathcal{F}_t \big]
\end{equation}
with a finite time horizon $T^*=2$ and the num\'eraire $N$, chosen as follows: let $\tau:[0,1)\to \R_{\ge 0}$ be an increasing, differentiable time transformation   with $\tau(0)=0$ and $\tau(t) \to \infty$ as $t\to 1$. The num\'eraire $N$ is given by
$$ N(t):= P(t,T^*) = \begin{cases} \exp(W_{\tau(t)}^2-\tau(t)^2) & 0 \le t <1 \\
1 & t \in [1,2].
\end{cases}
$$
with $Q^*$-Brownian motion $W$. Note that $N$ is c\`adl\`ag:  for any $\epsilon>0$
\begin{align*}
Q^*(N(t)\le \epsilon) &= Q^*( \tau(t) \xi^2 \le \log \epsilon + \tau(t)^2) \\
&= 2 \Phi(\sqrt{\tau(t) + \epsilon \tau(t)^{-1}}) -1 
\end{align*}
for a standard normal random variable $\xi$.  The last expression converges to $1$ as $\tau(t)\to \infty$ and existence of left limits of $N$ follows. However, $N$ is not uniformly integrable.
The filtration is given by $\cF_t := \sigma(W_{\tau(s)}: 0 \le s \le t),\ t \in [0,2]$, with the usual augmentation by null sets.

We  compute the bond prices with the following lemma.
\begin{lemma} \label{rem1}
For a standard normal random variable $\xi$, and $a < \half$ we have that
\begin{align*}
 E[\exp(a \xi^2 + b\xi)] &= e^{\frac{b^2}{2(1-2a)}} (1-2a)^{-\frac{1}{2}}.
 \end{align*}
 \end{lemma}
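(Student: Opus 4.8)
The statement is a standard Gaussian integral, so the plan is simply to compute it directly from the density of $\xi$. First I would write
\[
E[\exp(a\xi^2+b\xi)] = \frac{1}{\sqrt{2\pi}}\int_{\R} \exp\Big(a x^2 + bx - \tfrac{x^2}{2}\Big)\,dx,
\]
and observe that the integrand's exponent equals $-\tfrac{1-2a}{2}x^2 + bx$. Since $a<\tfrac12$ we have $1-2a>0$, which is exactly what guarantees that this integral converges (and conversely diverges for $a\ge\tfrac12$, explaining the hypothesis).

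Next I would complete the square: setting $\lambda := 1-2a>0$,
\[
-\frac{\lambda}{2}x^2 + bx = -\frac{\lambda}{2}\Big(x - \frac{b}{\lambda}\Big)^2 + \frac{b^2}{2\lambda}.
\]
Pulling the constant factor $e^{b^2/(2\lambda)}$ out of the integral leaves
\[
E[\exp(a\xi^2+b\xi)] = e^{\frac{b^2}{2\lambda}}\cdot\frac{1}{\sqrt{2\pi}}\int_{\R}\exp\Big(-\frac{\lambda}{2}\Big(x-\frac{b}{\lambda}\Big)^2\Big)\,dx.
\]
The remaining integral is the normalization of a Gaussian with variance $1/\lambda$, hence equals $\sqrt{2\pi/\lambda}$; it cancels the $1/\sqrt{2\pi}$ and produces $\lambda^{-1/2} = (1-2a)^{-1/2}$. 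Substituting $\lambda = 1-2a$ back gives the claimed formula.

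There is essentially no obstacle here: the only point requiring a word of care is the convergence condition $a<\tfrac12$, which is used to justify completing the square and evaluating the Gaussian integral; everything else is bookkeeping. (If one prefers, the same identity also follows from the known moment generating function of a non-central chi-square / the fact that $\xi^2\sim\Gamma(\tfrac12,\tfrac12)$, but the direct integral is shortest.)
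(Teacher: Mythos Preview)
Your proof is correct and follows essentially the same approach as the paper: write out the Gaussian integral, complete the square (the paper uses $s=(1-2a)^{-1}$ where you use $\lambda=1-2a$), and identify the remaining integral as the normalization constant of a Gaussian density. There is no substantive difference.
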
 \begin{proof} We start by observing that
\begin{align}\label{1}
 E[\exp(a \xi^2 + b\xi)] &= \int \frac{1}{\sqrt{2\pi}} e^{-\frac{x^2(1-2a)}{2} + bx } \, dx .
 \end{align}
Let $s:=(1-2a)^{-1}$. Then
\begin{align*}\eqref{1}
&= \int \frac{1}{\sqrt{2\pi}} e^{-\frac{x^2 -2sbx + s^2b^2 }{2s }+ \frac{sb^2}{2}}  \, dx \\
&= e^{\frac{sb^2}{2}} s^{\frac{1}{2}} \int \frac{1}{\sqrt{2 \pi s}} e^{- \frac{(x-sb)^2}{2s}}dx \\
&= e^{\frac{b^2}{2(1-2a)}} (1-2a)^{-\frac{1}{2}}. \qedhere
\end{align*}
\end{proof}

Bond prices now can be computed from \eqref{bondprice:ex}: Note that
\begin{align*}
 E_{Q^*}[ e^{-W_T^2+W_t^2} |\cF_t] &=  E_{Q^*}[e^{- (W_T-W_t)^2 - 2 W_t (W_T-W_t)}| W_t] \\
 &=E_{Q^*}[e^{- (T-t) \xi^2  - 2 W_t\sqrt{T-t} \,  \xi}| W_t] \\
 &=: \exp( W_t^2 f(t,T) - g(t,T))
\end{align*}
where $\xi$ is standard normal, independent of $W_t$; we obtain $f(t,T) = 2(T-t)\,(1+2(T-t))^{-1}$ and $g(t,T) =  \frac{1}{2} \log (1+2(T-t))$ using Lemma \ref{rem1}. Hence,
\begin{align*}
  P(t,T) = \exp( W_{\tau(t)}^2 f_\tau(t,T) - g_\tau(t,T) + \tau^2(T)-\tau^2(t) ),
\end{align*}
$0 \le t \le  T  < 1$, where we set $f_\tau(t,T):=f(\tau(t),\tau(T)))$ and similarly for $g_\tau$.
For $t \ge 1$ the term structure is flat, i.e.  $P(t,T) = 1$.

Now we turn to the limit of the roll-over account.  Fix $T<1$ and  consider $t_i^n := t_i = \tau^{-1}(iT/n)$. Then
\begin{align*}
	B^n_{t_n} = \exp( - \sum_{i=1}^n  f_\tau(t_{i-1},t_i) W_{\tau(t_{i-1})}^2  + \sum_{i=1}^n g_\tau(t_{i-1},t_i) - \tau^2(t_n)).
\end{align*}
We have that
\begin{align*}
	\exp( \sum_{i=1}^n g_\tau(t_{i-1},t_i)  )
	& =
	\exp( \half \sum_{i=1}^n \log (1+ 2( \tau(t_{i}) - \tau(t_{i-1}))))  \\
	& \to 
	e^{\tau(T)}
\end{align*}
by  Taylor expansion and continuity of $\tau$. Moreover,
\begin{align*}
	\sum_{i=1}^n  f_\tau(t_{i-1},t_i) W_{\tau(t_{i-1})}^2
	&=	
	2 \sum_{i=1}^n \frac{ W_{\tau(t_{i-1})}^2 }{1+2 (\tau(t_i)-\tau(t_{i-1}))} \, (\tau(t_i) - \tau(t_{i-1})) \\
	&\to
	2 \int_0^T W_{\tau(s)}^2 d \tau(s) = 2 \int_0^{\tau(T)} W_s^2 ds.
\end{align*}
Hence,
\begin{align*}
	B^n(T)
	&=
	\exp( - \sum_{i=1}^n W_{\tau(t_{i-1})}^2  f_\tau(t_{i-1},t_i) + \sum_{i=1}^n g_\tau(t_{i-1},t_i) - \tau^2(t_n)) \\
	& \to
	\exp( - 2 \int_0^{\tau(T)} W_s^2 ds + \tau(T) - \tau^2(T)).
\end{align*}
The discounted limit of the roll-over account turns out to be
\begin{align*}
	V(T) &= P(T,T^*)^{-1} B(T)  = \exp( -W_{\tau(T)}^2  - 2 \int_0^{\tau(T)} W_s^2 ds + \tau(T) ) \\
	&= Z(\tau(T)),
\end{align*}
letting
\begin{align}
	\label{Z}
	Z(T)  := \exp( T - 2 \int_0^T W_s^2 ds - W_T^2).
\end{align}
 We are interested in
$$ V(1) = \lim_{T \to 1} Z(\tau(T)) = \lim_{T \to \infty} Z(T). $$
The following lemma shows that $\lim_{T \to \infty} Z(T) = 0$, hence $B(1)=0$. It turns out that investing in the roll-over strategy leads to the total loss of invested money such that the classical risk-free investment strategy becomes highly risky in this example.

\begin{lemma}
Consider $Z$ as in $\eqref{Z}$. Then $Z$ converges to $0$ $Q^*$-almost surely.
\end{lemma}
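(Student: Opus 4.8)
The plan is to recognise $Z$ as a stochastic exponential. By It\^o's formula, $W_T^2 = 2\int_0^T W_s\,dW_s + T$, so that $M_T := -2\int_0^T W_s\,dW_s = T - W_T^2$ is a continuous $Q^*$-local martingale with $M_0=0$ and quadratic variation $\langle M\rangle_T = 4\int_0^T W_s^2\,ds$. Consequently
$$
M_T - \tfrac12\langle M\rangle_T = T - W_T^2 - 2\int_0^T W_s^2\,ds = \log Z(T),
$$
i.e.\ $Z = \mathcal{E}(M)$ is the stochastic exponential of $M$; in particular $Z$ is a non-negative local martingale, hence a $Q^*$-supermartingale. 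The statement then follows from the standard fact that the stochastic exponential of a continuous local martingale whose quadratic variation is a.s.\ infinite converges to $0$ almost surely, so the real task is to verify $\langle M\rangle_\infty=\infty$.

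The key step is therefore to show that $\langle M\rangle_\infty = 4\int_0^\infty W_s^2\,ds = +\infty$ $Q^*$-a.s. For this I would use Brownian scaling: for each $c>0$, $(W_{cu})_{u\ge0}$ has the law of $(\sqrt c\,\widetilde W_u)_{u\ge0}$, hence $\int_0^c W_s^2\,ds$ has the law of $c^2\int_0^1 W_u^2\,du$. Since $\int_0^1 W_u^2\,du>0$ a.s., for every $K>0$ one gets $Q^*(\int_0^c W_s^2\,ds \ge K) = Q^*(\int_0^1 W_u^2\,du \ge K c^{-2}) \to 1$ as $c\to\infty$; as $c\mapsto \int_0^c W_s^2\,ds$ is nondecreasing, the events $\{\int_0^c W_s^2\,ds\ge K\}$ increase to $\{\int_0^\infty W_s^2\,ds\ge K\}$, so $Q^*(\int_0^\infty W_s^2\,ds\ge K)=1$ for every $K$, and intersecting over $K\in\N$ yields $\int_0^\infty W_s^2\,ds=\infty$ a.s.

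Finally I would conclude via Dambis--Dubins--Schwarz: since $\langle M\rangle_\infty=\infty$ a.s., there is a Brownian motion $\gamma$ (with respect to the time-changed filtration, on the same probability space) with $M_t = \gamma_{\langle M\rangle_t}$ for all $t\ge0$, whence
$$
Z(t) = \exp\!\big(\gamma_{\langle M\rangle_t} - \tfrac12\langle M\rangle_t\big).
$$
As $t\to\infty$ we have $\langle M\rangle_t\uparrow\infty$, and by the strong law of large numbers for Brownian motion $\gamma_u/u\to0$ a.s., so $\gamma_u - \tfrac12 u\to-\infty$ a.s.; composing with $u=\langle M\rangle_t$ gives $Z(t)\to0$ $Q^*$-a.s. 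The only mildly delicate point is the a.s.\ infiniteness of the quadratic variation, which the scaling argument settles; alternatively one could bypass Dambis--Dubins--Schwarz altogether by estimating $\log Z(t)\le t - 2\int_0^t W_s^2\,ds$ and combining a small-ball bound of the form $Q^*(\int_0^1 W_u^2\,du\le\varepsilon)\le C e^{-c/\varepsilon}$ with Borel--Cantelli along $t=2^n$ to show $t^{-1}\int_0^t W_s^2\,ds\to\infty$ a.s.
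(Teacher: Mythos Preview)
Your argument is correct and takes a genuinely different route from the paper's. You identify $Z=\mathcal{E}(M)$ with $M_T=-2\int_0^T W_s\,dW_s$, show $\langle M\rangle_\infty=4\int_0^\infty W_s^2\,ds=\infty$ a.s.\ via Brownian scaling, and then conclude by Dambis--Dubins--Schwarz together with $\gamma_u/u\to0$. This is a clean, purely qualitative probabilistic argument. The paper instead bounds $Z_t\le X_t:=\exp(t-2\int_0^t W_s^2\,ds)$, computes the Laplace transform $E\big[e^{-u\int_0^T W_s^2\,ds}\big]=\big(\sinh(\pi\sqrt{u}\,T)/(\pi\sqrt{u}\,T)\big)^{-1/2}$ via L\'evy's diagonalisation of Brownian motion (Wiener's Fourier-sine expansion), and then shows by Fatou and Markov that $E[(X_\infty)^u]=0$ for small $u>0$, forcing $X_\infty=0$ a.s. Your approach is shorter and more conceptual, avoiding the explicit special-function computation; the paper's approach is heavier but yields quantitative information (the exact distribution of $\int_0^T W_s^2\,ds$) that could be of independent interest. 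Either way, the crucial point is the a.s.\ divergence of $\int_0^\infty W_s^2\,ds$, which you dispatch with the scaling argument; the paper extracts it implicitly from the asymptotics of the Laplace transform.
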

\begin{proof}
Note that $Z$ is a non-negative local martingale and hence by the supermartingale convergence theorem\footnote{See Theorem 1.3.15 in \cite{KaratzasShreve}.} the limit $Z_\infty$ exists and is in $L^1$.
Moreover, we have that
$$ Z_t \le X_t :=  \exp( t - 2 \int_0^t W_s^2 ds ), \qquad t \ge 0. $$

We compute the distribution of $X_t$ by P.\ L\'evy's diagonalization procedure. Fix $T>0$. Using N. Wiener's construction of Brownian motion we obtain
$$ W_t = \sum_{k \ge 1} \frac{\sin(k\pi t /T )}{k} \xi_k \sqrt{T}, \quad t \in [0,T] $$
with i.i.d.\ standard normal $\xi_1,\xi_2,\dots$ Then
\begin{align*}
 \int_0^T W_t^2 dt &= \int_0^T \Big[ \sum_{k \ge 1} \frac{\sin(k\pi t /T )}{k} \xi_k  \sqrt{T}\Big]^2 dt \\
 &= \sum_{k,j \ge 1} \frac{T \xi_k \xi_ j}{kj} \int_0^T \sin(k\pi t /T )  \sin(j\pi t /T ) dt\\
 &= \sum_{k \ge 1} \frac{T^2 \xi_k^2}{2k^2},
\end{align*}
by orthogonality of the trigonometric functions, i.e.
$$ \int_0^T \sin(k\pi t /T )  \sin(j\pi t /T )  dt = \ind_{\{k=j\}} \frac{T}{2}. $$
Hence, for $u \ge 0$, we obtain with Lemma \ref{rem1} that
\begin{align*}
E\bigg[ e^{- u  \int_0^T W_t^2 dt } \bigg] &= \prod_{k \ge 1} E[e^{- u \frac{T^2 }{2 k^2} \xi_k^2 }] \\
&= \bigg[ \prod_{k \ge 1} (1+\frac{u T^2}{k^2})\bigg]^{-1/2} \\
&= \Big[ \frac{\sinh(\pi \sqrt{u T^2} )}{\pi \sqrt{uT^2}} \Big]^{-1/2}.
\end{align*} Next, by Fatou's lemma,
\begin{align*}
E[(X_\infty)^u ] & \le {\lim}_{T \to \infty} e^{uT}  \Big[ \frac{\sinh(\pi \sqrt{2u T^2} )}{\pi \sqrt{2uT^2}} \Big]^{-1/2}.
\end{align*}
Note that, for $0 < u <  \pi^2 /2 $,
\begin{align*}
\big[ T^{-1} e^{2uT} ( e^{\pi \sqrt{2u} T} - e^{-\pi \sqrt{2u} T}  )\big]^{-1}
	&= \frac{T} {e^{T( \pi \sqrt{2u}-2u )}- e^{-T(\pi \sqrt{2u}+2u)} } \to 0
\end{align*}
as $T\to \infty$. This shows that for the non-negative random variable $X_\infty$ and $0<u <  \pi^2 /2$, $ E[(X_\infty)^u ]=0$. By the generalized Markov inequality, for $\epsilon >0$ and $0 < u <  \pi^2 /2 $,
$$ P(X_\infty \ge \epsilon) = P((X_\infty)^u \ge \epsilon^u) \le \epsilon^{-u} E[(X_\infty)^u] = 0$$
such that $X_\infty = 0$ almost surely.
\end{proof}

\bibliographystyle{plain}

\end{document}